\documentclass[12pt]{article}

\usepackage{amsthm,amsmath,amsfonts,amssymb}
\allowdisplaybreaks
\usepackage[authoryear]{natbib}
\usepackage[colorlinks,citecolor=blue,urlcolor=blue]{hyperref}
\usepackage{graphicx}

\usepackage[utf8]{inputenc} 
\usepackage[T1]{fontenc}    
\usepackage[subscriptcorrection]{newtxmath}
\usepackage[plain,noend]{algorithm2e}
\usepackage{setspace} 

\usepackage[dvipsnames]{xcolor} 
\usepackage{url, booktabs, nicefrac, caption, subcaption, cprotect, enumitem, bbm, tikz, adjustbox, wrapfig, xr, changepage, datetime, comment, physics}
\let\dv\undefined

\let\op\undefined
\let\vert\undefined

\let\pmat\undefined
\usetikzlibrary{arrows, automata, calc, positioning}
\usepackage[multiple]{footmisc}


\usepackage{float, mathtools, bm, thmtools, multirow, etoolbox, xpatch, anyfontsize, pdflscape}

\setlength{\abovedisplayskip}{5pt plus 2pt minus 4pt}
\setlength{\belowdisplayskip}{5pt plus 2pt minus 4pt}
\setlength{\abovedisplayshortskip}{0pt plus 2pt}
\setlength{\belowdisplayshortskip}{3pt plus 2pt minus 2pt}
\setlength{\jot}{1pt}

\appto{\normalsize}{%
  \setlength{\abovedisplayskip}{5pt plus 2pt minus 4pt}%
  \setlength{\belowdisplayskip}{5pt plus 2pt minus 4pt}%
  \setlength{\abovedisplayshortskip}{0pt plus 2pt}%
  \setlength{\belowdisplayshortskip}{3pt plus 2pt minus 2pt}%
  \setlength{\jot}{1pt}
}

\input{_commands}

\newcommand{\blind}{1}
\addtolength{\oddsidemargin}{-.5in}%
\addtolength{\evensidemargin}{-1in}%
\addtolength{\textwidth}{1in}%
\addtolength{\textheight}{1.7in}%
\addtolength{\topmargin}{-1in}%

\def\spacingset#1{\renewcommand{\baselinestretch}%
{#1}\small\normalsize} \spacingset{1}

\begin{document}

\if1\blind
{
  \title{\bf Placebo Discontinuity Design}
  \author{Rahul Singh\thanks{
    The authors thank Joshua Angrist for helpful discussions.}\hspace{.2cm}\\
    Harvard Society of Fellows and Department of Economics\\
   \and 
    Moses Stewart \\
    Harvard Department of Economics}

     \date{Original draft: July 2025. This draft: July 2026.}
  \maketitle
} \fi

\if0\blind
{
  \bigskip
  \bigskip
  \bigskip
  \begin{center}
    {\LARGE\bf Placebo Discontinuity Design}
\end{center}
  \medskip
} \fi

\bigskip
\begin{abstract}
\noindent
Standard regression discontinuity design (RDD) models rely on the continuity of expected potential outcomes at the cutoff. 
The standard continuity assumption can be violated by strategic manipulation of the running variable, which is realistic when the cutoff is widely known and when the treatment of interest is a social program or government benefit. 
In this work, we identify the treatment effect despite  such  a violation, by leveraging a placebo treatment and a placebo outcome. 
We introduce a local instrumental variable estimator. 
Our estimator decomposes into two terms: the standard RDD estimator of the target outcome's discontinuity, and a new adjustment term based on the placebo outcome's discontinuity. 
We show that our estimator is consistent, and we justify a robust bias-corrected inference procedure. 
Our method expands the applicability of RDD to settings with strategic behavior around the cutoff, which commonly arise in social science.

\end{abstract}

\noindent%
{\it Keywords:}  regression discontinuity design; placebo outcome; negative control; causal inference; bias-corrected inference
\vfill

\newpage
\spacingset{1.8} 

\section{Introduction}

Regression discontinuity design (RDD) strategies are widely used to estimate the effect of a treatment on an outcome of interest, when the probability of treatment assignment exhibits a discontinuous jump at a known threshold \citep{thistlethwaite1960regression}. Applications of RDD are broad and include studies in electoral politics \citep{hall_what_2015,spenkuch_political_2018}, education \citep{cook_birthdays_2016,goodman_can_2019}, housing markets \citep{kumar_restrictions_2018}, and market design \citep{abdulkadiroglu_regression_2017,kong_sequential_2021}, among others. 

The main identifying assumption in standard RDD is the continuity of expected potential outcomes at the cutoff \citep{hahn2001identification}. This assumption ensures that, in the absence of treatment, units just above and below the cutoff are comparable. As discussed in Remark~\ref{rmk:local_random}, this condition is closely related to the requirement that the conditional distribution of unobserved confounding given the running variable 
is continuous at the cutoff 
\citep{lee2010regression}. To empirically assess the plausibility of this assumption, researchers often conduct falsification tests using placebo variables.\footnote{The literature also considers predetermined covariates, of which the placebo treatment may be viewed as a special case.} For example, \citet{imbens_regression_2008} and \citet{cattaneo_practical_2024} recommend checking for discontinuities in placebo outcomes at the cutoff as a diagnostic for invalid RDD. However, when these falsification tests indicate a failure of the RDD continuity assumption, current methods offer no path forward for recovering the treatment effect using these falsification test results.

In this paper, we propose a new identification strategy that leverages a placebo treatment and a placebo outcome to recover the treatment effect even when the continuity of expected potential outcomes may fail. We introduce a local linear instrumental variable estimator that calculates the treatment effect as the difference between the right-limit of the observed outcome and an adjusted left-limit that accounts for unobserved confounding via placebo variables. The estimator decomposes into a standard RDD estimate for the target outcome, $\hat{\tau}_{\text{rdd}}^{y}$, minus an adjustment term $(\hat{\tau}_{\text{rdd}}^{w})^{\top}\hat{\gamma}_{-}$, where $\hat{\tau}_{\text{rdd}}^{w}$ is the discontinuity in the placebo outcome and $\hat{\gamma}_{-}$ is an appropriate weight.

Conceptually, we build on the insights and techniques of the placebo variable literature (also called the negative control literature) \citep{miao2018identifying,deaner2018nonparametric,tchetgen2020introduction}, to answer a salient question about RDD, which is an extremely popular method in social science. Whereas that literature uses placebo variables to adjust for unobserved confounders of random treatments, we use placebo variables to adjust for sorting and strategic manipulation with possibly nonrandom treatment rules. To the best of our knowledge, our identification result appears to be novel and motivates a new estimation problem. For this new estimation problem, we propose and analyze what appears to be a novel variation of the widely used local linear RDD estimator \citep{fan_variable_1992}, and we derive bias-corrected inference \citep{calonico_robust_2014,calonico_regression_2019}. Future work may develop bias-aware inference  \citep{armstrong2018optimal,imbens2019optimized,noack2024bias}. 

We contribute to a literature that studies estimation of treatment effects at a cutoff when the standard RDD continuity assumption fails. For a French regulation that binds on firms exceeding 50 employees, \cite{garicano_firm_2016} recover the effect of crossing the threshold by imposing a parametric model of firm production; we instead recover the RDD treatment effect nonparametrically. \cite{eckles2020noise} provide identification and inference results for the RDD treatment effect without the standard continuity assumption, in settings where the researcher \textit{knows} the conditional distribution of the running variable given the unobserved variables, $f(d \mid u, \eta_{y})$. We instead study settings in which this distribution is unknown, where using standard RDD strategies would require justifying that there is no unobserved confounding around the cutoff. Most closely related, \cite{gerard_2020_bounds} bound the treatment effect when the running variable is manipulated around the cutoff. We complement their partial-identification approach by providing conditions under which the effect is point identified.

A related literature augments standard RDD inference using auxiliary plots and variables while maintaining the continuity of potential outcomes, and so rules out the unobserved confounding we study. A ``difference-in-discontinuities'' strategy combines RDD estimates across populations to isolate a treatment effect when several factors besides the treatment change at the threshold \citep{grembi_fiscal_rule_aej, galindo2018fuzzy, butts2023geographic, picchetti2024difference, leventer2024correcting}; this decoupling of co-occurring changes is sometimes called confounding, but it differs from the unobserved confounding around the cutoff that we address. When distinct cutoffs apply to different groups, \cite{bertanha2020regression} and \cite{cattaneo2021extrapolating} extend RDD identification to trace a dose-response curve in a neighborhood of the cutoffs. A third strand combines additional variables with auxiliary assumptions to extrapolate the RDD effect away from the cutoff \citep{angrist2015wanna, dong2015identifying, deaner2025extrapolation}. Like these papers, we use auxiliary structure (here placebo variables) together with additional assumptions to augment RDD inference; unlike them, we attain point identification and nonparametric inference precisely when the continuity assumption fails at the cutoff.

The remainder of the paper is organized as follows. Section~\ref{sec:example} illustrates the main idea through a real-world application. Section~\ref{sec:double_proxy} presents our identification assumptions and main result. Section~\ref{sec:estimation} derives our estimator. Section~\ref{sec:inference} establishes consistency and asymptotic normality with bias-corrected inference. 
Section~\ref{sec:conclusion} concludes. Formal proofs are deferred to the appendix.

\section{Example: Maimonides rule}\label{sec:example}

To motivate our approach, consider an example adapted from \citet{angrist_maimonides_2019}. Suppose we are interested in estimating the effect of transitioning from a large class (39-40 pupils) to a small class (20-21 pupils), denoted by the binary treatment $A$. The outcome of interest is student achievement $Y$. A Maimonides-style rule determines class size: schools must split classes when total enrollment in a grade level $D$ exceeds a statutory threshold $d^{*} = 40$. For instance, a school with $D=42$ students in a grade level receives funding for two classes of 21 students ($A=1$), while a school with exactly $D=40$ students in a grade level keeps one class of 40 students $(A=0)$.

\begin{figure}[htbp]
\caption{\label{fig:potential_outcome} Valid and invalid regression discontinuity design}
\begin{subfigure}{.4\textwidth}
\begin{centering}
\begin{tikzpicture}[domain=0:6]
    \draw[->] (-0.2,0) -- (6.2,0) node[right] {$d$};
    \draw[->] (0,-0.2) -- (0,6.2) node[above] {$Y$};
    \draw[color=BlueViolet, domain = 0:3, dashed, thick] plot (\x,{sin(\x r)/2 + 4});
    \draw[color=BlueViolet, domain = 3:6, thick] plot (\x,{sin(\x r)/2 + 4}) node[above, xshift=-2em] {\scalebox{0.95}{$\E{Y(1, d, U, \eta_{y}) \mid D = d}$}};
    \draw[color=Black, domain = 0:3, thick] plot (\x,{sin(\x r)/2 + 2});
    \draw[color=Black, domain = 3:6, dashed, thick] plot (\x,{sin(\x r)/2 + 2}) node[above, xshift=-2em] {\scalebox{0.95}{$\E{Y(0, d, U, \eta_{y}) \mid D = d}$}};
    \draw[dashed] (3, 0) node[below] {$d^{*}$} (3,-0.2) -- (3,6.2);
\end{tikzpicture}
\end{centering}
\end{subfigure}
\hspace{4em}
\begin{subfigure}{.4\textwidth}
\begin{centering}
\begin{tikzpicture}[domain=0:6]
    \draw[->] (-0.2,0) -- (6.2,0) node[right] {$d$};
    \draw[->] (0,-0.2) -- (0,6.2) node[above] {$Y$};
    \draw[color=BlueViolet, domain = 0:3, dashed, thick] plot (\x,{sin(\x r)/2 + 3.5});
    \draw[color=BlueViolet, domain = 3:6, thick] plot (\x,{sin(\x r)/2 + 4}) node[above, xshift=-2em] {\scalebox{0.95}{$\E{Y(1, d, U, \eta_{y}) \mid D = d}$}};
    \draw[color=Black, domain = 0:3, thick] plot (\x,{sin(\x r)/2 + 2});
    \draw[color=Black, domain = 3:6, dashed, thick] plot (\x,{sin(\x r)/2 + 2.5}) node[above, xshift=-2em] {\scalebox{0.95}{$\E{Y(0, d, U, \eta_{y}) \mid D = d}$}};
    \draw[dashed] (3, 0) node[below] {$d^{*}$} (3,-0.2) -- (3,6.2);
\end{tikzpicture}
\end{centering}
\end{subfigure}
\par
\raggedright
{\footnotesize Notes: Figure~\ref{fig:potential_outcome} illustrates the standard continuity assumption on the expected potential outcomes for a valid RDD (on the left) and an invalid RDD (on the right). The $Y$-axis gives the expected test score $\E{Y(a, d, U, \eta_{y}) \mid D = d}$ for a student in a class of 20-21 students $(A = 1)$ or 39-40 students $(A = 0)$ given $D=d$ enrolled students in the school. The $x$-axis gives the number of students enrolled in the school. The dotted regions of the line indicate the unobserved, counterfactual expected outcomes.}
\end{figure}
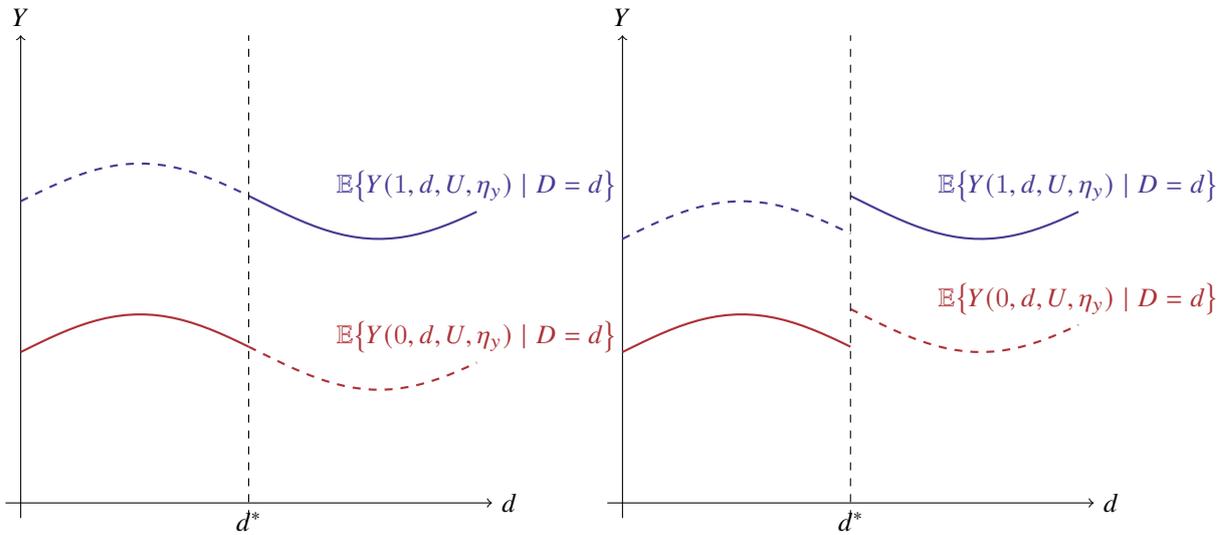

To apply a conventional RDD estimator in this setting, one must assume that the expected potential outcomes $\E{Y(1, d, U, \eta_{y}) \mid D = d}$ and $\E{Y(0, d, U, \eta_{y}) \mid D = d}$ are continuous at the cutoff $d^{*}$. Here, $Y(A, D, U, \eta_y)$ denotes the test score of a student in a class size of 20-21 students $(A = 1)$ or 39-40 students $(A = 0)$, given the total enrollment in the grade level  $(D)$. We model unobserved heterogeneity through two components---school administrator sophistication $U$, which may influence total enrollment near the cutoff, and student-level heterogeneity $\eta_y$, which is random. The left panel of Figure~\ref{fig:potential_outcome} depicts a valid RDD scenario under this assumption.

\begin{remark}[Continuity and unobserved confounding]\label{rmk:local_random}
    As noted by \citet{lee2010regression}, the standard RDD assumption, i.e. continuity of the expected potential outcomes at the cutoff, is implied by the following two conditions: for all $(u, \eta_y)$,
    (i) the conditional distributions $d\mapsto \pr{d \mid u, \eta_y}$ and $d \mapsto \pr{d}$ are continuous at $d^{*}$ ;
    and (ii) the potential outcome functions $d\mapsto Y(1, d, u, \eta_y)$ and $d \mapsto Y(0, d, u, \eta_y)$ are bounded and continuous at $d^*$.
    
    To see this, note that continuity of $\pr{d \mid u, \eta_y}$ and $\pr{d}$ implies continuity of $\pr{u, \eta_y \mid d}$ by Bayes’ rule.\footnote{This holds only at points where $\pr{d} > 0$.} Then, applying the dominated convergence theorem, we can exchange the limit and expectation, so that
    \begin{align*}
    \lim_{\epsilon \downarrow 0} \E{Y(1,d^{*} + \epsilon,U,\eta_{y}) \mid D = d^{*} + \epsilon} = \lim_{\epsilon \downarrow 0} \int Y(1, d^{*} + \epsilon, u, \eta_{y}) \dv \pr{u, \eta_{y} \mid d^{*} + \epsilon}\\
    = \int Y(1, d^{*}, U, \eta_{y}) \dv \pr{u, \eta_{y} \mid d^{*}}
    = \E{Y(1,d^{*},U,\eta_{y}) \mid D = d^{*}}.
    \end{align*}
    The same argument applies from the left. \hfill $\triangleleft$
\end{remark}

However, this assumption may fail if school administrators behave strategically. For instance, more sophisticated school leaders may recognize the budgetary and pedagogical advantages of exceeding the class size threshold, perhaps by enrolling just above the cutoff. These administrators may also tend to run more effective schools with higher-achieving students, thereby inducing a spurious upward jump in test scores around $d^*$ that is not due to class size per se \citep{angrist_maimonides_2019}. In such a case, the unobserved aptitude of school leaders $(U)$ confounds the relationship between enrollment $(D)$ and student achievement $(Y)$ near the cutoff. As a result, schools just above $d^*$ may perform better because they are led by more capable administrators, invalidating the continuity assumption that would allow researchers to evaluate the effect of class size. This confounded scenario is depicted in the right panel of Figure~\ref{fig:potential_outcome}.

Our identification strategy recovers the treatment effect in such confounded settings by positing that continuity holds after further conditioning on the unobserved school-level aptitude $U$. As shown in Figure~\ref{fig:confounded_outcomes}, we require that, conditional on both $D=d$ and $U=u$, the expected potential outcomes become continuous at the cutoff. We then exploit the statistical relationship between test scores in the prior year $W$, which serve as placebo outcomes, and the availability of transfer-eligible students based on birth-dates $Z$, which act as placebo treatments, to appropriately adjust for the unobserved confounder $U$.

\begin{figure}[htbp]
\begin{centering}
\caption{\label{fig:confounded_outcomes} Valid placebo discontinuity design}
\begin{subfigure}{.4\textwidth}
\begin{centering}
\begin{tikzpicture}[domain=0:6]
    \draw[->] (-0.2,0) -- (6.2,0) node[right] {$d$};
    \draw[->] (0,-0.2) -- (0,6.2) node[above] {$Y$};
    \draw[color=BlueViolet, domain = 0:3, dashed, thick] plot (\x,{sin(\x r)/2 + 4.5});
    \draw[color=BlueViolet, domain = 3:6, thick] plot (\x,{sin(\x r)/2 + 4.5}) node[above, xshift=-1.5em] {\scalebox{0.9}{$\E{Y(1, d, u_{1}, \eta_{y}) \mid D = d, U = u_{1}}$}};
    \draw[color=Black, domain = 0:3, thick] plot (\x,{sin(\x r)/2 + 3});
    \draw[color=Black, domain = 3:6, dashed, thick] plot (\x,{sin(\x r)/2 + 3}) node[above, xshift=-1.5em] {\scalebox{0.9}{$\E{Y(0, d, u_{1}, \eta_{y}) \mid D = d, U = u_{1}}$}};
    \draw[dashed] (3, 0) node[below] {$d^{*}$} (3,-0.2) -- (3,6.2);
\end{tikzpicture}
\end{centering}
\end{subfigure}
\hspace{4em}
\begin{subfigure}{.4\textwidth}
\begin{centering}
\begin{tikzpicture}[domain=0:6]
    \draw[->] (-0.2,0) -- (6.2,0) node[right] {$d$};
    \draw[->] (0,-0.2) -- (0,6.2) node[above] {$Y$};
    \draw[color=BlueViolet, domain = 0:3, dashed, thick] plot (\x,{sin(\x r)/2 + 3.5});
    \draw[color=BlueViolet, domain = 3:6, thick] plot (\x,{sin(\x r)/2 + 3.5}) node[above, xshift=-1.5em] {\scalebox{0.9}{$\E{Y(1, d, u_{2}, \eta_{y}) \mid D = d, U = u_{2}}$}};
    \draw[color=Black, domain = 0:3, thick] plot (\x,{sin(\x r)/2 + 2});
    \draw[color=Black, domain = 3:6, dashed, thick] plot (\x,{sin(\x r)/2 + 2}) node[above, xshift=-1.5em] {\scalebox{0.9}{$\E{Y(0, d, u_{2}, \eta_{y}) \mid D = d, U = u_{2}}$}};
    \draw[dashed] (3, 0) node[below] {$d^{*}$} (3,-0.2) -- (3,6.2);
\end{tikzpicture}
\end{centering}
\end{subfigure}
\par
\end{centering}
\raggedright
{\footnotesize Notes: Figure~\ref{fig:potential_outcome} illustrates the relaxed assumption we place on the expected potential outcomes. The $Y$-axis gives the test score $\E{Y(a, d, u, \eta_{y}) \mid D = d, U = u}$ for a student in a class of 20-21 students $(A = 1)$ or 39-40 students $(A = 0)$ enrolled in a school with $D=d$ pupils and administrator aptitude level $U=u$. The $x$-axis gives the number of students enrolled in the school. The dotted regions of the line indicate the unobserved, counterfactual outcome.}
\end{figure}

Intuitively, test scores in a previous year are unaffected by enrollment $(D)$ in the current year, making it a valid placebo outcome ($W)$. In this sense, previous test scores $(W)$ serve as a proxy for the school administrator aptitude ($U$).

Meanwhile, the availability of students able to transfer across grades should only affect student test scores $(Y)$ through its influence on total enrollment in the grade $(D)$, making it a valid placebo treatment $(Z)$. The availability of transfer students is like a local instrument, or a special type of pre-determined covariate, that can help to distill the variation due to school administrator aptitude.

\section{Identification: Placebo discontinuity}\label{sec:double_proxy}

In this section, we formalize the model, interpret the key assumptions, and present our main identification result: how to recover the treatment effect in an RDD setting with strategic behavior around the cutoff. Standard practice uses placebo variables to detect this issue; as our first  contribution, we use placebo variables to correct this issue.

\subsection{Notation}

Consider the nonseparable model
$$ Y=Y(A,D,U,\eta_y),$$
where $A$ is the treatment, $D$ is the running variable, $U$ is unobserved confounding, and $\eta_y$ is unobserved heterogeneity. Notice that we decompose the unobservable into the component $U$ for which we will have a model, and the component $\eta_y$ which will be exogenous conditional on $U$.

We will study the usual causal parameter in RDD, 
$$\tau_0= \E{Y(1,D,U,\eta_y)-Y(0,D,U,\eta_y)\mid  D = d^*},$$
which reflects the average causal effect of moving from untreated to treated at the cutoff $d^*$, conditional on the running variable.

In contrast with standard approaches, we allow for discontinuities in the conditional distribution of unobservables $f(u,\eta_y \mid d)$ at $d^*$, acknowledging that strategic behavior or sorting can lead to non-smooth densities near the cutoff. However, we require the density to remain strictly positive in a neighborhood of $d^*$ to maintain identifiability.

We employ placebo variables $(Z, W)$ to handle unobserved confounding. The full structural system is written as,
$$ Y = Y(A, D, Z, U, \eta_y), \quad W = W(A, D, Z, U, \eta_w), \quad A = A(D, Z, U, \eta_a). $$
We summarize the invariances of the structural system in the next section. A graphical summary of the causal structure is previewed in Figure~\ref{dag:nc}.
\medskip

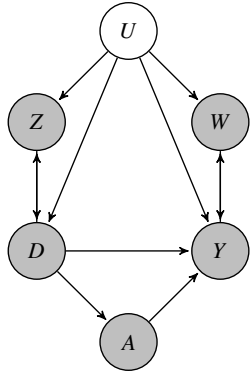
\begin{figure}[htbp]
\vspace{-10pt}
\begin{center}
\begin{adjustbox}{width=.2\textwidth}
\begin{tikzpicture}[->,>=stealth',shorten >=1pt,auto,node distance=2cm,
                    semithick]
  \tikzstyle{every state}=[draw=black,text=black]

  \node[state]         (u) [fill = white]                        {$U$};
  \node[state]         (w) [below right of=u, fill=lightgray]    {$W$};
  \node[state]         (z) [below left of=u, fill=lightgray]     {$Z$};
  \node[state]         (d) [below of=z, fill=lightgray]             {$D$};
  \node[state]         (y) [below of=w, fill=lightgray]             {$Y$};
  \node[state]         (a) [below right of=d, fill=lightgray]    {$A$};

  \path (u) edge           node {$ $} (z)
            edge           node {$ $} (w)
            edge           node {$ $} (d)
            edge           node {$ $} (y)
        (z) edge           node {$ $} (d)
        (w) edge           node {$ $} (y)
        (d) edge           node {$ $} (y)
            edge           node {$ $} (a)
            edge           node {$ $} (z)
        (a) edge           node {$ $} (y)
        (y) edge           node {$ $} (w);;
\end{tikzpicture}
\end{adjustbox}
\vspace{-5pt}
\caption{Placebo discontinuity design DAG}
\label{dag:nc}
\end{center}
\end{figure}

\subsection{Assumptions}\label{sec:assumptions}

The following assumptions identify $\tau_0$ even when the expected potential outcomes \hfill\break $d \mapsto \E{Y(0, D, U, \eta_y) \mid D = d}$ and $d\mapsto \E{Y(1, D, U, \eta_y) \mid D = d}$ are discontinuous at the cutoff $d=d^*$.

Throughout, we place assumptions in some neighborhood of the threshold $d^{*}$ defined as,
$$
 \mathcal{D}_-(\epsilon) = (d^* - \epsilon, d^*), \quad \mathcal{D}_+(\epsilon) = (d^*, d^* + \epsilon), \quad \mathcal{D}(\epsilon) = \mathcal{D}_-(\epsilon) \cup \mathcal{D}_+(\epsilon).
$$

\begin{assumption}[RDD]\label{assum:rdd}
    Assume that the following limits exist and are unequal:
    $$\lim_{\epsilon \downarrow 0} \left[ \E{A \mid D = d^{*} - \epsilon} \right],\quad \lim_{\epsilon \downarrow 0} \left[ \E{A \mid D = d^{*} + \epsilon} \right].$$
\end{assumption}

The existence of this limit, for the expectation of the treatment assignment $A$ given the running variable $D$, is  standard in the RDD identification literature \citep{hahn2001identification,lee2010regression}.

\begin{definition}[Sharp design]\label{def:sharp}
Under a ``sharp'' design, we assume that there is no heterogeneity in treatment assignment given the running variable ($\eta_a$ does not exist) and for all $\tilde{\epsilon} \in \mathcal{D}_{+}(\epsilon) - d^{*}$,
$$\E{A \mid D = d^{*} - \tilde{\epsilon}} = 0, \quad \E{A \mid D = d^{*} + \tilde{\epsilon}} = 1.$$
A ``fuzzy'' design is one that is not sharp, i.e. one that does not satisfy the equalities above.
\end{definition}

In a sharp design, $A$ is fully determined by $D$, implying no heterogeneity in treatment assignment (i.e., no $\eta_a$). In this case, Assumption~\ref{assum:rdd} is trivially satisfied.

We define $\mathcal{Z}(\epsilon)$ such that $\pr{Z \in \mathcal{Z}(\epsilon) \mid D \in \mathcal{D}(\epsilon)} = 1$.

\begin{assumption}[Placebo variables]\label{assum:proxy}
Assume there exists some $\epsilon>0$ such that for all $d\in \mathcal{D}(\epsilon)$ and for all $z\in \mathcal{Z}(\epsilon)$,
    \begin{enumerate}[label=\alph*]
        \item (Causal consistency): if $A=a$, $D=d$, and $Z=z$ then $Y=Y(a,d,z,U,\eta_y)$, and $W=W(a,d,z,U,\eta_w)$ almost surely. If $D=d$ and $Z=z$ then $A=A(d,z,U,\eta_a)$ almost surely.
        \item (Placebo selection on unobservables): $Z \indep \eta_{y}, \eta_{a} \mid  U, D$ and $\eta_{w} \indep D, Z \mid U$. 
        \item (Overlap): if $f(u)>0$ then $f(a,d,z\mid  u)>0$, where we assume the densities exist.
        \item (Exclusion): $Y(a,d,z,U,\eta_y)=Y(a,d,U,\eta_y)$, $W(a,d,z,U,\eta_w)=W(U,\eta_w)$, and $A(d, z, U, \eta_{a}) = A(d, \eta_{a})$ almost surely.
    \end{enumerate}
\end{assumption}

In summary, we have the following simplification local to the cutoff: 
$$Y = Y(A, D, U, \eta_y), \quad W = W(U, \eta_w), \quad A = A(D, \eta_a).$$

Causal consistency rules out network effects near the cutoff.

The main condition within placebo selection on observables is the implication that $Z\indep \eta_y, \eta_w |D,U$: conditional upon the running variable and unobserved confounder, the placebo treatment is as good as random for the target outcome and placebo outcome.\footnote{By weak union, $\eta_{w} \indep D, Z \mid U$ implies $\eta_{w} \indep Z \mid D,U$.} In the sharp design, $Z\indep \eta_a |D,U$ automatically holds; in the fuzzy design, it imposes that the placebo treatment is conditionally independent of the idiosyncratic aspect of treatment assignment. As shown in Appendix~\ref{sec:cont}, $Z\indep \eta_y, \eta_a,\eta_w |D,U$ together with an additional continuity condition suffice for our argument to work. In the main text, to simplify the exposition, we strengthen $\eta_{w} \indep Z \mid D, U$ to $\eta_{w} \indep D,Z \mid U$ and thereby avoid the additional continuity.

Overlap ensures there is no stratum of unobserved confounding for which the treatment, running variable, or placebo treatment are deterministic.
In the language of the RDD literature, we rule out complete control but allow imprecise control \citep{lee2010regression}.

Exclusion encodes the idea that the placebo treatment does not directly cause the outcome of interest; instead, it affects the outcome via the running variable. Exclusion also encodes the idea that the placebo outcome is not caused by the treatment, running variable, or placebo treatment. Finally, exclusion imposes that the treatment is pinned down by only the running variable (and possibly idiosyncratic noise in the fuzzy setting).

\begin{assumption}[Conditional independence]\label{assum:indep}
There exists some $\epsilon > 0$ such that for all $d \in \mathcal{D}(\epsilon)$, $\eta_{a} \indep \eta_{y}, U \mid D = d$.
\end{assumption}

Assumption~\ref{assum:indep} is vacuously satisfied by the sharp design. In the fuzzy design, it is analogous to \citet[Condition C3]{hahn2001identification}: it implies that conditional on $D$, the randomness in the treatment assignment mechanism is unrelated to the idiosyncratic aspect of potential outcomes and the unobserved confounding. This assumption in the fuzzy design may be relaxed by imposing homogeneity of effects; see Appendix~\ref{sec:homog}.

Assumptions \ref{assum:proxy} and \ref{assum:indep} can be summarized by the results below:

\begin{lemma}[Necessary independences]\label{lemma:indep}
    Assumption~\ref{assum:proxy} implies that for all $d \in \mathcal{D}(\epsilon)$ and $z \in \mathcal{Z}(\epsilon)$,
    \begin{enumerate}[label=\alph*]
            \item $Y \indep Z \mid  D, U$
            \item $W \indep D, Z \mid U$\footnote{The conditional independence restriction on our placebo outcome $W$ resembles the conditional independence assumption imposed on the outcome $Y$ in \cite{angrist2015wanna}. Here, the unobserved confounder $U$ in Lemma~\ref{lemma:indep}b plays the role of the covariates $X$, and allows us to attribute changes in the conditional distribution $f(w \mid d)$ at the cutoff to the unobserved confounder $U$.}
    \end{enumerate}
    Additionally under Assumption~\ref{assum:indep} it follows that,
        \begin{enumerate}[label=\alph*]
            \setcounter{enumi}{2}
            \item $A \indep U \mid D$
    \end{enumerate}
\end{lemma}

\begin{example}[Maimonides rule]\label{ex:maimonides}
    Recall the Maimonides rule example from Section~\ref{sec:example}, where class size $A$ is determined by grade-level enrollment $D$, and we study its effect on student achievement $Y$ in grade $t$. Administrator ability $U$ is unobserved, and strategic manipulation around the cutoff induces a jump in $f(d \mid u, \eta_y)$ (and hence $f(u, \eta_y \mid d)$): sophisticated administrators who are aware of the threshold $d^{*}$ may inflate enrollment to open an additional class. Let $Z$ denote the availability of students to transfer across grade levels, and $W$ denote achievement in grade $t-1$.

    $Y \indep Z \mid D, U$ requires that student transfer availability is independent of test scores conditional on enrollment in grade $t$ and administrator ability. This is plausible because transfer availability is determined by demographic factors which are unrelated to how students perform on exams under conditioning. \cite{angrist_maimonides_2019} use $Z$ as an instrument in their IV specification under the stronger condition $Y \indep Z \mid D$.

    $W \indep D, Z \mid U$ requires that prior-year test scores in grade $t-1$ are independent of current enrollment and transfer availability in grade $t$, conditional on administrator ability. A straightforward timing argument supports this: scores from earlier grades carry no information about current enrollment or transfer availability.

    $A \indep U \mid D$ requires that administrator heterogeneity can only affect class size through enrollment. This holds because the Maimonides rule is a deterministic function of $D$: once enrollment is fixed, the number of classes is mechanically determined by the rule, leaving no room for administrator discretion.

    Finally, the overlap condition in Assumption~\ref{assum:proxy}c requires that school officials cannot perfectly control enrollment $D$. This is credible because enrollment depends on many demographic factors outside the administrator's control, even if some manipulation occurs at the margin. This accommodates the jump in $d \mapsto f(d \mid u, \eta_y)$ that motivates our approach. \hfill$\triangleleft$
\end{example}

\begin{example}[Birth-weight classification]\label{ex:almond}
Consider the empirical setting of \cite{almond2005birth}, who study the effect of being classified as ``critically underweight'' on infant mortality, where the classification is triggered when an infant's reported birth-weight falls below the $1500$ gram threshold. The running variable $D$ is the infant's reported birth-weight; the treatment $A$ is the indicator for being classified as critically underweight; the outcome $Y$ is infant mortality; and the unobserved confounder $U$ is healthcare quality. The concern motivating the design is that $U$ confounds the running variable: better doctors may deliberately report birth-weights just below $1500$ grams so that their patients receive the additional care reserved for critically underweight infants. This strategic manipulation around the cutoff induces a jump in $f(d \mid u, \eta_{y})$, invalidating the standard continuity argument. To recover the treatment effect, let $Z$ be an indicator for whether the infant was born at night (when hospitals are understaffed), and $W$ be the mother's education level.

$Y \indep Z \mid D, U$, requires that the time of day at which the infant is born is independent of mortality once we condition on healthcare quality and reported birth-weight. This is plausible: holding healthcare quality fixed, the time of birth should have no remaining effect on whether the infant survives.

$W \indep D, Z \mid U$, requires that the mother's education level is independent of both the reported birth-weight and the time of birth given healthcare quality. Holding the quality of care fixed, there is no plausible channel through which the mother's education can influence the reported birth-weight or the timing of delivery, so this restriction likely holds.

$A \indep U \mid D$, requires that healthcare quality affects the weight classification only through the reported birth-weight. This holds by construction: the classification $A$ is a deterministic function of $D$, so once the reported birth-weight is fixed, whether the infant is labeled critically underweight is fully determined by the threshold rule, leaving no residual dependence on $U$.

Finally, the overlap condition in Assumption~\ref{assum:proxy}c requires that doctors cannot perfectly control the reported birth-weight $D$. This is reasonable: an infant weighing well above $1500$ grams cannot credibly be reported as weighing below the threshold. \hfill$\triangleleft$
\end{example}

\begin{assumption}[Continuity]\label{assum:stable}
For all $d \in D(\epsilon)$, assume that $d \mapsto \E{Y(1,D,U,\eta_y)\mid D=d, U=u}$ and $d \mapsto \E{Y(0,D,U,\eta_y) \mid D=d, U=u}$ are continuous for all $u$.
\end{assumption}

Assumption \ref{assum:stable} weakens the standard continuity assumption of the RDD literature: continuity of potential outcomes conditional upon the running variable $D=d$ near the cutoff $d^*$. Instead, it requires continuity of potential outcomes conditional upon both $D=d$ and $U=u$. 

While the density of student enrollment 
may jump at $d^{*}$, we anticipate that potential student achievement is continuous when fixing the administrator's aptitude $U=u$.

\begin{lemma}[Limit]\label{lemma:limit}
    Under a fuzzy design and Assumptions \ref{assum:rdd}, \ref{assum:proxy}, \ref{assum:indep}, and \ref{assum:stable},
    $$
    \tau_{0} = \frac{\E{ \lim_{\epsilon \downarrow 0}\left[\E{Y \mid D = d^{*} + \epsilon, U} - \E{Y \mid D = d^{*} - \epsilon, U}\right]  \mid D = d^{*}}}{\lim_{\epsilon \downarrow 0}\left[\E{A \mid D = d^{*} + \epsilon} -  \E{A \mid D = d^{*} - \epsilon}\right]}.
    $$
    
    In a sharp design, and under Assumptions \ref{assum:rdd}, \ref{assum:proxy}, and \ref{assum:stable},
    $$
    \tau_{0} = \E{ \lim_{\epsilon \downarrow 0}\left[\E{Y \mid D = d^{*} + \epsilon, U} - \E{Y \mid D = d^{*} - \epsilon, U}\right]  \mid D = d^{*}}.
    $$
\end{lemma}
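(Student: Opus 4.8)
The plan is to reduce the whole statement to an analysis of $\E{Y \mid D = d, U}$ on a punctured neighborhood of $d^*$, take the one-sided limits at $d^*$ for the realized value of $U$, and then average over $U$ given $D = d^*$ by the tower property. First I would invoke the causal-consistency and exclusion parts of Assumption~\ref{assum:proxy} to pass to the local structural simplification $Y = Y(A,D,U,\eta_y)$ and $A = A(D,\eta_a)$ (so $A$ depends on $Z$ neither directly nor through the outcome), noting that the conditional expectations below are well defined because the density of $D$ is strictly positive near $d^*$ and by Overlap (Assumption~\ref{assum:proxy}c).

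For a sharp design the argument is short. Fix a value $u$ of $U$. For $d \in \mathcal{D}_+(\epsilon)$, Definition~\ref{def:sharp} forces $A = 1$ and causal consistency gives $Y = Y(1,d,U,\eta_y)$ almost surely on $\{D=d\}$, so $\E{Y \mid D=d, U=u}$ equals $\mu_1(d,u) := \E{Y(1,D,U,\eta_y)\mid D=d,U=u}$, which is precisely the map whose continuity at $d^*$ Assumption~\ref{assum:stable} postulates; symmetrically $\E{Y\mid D=d,U=u} = \mu_0(d,u)$ for $d \in \mathcal{D}_-(\epsilon)$. Hence the inner bracket converges to $\mu_1(d^*,u) - \mu_0(d^*,u)$ as $\epsilon \downarrow 0$. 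Taking the expectation given $D=d^*$, using the tower property and $Y(a,D,U,\eta_y) = Y(a,d^*,U,\eta_y)$ on $\{D=d^*\}$, gives $\E{\mu_1(d^*,U) - \mu_0(d^*,U) \mid D=d^*} = \E{Y(1,D,U,\eta_y) - Y(0,D,U,\eta_y)\mid D=d^*} = \tau_0$, which is the claim.

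For a fuzzy design the only extra ingredient is the mixture identity $\E{Y\mid D=d,U=u} = \pi(d)\mu_1(d,u) + (1-\pi(d))\mu_0(d,u)$, where $\pi(d) := \E{A\mid D=d}$. I would obtain it in two moves: (i) using $A = A(D,\eta_a)$ together with Assumption~\ref{assum:indep}a to get $A \indep Y(a,d,U,\eta_y) \mid D, U$, whence $\E{Y\mid D=d,U=u} = \E{A\mid D=d,U=u}\mu_1(d,u) + (1-\E{A\mid D=d,U=u})\mu_0(d,u)$; (ii) Lemma~\ref{lemma:indep}c (a consequence of Assumption~\ref{assum:indep}b) gives $A \indep U \mid D$, so $\E{A\mid D=d,U=u} = \pi(d)$. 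Now send $\epsilon \downarrow 0$: $\mu_a(d^*\pm\epsilon,u) \to \mu_a(d^*,u)$ by Assumption~\ref{assum:stable}, and $\pi(d^*\pm\epsilon) \to \pi_\pm := \lim_{\epsilon\downarrow 0}\E{A\mid D=d^*\pm\epsilon}$ by Assumption~\ref{assum:rdd}, so the inner bracket converges to $(\pi_+ - \pi_-)(\mu_1(d^*,u) - \mu_0(d^*,u))$ for each $u$ (no interchange of limit and expectation is needed, since the statement already places the limit inside). Averaging over $U\mid D=d^*$ exactly as in the sharp case gives expectation $(\pi_+ - \pi_-)\tau_0$, and dividing by $\pi_+ - \pi_- = \lim_{\epsilon\downarrow 0}[\E{A\mid D=d^*+\epsilon} - \E{A\mid D=d^*-\epsilon}]$, which is nonzero by Assumption~\ref{assum:rdd}, yields the stated formula.

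The hard part is step (i) of the fuzzy decomposition: legitimately splitting $\E{Y\mid D=d,U=u}$ into a propensity-weighted mixture of the conditional potential-outcome means. This is where Assumption~\ref{assum:indep}a does the work, and its two alternative forms must be handled separately — the version $\eta_a \indep \eta_y \mid D, U$ yields $A \indep Y(a,d,U,\eta_y)\mid D,U$ directly, while the version $\eta_a \indep \eta_y \mid D$ requires an additional step before the same conclusion can be drawn. Everything else is the tower property plus matching the limiting objects to the primitive continuity in Assumption~\ref{assum:stable}; the sharp case is the degenerate instance $\pi_- = 0$, $\pi_+ = 1$ with Assumption~\ref{assum:indep} vacuous. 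I would also record the minor well-posedness checks (positive density of $D$ near $d^*$, Overlap, integrability of the pointwise limit) so that all the conditional expectations and the outer expectation are meaningful.
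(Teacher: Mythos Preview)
Your proposal is correct and follows essentially the same route as the paper: decompose $\E{Y\mid D=d,U}$ into a propensity-weighted combination of the conditional potential-outcome means via $Y=Y(0,D,U,\eta_y)+A(D,\eta_a)\{Y(1,D,U,\eta_y)-Y(0,D,U,\eta_y)\}$, factor using Assumption~\ref{assum:indep}a (to split $A$ from the potential outcomes given $D,U$) and Lemma~\ref{lemma:indep}c (to drop $U$ from the propensity), take one-sided limits via Assumptions~\ref{assum:rdd} and~\ref{assum:stable}, and then average over $U\mid D=d^*$. Your explicit flag that the two alternative forms of Assumption~\ref{assum:indep}a need separate handling is apt; the paper passes from $\eta_a\indep\eta_y\mid D$ to $\eta_a\indep\eta_y\mid D,U$ by combining with Assumption~\ref{assum:indep}b and invoking weak union.
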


Lemma \ref{lemma:limit} solves for the treatment effect $\tau_{0}$ in an interpretable way: for each value of the unobserved confounded $U=u$, conduct RDD conditional upon $U=u$, then average over $U$. In our running example, this is equivalent to averaging the treatment effect of a class size $(A)$ on test scores in year-$t$ $(Y)$ for schools with enrollment levels at $d^{*}$ over different administrator-ability levels $U$. 

Clearly, this calculation is infeasible, since $U$ is unobserved. We need additional assumptions to identify $\tau_{0}$ in terms of observed variables.

\subsection{Main identification result}

Section~\ref{sec:assumptions} established that the causal estimand $\tau_0$ can be written in terms of conditional expectations over unobserved confounding $(U)$. The key challenge now is to express $\tau_0$ in terms of observable variables. To do this, we adapt the ``confounding bridge'' approach \citep{tchetgen2020introduction}, which uses auxiliary variables to solve an integral equation that recovers the influence of $U$ on $Y$.

\begin{assumption}[Confounding bridge]\label{assum:bridge}
    For all $d\in \mathcal{D}_{+}(\epsilon)$, there exists a bounded solution to the following operator equation almost surely,
    \begin{equation*}\begin{aligned}
    \E{Y \mid  D = d , U} &= \int h_{+}(d - d^{*},w)\dv \pr{w\mid D=d,  U}.\\
    \end{aligned}\end{equation*}
    Similarly, for all $d \in \mathcal{D}_{-}(\epsilon)$ we have,
    \begin{equation*}\begin{aligned}
    \E{Y \mid  D = d , U} &= \int h_{-}(d - d^{*},w) \dv \pr{w\mid D=d, U}.\\
    \end{aligned}\end{equation*}
\end{assumption}

The confounding bridge assumption states that, for fixed values of the running variable $(D)$ and unobserved confounder $(U)$, the conditional expectation of the outcome $(Y)$ can be expressed as a function of a proxy variable $(W)$ that is associated with $(U)$. The functions $h_+$ and $h_-$ act as weighting kernels that summarize how $W$ captures the influence of $U$ on $Y$ on either side of the threshold.

\setcounter{example}{0}
\begin{example}[Maimonides rule]
    In the Maimonides rule example, for schools with the same number of enrolled students $(D = d)$, there exists a function $h_+(d - d^*, \cdot)$ such that we can recover expected student achievement using a reweighting of the distribution of test scores in year $t - 1$ $(W)$. The function $h_+$ does not depend on $U$ directly, but its integral with respect to the distribution of $W$ conditional on $U$ reconstructs the confounded conditional expectation of $Y$. This formulation allows us to bypass the unobserved $U$ by using an observed variable $W$ that is a proxy for $U$. \hfill$\triangleleft$
\end{example}

\begin{example}[Birth-weight classification]
    In the birth-weight example, Assumption~\ref{assum:bridge} posits the existence of a function $h_{+}$ that links the observed placebo outcome $W$ (the mother's education) to the unobserved confounder $U$ (healthcare quality). Concretely, among infants with the same reported birth-weight, we require that the confounded conditional expectation of mortality, $\E{Y \mid D = d, U}$, can be expressed as an average of $h_{+}(d - d^{*}, W)$ over the distribution of the mother's education given $U$. Crucially, $h_{+}$ does not depend on healthcare quality $U$ directly; it is a function only of the observed quantities $d - d^{*}$ and $W$. Yet integrating it against $\pr{w \mid D = d, U}$ reproduces the $U$-confounded conditional expectation. This allows us to recover the quantities of interest through the observed proxy $W$. \hfill$\triangleleft$
\end{example}

First we show that $h_0$ is a solution to an integral equation in terms of observed variables, though it may not be unique.

\begin{lemma}[Factuals I]\label{lemma:factuals}
    Lemma~\ref{lemma:indep} and Assumption~\ref{assum:bridge} imply that any $h_{+}$ that exists also solves, for all $d\in\mathcal{D}_{+}(\epsilon)$ and all $z\in\mathcal{Z}(\epsilon)$, 
    $$
    \E{Y\mid  D=d,Z=z}=\int h_{+}(d - d^{*},w)\dv \pr{w\mid  D=d,Z=z}.
    $$
    Analogously, for all $d \in \mathcal{D}_{-}(\epsilon)$,
    $$
    \E{Y\mid  D=d,Z=z}=\int h_{-}(d - d^{*},w)\dv \pr{w\mid  D=d,Z=z}.
    $$
\end{lemma}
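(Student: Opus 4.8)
The plan is to start from the confounding bridge equation of Assumption~\ref{assum:bridge}, which holds conditional on $(D,U)$, and to integrate out $U$ against the conditional law of $U$ given $(D,Z)$, using the conditional independences collected in Lemma~\ref{lemma:indep} to rewrite both sides purely in terms of $(D,Z)$. Throughout, ``$h_0$'' denotes $h_+$ when $d\in\mathcal{D}_+(\epsilon)$ and $h_-$ when $d\in\mathcal{D}_-(\epsilon)$; the two cases are handled by the identical argument, so I fix one side.

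First I would fix $d$ in the relevant one-sided neighborhood and $z\in\mathcal{Z}(\epsilon)$, and apply the law of iterated expectations, conditioning on $U$:
$$\E{Y\mid D=d, Z=z} = \int \E{Y\mid D=d, Z=z, U=u}\dv \pr{u\mid D=d, Z=z}.$$
By Lemma~\ref{lemma:indep}a we have $Y\indep Z\mid D,U$, so $\E{Y\mid D=d, Z=z, U=u}=\E{Y\mid D=d, U=u}$, and the right-hand side reduces to $\int \E{Y\mid D=d, U=u}\dv \pr{u\mid D=d, Z=z}$. Next, substitute the bridge representation $\E{Y\mid D=d, U=u}=\int h_0(d-d^*,w)\dv\pr{w\mid D=d,U=u}$ from Assumption~\ref{assum:bridge}. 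Because $h_0$ is bounded and the integrating measures are probability measures, Fubini--Tonelli justifies interchanging the order of integration, yielding $\int h_0(d-d^*,w)\dv\mu(w)$ where $\mu(w)=\int \pr{w\mid D=d,U=u}\dv\pr{u\mid D=d,Z=z}$.

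It remains to identify $\mu$ with $\pr{\cdot\mid D=d, Z=z}$. By the law of total probability, $\pr{w\mid D=d, Z=z}=\int \pr{w\mid D=d, Z=z, U=u}\dv\pr{u\mid D=d, Z=z}$, and by Lemma~\ref{lemma:indep}b (which, via weak union, gives $W\indep Z\mid D,U$) we have $\pr{w\mid D=d, Z=z, U=u}=\pr{w\mid D=d, U=u}$. Hence $\mu=\pr{\cdot\mid D=d, Z=z}$, and the claimed identity follows. I expect the only real subtlety to be measure-theoretic bookkeeping: phrasing everything in terms of regular conditional distributions so the iterated-integral steps are legitimate, checking that the exceptional null sets on which the bridge equation or the conditional-independence statements might fail can be controlled uniformly over the relevant neighborhood of $(d,z)$, and invoking boundedness of $h_0$ to license the interchange of integrals. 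The conditional-independence algebra itself is routine given Lemma~\ref{lemma:indep}.
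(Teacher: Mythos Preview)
Your proposal is correct and follows essentially the same route as the paper: the paper packages your two conditional-independence marginalizations (dropping $Z$ from $\pr{y\mid d,z,u}$ and from $\pr{w\mid d,z,u}$) into a separate ``towards factuals'' lemma, but the substance of the argument---iterate expectations over $U$, apply $Y\indep Z\mid D,U$, substitute the bridge, then collapse the $U$-mixture of $\pr{w\mid d,u}$ back to $\pr{w\mid d,z}$ via $W\indep Z\mid D,U$---is identical. The paper is silent on the Fubini step you flag, so your explicit justification via boundedness of $h_0$ is if anything a small improvement in rigor.
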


Lemma~\ref{lemma:factuals} translates the bridge equation from a conditional expectation over $U$ to one over observed variables $Z$ and $W$. It shows that the same $h_0$ function also solves a second integral equation where both sides of the equation are estimable from the data.

\begin{assumption}[Completeness]\label{assum:complete}
For all $d\in\mathcal{D}(\epsilon)$ and all $z\in\mathcal{Z}(\epsilon)$,
 $\mathbb{E}\{f(U)\mid  D=d,Z=z\}=0$ if and only if $f(U)=0$ almost surely.
\end{assumption}

Assumption~\ref{assum:complete} is necessary to guarantee a unique solution to the equation in Lemma~\ref{lemma:factuals} (factuals).  Without this assumption, there could be multiple functions $h$ that satisfy the same moment condition, making point identification of $\tau_0$ more challenging. Future work may build on \citet{bennett2022inference} to give  weaker conditions under which $\tau_{0}$ can still be identified without assuming completeness. 

\begin{example}[Partially linear DGP]\label{ex:partially_linear}
Assumptions~\ref{assum:bridge} and~\ref{assum:complete} are structural requirements on the informativeness of the placebo variables, and they are generally untestable. To aid interpretation, we introduce a partially linear DGP in which both conditions take on a transparent form. Consider
\begin{align*}
 Y &= f(D - d^{*}) + \beta_{u} U + \eta_{y}, \qquad \E{\eta_{y} \mid D, Z, U} = 0\\
 W &= \alpha_{0} + \alpha_{u} U + \eta_{w}, \qquad\qquad\quad \E{\eta_{w} \mid D, Z, U} = 0,
\end{align*}
where $f \colon \mathcal{D}(\epsilon) \mapsto \mathbb{R}$ is continuous at all $d \in \mathcal{D}(\epsilon)$ such that $d \neq d^{*}$, $A = \1{D \geq d^{*}}$, and we assume that $W$ is bounded for all $d \in \mathcal{D}(\epsilon)$. 

Under this DGP, computing $\E{Y \mid D, U}$ and $\E{W \mid D, U}$ and eliminating $U$ shows that Assumption~\ref{assum:bridge} (confounding bridge) reduces to the condition $\alpha_u \neq 0$, i.e., the placebo outcome $W$ must be genuinely informative about the confounder $U$.\footnote{See Lemma~\ref{lemma:pl_sufficiency} for a formal justification.} Although this condition is untestable since $U$ is unobserved, any researcher who uses a placebo outcome to diagnose continuity violations must implicitly believe that $W$ is related to $U$ in this way; our framework makes that belief precise.

For Assumption~\ref{assum:complete} (completeness), suppose additionally that $Z$ and $U$ are both discrete. \cite{newey2003instrumental} show that a simple sufficient condition is that $Z$ takes on at least as many distinct values as $U$, together with nondegeneracy of the relevant conditional distributions. More generally, completeness is a rank condition on the conditional distribution of $Z$ given $U$, and a range of sufficient conditions and supporting examples are available in the econometrics literature \citep{andrews2017examples}. \hfill$\triangleleft$
\end{example}

\begin{lemma}[Factuals II]\label{lemma:factuals2}
    Lemma~\ref{lemma:indep} and Assumption~\ref{assum:complete} imply that if there exists a function $h_{+}$ that, for all $d\in\mathcal{D}_{+}(\epsilon)$ and all $z\in\mathcal{Z}(\epsilon)$ satisfies,
    $$
    \E{Y\mid  D=d,Z=z}=\int h_{+}(d - d^{*}, w)\dv \pr{w\mid  D=d,Z=z},
    $$
    then Assumption~\ref{assum:bridge} holds. An analogous result holds for a function $h_{-}$ and all $d \in \mathcal{D}_{-}(\epsilon)$.
\end{lemma}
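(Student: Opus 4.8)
The plan is to run the argument of Lemma~\ref{lemma:factuals} (Factuals~I) in reverse. Fix $d\in\mathcal{D}(\epsilon)$ and define the candidate residual
\begin{equation*}
\delta(d,u):=\E{Y\mid D=d,U=u}-\int h_{0}(d-d^{*},w)\dv\pr{w\mid D=d,U=u}.
\end{equation*}
The target of Assumption~\ref{assum:bridge} is exactly the statement that $\delta(d,U)=0$ almost surely for every $d\in\mathcal{D}(\epsilon)$ (then one sets $h_{+}:=h_{0}$ on $\mathcal{D}_{+}(\epsilon)$ and $h_{-}:=h_{0}$ on $\mathcal{D}_{-}(\epsilon)$, with boundedness inherited from $h_{0}$). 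So the whole proof reduces to showing $\delta(d,U)\equiv 0$, and the only tool available is completeness together with the independences in Lemma~\ref{lemma:indep}.

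First I would project $\delta(d,U)$ onto $(D=d,Z=z)$ and show it vanishes. For the first term, Lemma~\ref{lemma:indep}a ($Y\indep Z\mid D,U$) gives $\E{Y\mid D=d,U}=\E{Y\mid D=d,Z=z,U}$, so the tower property over $U$ yields $\E{\E{Y\mid D=d,U}\mid D=d,Z=z}=\E{Y\mid D=d,Z=z}$. For the second term, Lemma~\ref{lemma:indep}b ($W\indep D,Z\mid U$) gives $\pr{w\mid D=d,U=u}=\pr{w\mid D=d,Z=z,U=u}$, hence $\int h_{0}(d-d^{*},w)\dv\pr{w\mid D=d,U}=\E{h_{0}(d-d^{*},W)\mid D=d,Z=z,U}$, and the tower property over $U$ gives $\E{\int h_{0}(d-d^{*},w)\dv\pr{w\mid D=d,U}\mid D=d,Z=z}=\int h_{0}(d-d^{*},w)\dv\pr{w\mid D=d,Z=z}$. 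Subtracting the two projected terms and invoking the hypothesized factual equation $\E{Y\mid D=d,Z=z}=\int h_{0}(d-d^{*},w)\dv\pr{w\mid D=d,Z=z}$, we get $\E{\delta(d,U)\mid D=d,Z=z}=0$ for all $z\in\mathcal{Z}(\epsilon)$.

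Then I would apply Assumption~\ref{assum:complete} with $f(U):=\delta(d,U)$ for the fixed $d$: since $\E{\delta(d,U)\mid D=d,Z=z}=0$ for all $z\in\mathcal{Z}(\epsilon)$, completeness forces $\delta(d,U)=0$ almost surely, i.e. $\E{Y\mid D=d,U}=\int h_{0}(d-d^{*},w)\dv\pr{w\mid D=d,U}$ a.s., which is the bridge equation at $d$. Since $d\in\mathcal{D}(\epsilon)$ was arbitrary, specializing to $\mathcal{D}_{+}(\epsilon)$ and $\mathcal{D}_{-}(\epsilon)$ recovers both displays of Assumption~\ref{assum:bridge}. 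The main (and fairly mild) obstacle is bookkeeping around the completeness step: one must check $\delta(d,\cdot)$ lies in the function class for which Assumption~\ref{assum:complete} is stated — the integral term is bounded because $h_{0}$ is bounded, so it suffices that $u\mapsto\E{Y\mid D=d,U=u}$ be conditionally integrable, which is implicit in the existence of the objects in Assumption~\ref{assum:bridge} — and one should read all conditioning on the measure-zero events $\{D=d\}$ in the usual regular-conditional-probability sense, so that the interchange of conditional expectations and of limit-free integrals above is justified uniformly in $d\in\mathcal{D}(\epsilon)$.
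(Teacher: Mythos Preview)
Your proposal is correct and is essentially the paper's own argument: both show that $\E{\delta(d,U)\mid D=d,Z=z}=0$ for all $z$ and then invoke completeness to force $\delta(d,U)=0$ a.s. The only cosmetic difference is that the paper packages the two tower-property steps you write out into an auxiliary lemma (``towards factuals'') giving $\pr{y\mid d,z}=\int\pr{y\mid d,u}\dv\pr{u\mid d,z}$ and $\pr{w\mid d,z}=\int\pr{w\mid d,u}\dv\pr{u\mid d,z}$, whereas you derive these directly from Lemma~\ref{lemma:indep}a--b.
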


Lemma~\ref{lemma:factuals2} completes the identification argument by showing that the existence of a function $h_0$ that solves the observed-data version of the bridge equation implies the existence of a confounding bridge in the structural model. This allows us to write $\tau_{0}$ in terms of functions of observed variables.

\begin{theorem}[Placebo identification]\label{thm:identify}
    Suppose Assumptions~\ref{assum:rdd},~\ref{assum:proxy},~\ref{assum:indep},~\ref{assum:stable}, and~\ref{assum:bridge} hold. Then 
    $$\tau_{0} = \frac{\lim_{\epsilon \downarrow 0} \int h_{+}(\epsilon, w) \dv \pr{w \mid D = d^{*}} -  \lim_{\epsilon \downarrow 0} \int  h_{-}(-\epsilon,w)\dv \pr{w\mid D=d^*}}{\lim_{\epsilon \downarrow 0}\left[\E{A \mid D = d^{*} + \epsilon} -  \E{A \mid D = d^{*} - \epsilon}\right]}$$
    If in addition Assumption~\ref{assum:complete} holds then the integrals of $h_0$ in the numerator are identified. Therefore $\tau_{0}$ is identified.
    
    In sharp design, under Assumptions~\ref{assum:rdd},~\ref{assum:proxy},~\ref{assum:stable}, and~\ref{assum:bridge}, then this simplifies to:
    $$\tau_{0} = \lim_{\epsilon \downarrow 0} \int h_{+}(\epsilon, w) \dv \pr{w \mid D = d^{*}} -  \lim_{\epsilon \downarrow 0} \int  h_{-}(-\epsilon,w)\dv \pr{w\mid D=d^*}$$
\end{theorem}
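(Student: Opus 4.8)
The plan is to build directly on Lemma~\ref{lemma:limit}. That lemma already writes $\tau_{0}$ as a ratio whose denominator is the usual RDD first-stage jump $\lim_{\epsilon\downarrow 0}[\E{A\mid D = d^{*}+\epsilon} - \E{A\mid D = d^{*}-\epsilon}]$ (equal to $1$ in the sharp design, where Assumption~\ref{assum:indep} is vacuous), and this denominator is carried over unchanged. So the whole task is to rewrite the numerator $\E{\lim_{\epsilon\downarrow 0}[\E{Y\mid D = d^{*}+\epsilon, U} - \E{Y\mid D = d^{*}-\epsilon, U}]\mid D = d^{*}}$ in terms of $h_{+}$, $h_{-}$, and the cutoff law of $W$. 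The first move is to substitute the confounding bridge of Assumption~\ref{assum:bridge}: for $d\in\mathcal{D}_{+}(\epsilon)$ we have $\E{Y\mid D = d, U} = \int h_{+}(d - d^{*}, w)\dv\pr{w\mid D = d, U}$, and since $W\indep D\mid U$ by Lemma~\ref{lemma:indep}b we may replace $\pr{w\mid D = d, U}$ by $\pr{w\mid U}$, giving $\E{Y\mid D = d^{*}+\epsilon, U} = \int h_{+}(\epsilon, w)\dv\pr{w\mid U}$; symmetrically $\E{Y\mid D = d^{*}-\epsilon, U} = \int h_{-}(-\epsilon, w)\dv\pr{w\mid U}$ on $\mathcal{D}_{-}(\epsilon)$.

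Next I would push the outer conditional expectation $\E{\cdot\mid D = d^{*}}$ through the $w$-integrals. The key identity is $\E{\int h_{+}(\epsilon, w)\dv\pr{w\mid U}\mid D = d^{*}} = \int h_{+}(\epsilon, w)\dv\pr{w\mid D = d^{*}}$: rewrite the left side as $\E{\E{h_{+}(\epsilon, W)\mid U}\mid D = d^{*}}$, use $W\indep D\mid U$ once more to turn the inner term into $\E{h_{+}(\epsilon, W)\mid U, D = d^{*}}$, and collapse by the tower property (this uses $f(d^{*})>0$, assumed throughout). Interchanging $\lim_{\epsilon\downarrow 0}$ with $\E{\cdot\mid D = d^{*}}$ by dominated convergence --- legitimate because $h_{+}$ and $h_{-}$ are bounded under Assumption~\ref{assum:bridge} --- and splitting the limit of the difference into a difference of two one-sided limits, each of which exists by Assumption~\ref{assum:stable} (and Assumption~\ref{assum:rdd} for the fuzzy weighting, exactly as inside the proof of Lemma~\ref{lemma:limit}), yields precisely the stated numerator. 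The sharp-design identity follows identically, using the sharp half of Lemma~\ref{lemma:limit} and denominator $1$.

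For the final claim --- that Assumption~\ref{assum:complete} makes the numerator integrals identified --- I would invoke Lemmas~\ref{lemma:factuals} and~\ref{lemma:factuals2}: the former says every bridge solution $h_{0}$ also solves the observable moment equation $\E{Y\mid D = d, Z = z} = \int h_{0}(d - d^{*}, w)\dv\pr{w\mid D = d, Z = z}$, whose two sides are functionals of the observed joint law, and the latter gives the converse. It then remains to check that $\lim_{\epsilon\downarrow 0}\int h_{0}(\pm\epsilon, w)\dv\pr{w\mid D = d^{*}}$ is the same for every such $h_{0}$. If $h_{0}, h_{0}'$ both solve the observable equation, then $\E{(h_{0} - h_{0}')(d - d^{*}, W)\mid D = d, Z = z} = 0$; conditioning the integrand on $U$ and using $W\indep Z\mid D, U$ (from Lemma~\ref{lemma:indep}b by weak union) gives $\E{\E{(h_{0} - h_{0}')(d - d^{*}, W)\mid D = d, U}\mid D = d, Z = z} = 0$, so Assumption~\ref{assum:complete}, applied to the function $u\mapsto\E{(h_{0} - h_{0}')(d - d^{*}, W)\mid D = d, U = u}$, forces $\E{(h_{0} - h_{0}')(d - d^{*}, W)\mid D = d, U} = 0$ almost surely; integrating over $U\mid D = d^{*}$ with $W\indep D\mid U$ then yields $\int (h_{0} - h_{0}')(d - d^{*}, w)\dv\pr{w\mid D = d^{*}} = 0$. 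Thus the numerator is a well-defined functional of the observed distribution, the denominator is a standard RDD quantity, and $\tau_{0}$ is identified.

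I expect the last step to be the real obstacle: completeness as stated is a condition on functions of the latent $U$, whereas $h_{0}$ is a function of the observed $W$, so completeness does not directly pin down $h_{0}$ --- one must first project the difference of two candidate bridges onto $U$ via $\E{\cdot\mid D = d, U}$ (which is where $W\indep Z\mid D, U$ enters) before completeness can bite, and it is only the integral against $\pr{w\mid D = d^{*}}$, not $h_{0}$ itself, that ends up identified. The remaining interchanges --- Fubini to move $\E{\cdot\mid D = d^{*}}$ inside $\dv\pr{w\mid U}$, dominated convergence for the $\epsilon\downarrow 0$ limit, and splitting a limit of a difference into a difference of limits --- are routine given boundedness of $h_{\pm}$ and strict positivity of $f$ near $d^{*}$, and I would not dwell on them.
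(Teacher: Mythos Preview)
Your proposal is correct and follows essentially the same route as the paper: both start from Lemma~\ref{lemma:limit}, substitute the bridge of Assumption~\ref{assum:bridge}, use $W\indep D\mid U$ from Lemma~\ref{lemma:indep}b together with the tower property to pass to $\pr{w\mid D=d^{*}}$, and invoke dominated convergence via the boundedness of $h_{\pm}$. Your identification argument is a slightly more explicit unpacking of what the paper does by citing the proof of Lemma~\ref{lemma:factuals2}---the paper simply notes that completeness forces $\int h_{0}(d-d^{*},w)\dv\pr{w\mid d,U}=\E{Y\mid d,U}$ (a unique object) for any observable solution $h_{0}$, whereas you reconstruct this by projecting the difference of two candidates onto $U$---but the content is the same.
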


Theorem~\ref{thm:identify} provides a complete identification strategy---it shows how the discontinuity in potential outcomes conditional on unobserved $U$ can be corrected using an observable proxy $(W)$ and an auxiliary variable $(Z)$. The functions $h_+$ and $h_-$ effectively “invert” the confounding effect of $U$ using the relationship between $U$ and $W$. This inversion is unique when completeness holds and yields a point-identified causal estimand.

\setcounter{example}{2}
\begin{example}[Partially linear DGP]
    Recall the structural equations from Example~\ref{ex:partially_linear}, now written with treatment-specific regression functions on either side of the cutoff,
    \begin{align*}
         Y &= A f_{+}(D - d^{*}) + (1 - A) f_{-}(D - d^{*}) + \beta_{u} U + \eta_{y}, \qquad\; \E{\eta_{y} \mid D, Z, U} = 0\\
         W &= \alpha_{0} + \alpha_{u} U + \eta_{w}, \qquad\qquad\qquad\qquad\qquad\qquad\qquad\quad \E{\eta_{w} \mid D, Z, U} = 0,
    \end{align*}
    where $f_{+}, f_{-} \colon \mathcal{D}(\epsilon) \mapsto \mathbb{R}$ are continuous on either side of the cutoff.

    Lemma~\ref{lemma:pl_sufficiency} verifies that the confounding bridges
    \begin{align*}
        h_{+}(D - d^{*}, W) &= A \left\{f_{+}(D - d^{*}) - \frac{\beta_{u}}{\alpha_{u}}\alpha_{0}+ \frac{\beta_{u}}{\alpha_{u}} W\right\}\\
        h_{-}(D - d^{*}, W) &= (1 - A) \left\{f_{-}(D - d^{*}) - \frac{\beta_{u}}{\alpha_{u}}\alpha_{0}+ \frac{\beta_{u}}{\alpha_{u}} W\right\}
    \end{align*}
    satisfy the moment condition in Lemma~\ref{lemma:factuals} (factuals). This allows the moment condition in Lemma~\ref{lemma:limit} (limit) to be expressed entirely in terms of observable variables, permitting estimation.

    Applying Theorem~\ref{thm:identify} (identification), the RDD treatment effect is then identified as
    \begin{align*}
        \tau_{0} &= f_{+}(0) - f_{-}(0) = \E{\lim_{\epsilon \downarrow 0}\left[h_{+}(\epsilon, W) - h_{-}(-\epsilon, W)\right] \mid D = d^{*}}.
    \end{align*}
    That is, $\tau_{0}$ is recovered by taking the limiting difference of the two bridge functions at the cutoff, which removes the confounding contribution of $U$ and leaves only the causal effect of the treatment. \hfill$\triangleleft$
\end{example}
\section{Estimation: Local instrumental variable regression}\label{sec:estimation}

Theorem~\ref{thm:identify} identifies the RDD treatment effect by reweighting two confounding bridges $h_{-}(d, w)$ and $h_{+}(d, w)$. 
Recent literature to approximate these integral equations rely on kernel methods to non-parametrically estimate these confounding bridges. However, \citet{fan_variable_1992} show that the order of bias of these kernel estimators on boundary points ($d^{*}$) is high, and advocate the use of local linear regressions to estimate points on the boundary with less bias.

As our second contribution, we propose a local linear instrumental variable estimator, extending \citet{fan_variable_1992} to accommodate instruments. For tractability, we introduce a partially linear approximation to the confounding bridge in Assumption~\ref{assum:bridge}.  The final estimator decomposes into two intepretable terms; it adjusts the discontinuity in the target outcome by the discontinuity in the placebo outcome.

For clarity, we focus on the sharp design, i.e. the numerator in Theorem~\ref{thm:identify}. The fuzzy design is a straightforward extension, since the denominator in Theorem~\ref{thm:identify} is a standard RDD estimand.

For simplicity, we also focus on the exactly identified case where $\dim(z)=\dim(w)$. Our results naturally extend to the overidentified case $\dim(z)\geq \dim(w)$ by standard techniques.

\subsection{Locally approximating the confounding bridge}

In line with the identification strategy presented in Section~\ref{sec:double_proxy}, we estimate the treatment effect $\tau_{0}$ using a two-step procedure. First, we solve for the confounding bridges $h_{+}(\cdot, \cdot)$ and $h_{-}(\cdot, \cdot)$ at the cutoff $d^{*}$ using the moment condition from Lemma~\ref{lemma:factuals}. Then, invoking Theorem~\ref{thm:identify}, we compute the difference in their limits, $
\tau_{0} = \lim_{\epsilon \downarrow 0} \left[ \int h_{+}(\epsilon, w) \, d\mathbb{P}(w \mid D = d^{*}) - \int h_{-}(-\epsilon, w) \, d\mathbb{P}(w \mid D = d^{*}) \right].
$

To motivate our estimator, consider the class of partially linear functions,
\begin{equation}\begin{aligned}\label{eq:partial}
    h_{+}(d - d^{*}, w) = g_{+}(d - d^{*}) + w^{\top}\gamma_{+}, \qquad h_{-}(d - d^{*}, w) = g_{-}(d - d^{*}) + w^{\top}\gamma_{-},
\end{aligned}\end{equation}
where $g_{+} \colon \mathcal{D}_{+}(\epsilon) \rightarrow \mathbb{R}$ and $g_{-} \colon \mathcal{D}_{-}(\epsilon) \rightarrow \mathbb{R}$ are any continuously differentiable functions, and $\gamma_{+}$ plays the role of the best linear projection of $h_{+}(\cdot, \cdot)$ onto $W$ at $D=d^{*}$. 

Appendix~\ref{sec:sim_description} shows that partially linear potential outcomes imply partially linear confounding bridges. More generally, a partially linear confounding bridge may be viewed as a best-in-class approximation to a nonlinear confounding bridge; see Proposition~\ref{prop:estimand} below for a detailed characterization.

Within this class, the moment condition in Lemma~\ref{lemma:factuals} becomes
\begin{equation}\begin{aligned}\label{eq:lin_moment}
\E{Y - g_{+}(D - d^{*}) - W^{\top}\gamma_{+} \mid D = d, Z = z} = 0.
\end{aligned}\end{equation}

This expression closely resembles a partially linear instrumental variable moment condition, where $W$ plays the role of the endogenous regressor and $Z$ serves as the instrument. Therefore, we propose a local  instrumental variable regression approach to estimate the best approximation to $h_{+}(\cdot, \cdot)$ near $d^{*}$.

\subsection{Algorithm}

To define our estimator, we introduce some notation. Let $\alpha_{+} = (\alpha_{+, 0}, h_{n} \alpha_{+, 1})^{\top} = \left(g_{+}(0), h_{n} g_{+}'(0)\right)^{\top}$ concatenate the nonlinear component of $h_+$ and its derivative at the cutoff, where $h_n$ is the kernel bandwidth. We concatenate all of the parameters to be estimated as $\nu_{+}^{\top} = (\alpha_{+, 0}, \alpha_{+, 1}, \gamma_{+}^{\top})$.

We propose what appears to be a new local linear objective function, which we call local instrumental variable regression. Its first order condition, which yields the estimator $\hat{\nu}_+$, is 
\begin{equation*}\begin{aligned}
    \frac{1}{n} \sum_{i = 1}^{n}  \omega_{i, +} \begin{bmatrix}
     R_{i, 1} \\ Z_{i}
    \end{bmatrix} \left(Y_{i} -  R_{i, 1}^{\top} \hat{\alpha}_{+} - W_{i}^{\top} \hat{\gamma}_{+} \right) &= \pmb{0},
\end{aligned}\end{equation*}
where $\omega_{i, +} = \frac{1}{h_{n}} \1{D_{i} \geq d^{*}} \K{\frac{\vert{D_{i} - d^{*}}}{h_{n}}}$ are local kernel weights, $R_{i, 1} = \left(1, \frac{D_{i} - d^{*}}{h_{n}} \right)^{\top}$ is a transformed vector of the running variable, and $Z_{i} \in \mathbb{R}^{\dim(w)}$ is a vector of placebo treatments. This empirical moment is clearly a sample analogue of the population moment implied by equation~\eqref{eq:lin_moment}. The left limit objects are analogous.

We then use the coefficients estimated by local instrumental variable regression to construct our estimator of the treatment effect. The appropriate formula is immediate from Theorem~\ref{thm:identify} and equation~\eqref{eq:partial}. Formally, the approximate treatment effect and its estimator are
\begin{equation}\begin{aligned}\label{eq:naive_estimator}
    \tau_{\text{pdd}} &= g_{+}(0) + \lim_{\epsilon \downarrow 0} \E{W^{\top} \mid D = d^{*} + \epsilon} \gamma_{+} - \left(g_{-}(0) + \lim_{\epsilon \downarrow 0} \E{W^{\top} \mid D = d^{*} + \epsilon} \gamma_{-}\right)\\
    &=\alpha_{+, 0} + (\beta_{+, 0}^{w})^{\top} \gamma_{+} - \left\{\alpha_{-, 0} + (\beta_{+, 0}^{w})^{\top}\gamma_{-}\right\}\\
    \hat{\tau}_{\text{pdd}} &= \hat{\alpha}_{+, 0} + (\hat{\beta}_{+, 0}^{w})^{\top} \hat{\gamma}_{+} - \left\{\hat{\alpha}_{-, 0} + (\hat{\beta}_{+, 0}^{w})^{\top} \hat{\gamma}_{-}\right\},
\end{aligned}\end{equation}
where $\hat{\beta}_{+, 0}^{w}$ estimates the right-limit $\beta_{+, 0}^{w}=\lim_{\epsilon \downarrow 0} \E{W_{i} \mid D = d^{*} + \epsilon}$ using the local linear estimator of  \citet{hahn2001identification}:
$$H_{1} \hat{\beta}_{+}^{w_{j}} = \argmin_{\beta} \sum_{i = 1}^{n} \omega_{i, +} (W_{i, j} - R_{i, 1}^{\top}\beta )^{2}.$$
Then, we let 
$\hat{\beta}_{+, 0}^{w} = \left(\hat{\beta}_{+, 0}^{w_{1}}, \dots, \hat{\beta}_{+, 0}^{w_{\dim(w)}}\right)^{\top} \in \mathbb{R}^{\dim(w)}$ concatenate the first components of these estimated coefficients. The left limit objects are analogous.

Section~\ref{sec:inference} below shows that this estimator is consistent for $\tau_{\text{pdd}}$. After bias correction, which we defer to Section~\ref{sec:inference}, it is also asymptotically normal at the familiar rate of $n^{-2/5}$.

\subsection{Equivalence: RDD with a new adjustment term}

For any finite sample size, our proposed estimator is numerically equivalent to a highly interpretable procedure: take the standard RDD estimator based on discontinuity in the outcome ($Y$), and add an adjustment term based on the discontinuity in the placebo outcome ($W$). 

\begin{proposition}[Estimator equivalence]\label{prop:decomp}
    Our proposed estimator in equation~\eqref{eq:naive_estimator} is numerically equivalent to 
    \begin{equation*}\begin{aligned}
        \hat{\tau}_{\text{pdd}} &= \hat{\tau}_{\text{rdd}}^{y} - \left(\hat{\tau}_{\text{rdd}}^{w} \right)^\top \hat{\gamma}_{-},
    \end{aligned}\end{equation*}
    where $\hat{\tau}_{\text{rdd}}^{y} = \hat{\beta}_{+, 0}^{y} - \hat{\beta}_{-, 0}^{y} \in \mathbb{R}$ and $\hat{\tau}_{\text{rdd}}^{w} = \left(\hat{\beta}_{+, 0}^{w} - \hat{\beta}_{-, 0}^{w}\right) \in \mathbb{R}^{\dim(w)}$ are constructed from the familiar RDD local linear objective functions, namely
    \begin{equation*}\begin{aligned}
        \hat{\beta}_{+}^{y} = \argmin_{\beta} \frac{1}{n} \sum_{i = 1}^{n} \omega_{i, +}\left(Y_{i} - R_{i, 1}^{\top}\beta \right)^{2}, \qquad \hat{\beta}_{+}^{w_j} = \argmin_{\beta} \frac{1}{n} \sum_{i = 1}^{n} \omega_{i, +}\left(W_{i,j} - R_{i, 1}^{\top}\beta \right)^{2},
    \end{aligned}\end{equation*}
    and analogous objectives below the cutoff.
    Moreover, $\hat{\gamma}_-$ is the local instrumental variable regression of the residualized $Y^{\perp}$ on the residualized $W^{\perp}$, instrumenting with $Z$, below cutoff:
    \begin{equation*}\begin{aligned}
       \hat{\gamma}_{-}&=
       \left[\frac{1}{n} \sum_{i = 1}^{n}  \omega_{i, -} \{Z_{i} (W_{i}^{\perp})^{\top}\}\right]^{-1}
       \left\{\frac{1}{n} \sum_{i = 1}^{n}  \omega_{i, -} (Z_{i} Y_{i}^{\perp})\right\}.
    \end{aligned}\end{equation*}
    Here, $(Y_{i}^\perp)$ are the residuals from a local linear regression of $(Y_i)$ onto $(D_i)$, and $(W_{i}^\perp)$ are the residuals from a local linear regression of $(W_i)$ onto $(D_i)$. In other words, we residualize the target outcome and the placebo outcome using the running variable.\hfill$\triangleleft$
\end{proposition}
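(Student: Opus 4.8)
The statement is a finite‑sample algebraic identity, so the plan is to manipulate the first‑order conditions that define the local instrumental variable estimators $\hat\nu_{+}$ and $\hat\nu_{-}$ directly, using the Frisch--Waugh--Lovell (partialling‑out) identity for kernel‑weighted least squares as the workhorse. I would invoke it twice: once on the block of the first‑order condition carried by the running‑variable vector $R_{i,1}$, and once on the block carried by the instrument $Z_{i}$. The only non‑algebraic input is finite‑sample nonsingularity of the relevant matrices, which I would state at the outset: invertibility of the local Gram matrices $\sum_{i}\omega_{i,\pm}R_{i,1}R_{i,1}^{\top}$ and of the just‑identified first‑stage matrices $\sum_{i}\omega_{i,\pm}Z_{i}(W_{i}^{\perp})^{\top}$ (recall $\dim(z)=\dim(w)$), which make $\hat\nu_{\pm}$, and hence the asserted identity, well‑defined for a given sample.

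First I would read off, from the $R_{i,1}$ rows of the first‑order condition for $\hat\nu_{+}$, that $H_{1}\hat\alpha_{+}$ is exactly the kernel‑weighted local‑linear coefficient of the modified regressand $Y_{i}-W_{i}^{\top}\hat\gamma_{+}$ on $R_{i,1}$. Linearity of least squares in the regressand then gives $H_{1}\hat\alpha_{+}=H_{1}\hat\beta_{+}^{y}-\sum_{j}(H_{1}\hat\beta_{+}^{w_{j}})\hat\gamma_{+,j}$, and taking the intercept coordinate yields the key identity $\hat\alpha_{+,0}=\hat\beta_{+,0}^{y}-(\hat\beta_{+,0}^{w})^{\top}\hat\gamma_{+}$, with the mirror‑image identity below the cutoff. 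Substituting these into the definition of $\hat\tau_{\text{pdd}}$ in equation~\eqref{eq:naive_estimator} and collecting the terms that multiply $\hat\beta_{+,0}^{y},\hat\beta_{-,0}^{y},\hat\beta_{+,0}^{w},\hat\beta_{-,0}^{w},\hat\gamma_{+},\hat\gamma_{-}$ reduces $\hat\tau_{\text{pdd}}$ to the plain outcome discontinuity $\hat\beta_{+,0}^{y}-\hat\beta_{-,0}^{y}=\hat\tau_{\text{rdd}}^{y}$ together with the placebo‑outcome discontinuity $\hat\beta_{+,0}^{w}-\hat\beta_{-,0}^{w}=\hat\tau_{\text{rdd}}^{w}$ weighted by $\hat\gamma_{-}$, i.e.\ $\hat\tau_{\text{rdd}}^{y}+(\hat\tau_{\text{rdd}}^{w})^{\top}\hat\gamma_{-}$.

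Next, for the stated closed form of $\hat\gamma_{-}$, I would turn to the $Z_{i}$ rows of the first‑order condition for $\hat\nu_{-}$ and substitute $R_{i,1}^{\top}H_{1}\hat\alpha_{-}$ by the local‑linear fit of $Y_{i}-W_{i}^{\top}\hat\gamma_{-}$ on $R_{i,1}$ obtained in the previous step. The scalar residual $Y_{i}-R_{i,1}^{\top}H_{1}\hat\alpha_{-}-W_{i}^{\top}\hat\gamma_{-}$ then collapses to $Y_{i}^{\perp}-(W_{i}^{\perp})^{\top}\hat\gamma_{-}$, where $Y_{i}^{\perp}$ and $W_{i}^{\perp}$ are the kernel‑weighted local‑linear residuals of $Y_{i}$ and $W_{i}$ on $R_{i,1}$ below the cutoff; since the column span of $\{(1,(D_{i}-d^{*})/h_{n})\}$ equals that of $\{(1,D_{i})\}$ on the estimation window, these coincide with the residuals from regressing on $D_{i}$, matching the statement. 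The $Z_{i}$ rows therefore read $\frac{1}{n}\sum_{i}\omega_{i,-}Z_{i}\big(Y_{i}^{\perp}-(W_{i}^{\perp})^{\top}\hat\gamma_{-}\big)=\mathbf{0}$, which rearranges to the displayed formula.

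The step I expect to be the main obstacle is the term‑collecting in the second paragraph: the cancellation is sensitive to which one‑sided limit of $\E{W\mid D=d}$ anchors each half of $\hat\tau_{\text{pdd}}$—these limits genuinely differ, since the placebo‑outcome density is allowed to jump at $d^{*}$—and to keeping the side‑labels of $\hat\gamma_{+}$ versus $\hat\gamma_{-}$ and the signs straight, so this is where a mechanical slip is most likely and where care in the exposition pays off. Everything else is bookkeeping once the two partialling‑out identities are in hand.
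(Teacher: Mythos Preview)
Your proposal is correct and follows essentially the same route as the paper's proof: both arguments establish the key identity $H_{1}\hat\alpha_{\pm}=H_{1}\hat\beta_{\pm}^{y}-H_{1}\hat\beta_{\pm}^{w}\hat\gamma_{\pm}$, substitute into equation~\eqref{eq:naive_estimator}, and then read off the residualized IV form of $\hat\gamma_{-}$ via partialling out. The only cosmetic difference is that the paper obtains these identities by explicit block--matrix inversion (Schur complement of $\mathbf{R}_{1}^{\top}\mathbf{K}_{\pm}\mathbf{R}_{1}$), whereas you argue directly from the two blocks of the first--order conditions; these are the same Frisch--Waugh--Lovell computation in different notation, and your caution about which one--sided limit of $\E{W\mid D}$ anchors each half is exactly the place where the bookkeeping matters.
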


The RDD literature, summarized recently by e.g. \citet{cattaneo_practical_2024}, advocates for testing whether $\hat{\tau}_{\text{rdd}}^{w}$ is close to zero as a diagnostic for whether $f(d \mid u,\eta_y)$ (or, relatedly, $f(u,\eta_y \mid d)$) changes discontinuously at $d^{*}$. Our contribution is to show that this discontinuity is an essential ingredient for an adjustment term. If $\hat{\tau}_{\text{rdd}}^{w}$ is indeed negligible, our adjustment term is zero and our estimator $\hat{\tau}_{\text{pdd}}$ simplifies to the standard RDD estimate $\hat{\tau}_{\text{rdd}}^{y}$.

Within our adjustment term, the placebo outcome discontinuity ($\hat{\tau}_{\text{rdd}}^{w}$) is multiplied by weights ($\hat{\gamma}_{-}$). 

These weights are increasing in the correlation between the residualized outcome ($Y^{\perp}$) and the placebo treatment ($Z$). Intuitively, if this correlation is stronger, then there is more unobserved confounding on the target outcome, and hence we increase the weights. 

The weights are decreasing in the correlation between the residualized placebo outcome ($W^{\perp}$) and the placebo treatment ($Z$). Intuitively, if this correlation is weaker, then the placebo outcome is a weaker proxy for the unobserved confounding, and hence we compensate by increasing the weights on its estimated discontinuity. 

Proposition~\ref{prop:decomp} clarifies the danger of a very weak proxy, which would lead to numerical instability in this final aspect of the weights. Future work may study the behavior of our estimator when using a weak proxy.

\begin{remark}[Only left-adjustment]
    The decomposition in Proposition~\ref{prop:decomp} only estimates $\gamma_{-}$, and not $\gamma_{+}$, because our treatment effect of interest is the effect at the cutoff. As such, it is the effect on the treated, so only the untreated potential outcome needs adjustment.

    In more detail, our estimator is anchored on the distribution of $U$ just above the cutoff. It preserves the right-limit $\mathbb{E}[Y \mid D = d^{*} + \epsilon]$. However, it adjusts the left-limit using $\gamma_{-}$. The adjusted term allows us  to estimate the counterfactual below the cutoff $\mathbb{E}[\mathbb{E}[Y \mid D = d^{*} - \epsilon, U] \mid D = d^{*}]$, in the thought experiment where $U$ matches the distribution it would have followed above the cutoff. 
    
    An alternative causal parameter is $\tilde{\tau}_{0} = \mathbb{E}[Y(1, d^{*}, U, \eta_{y}) - Y(0, d^{*}, U, \eta_{y})]$, for which our estimator would have two adjustment terms: 
    $\tilde{\tau}_{\text{pdd}} = \hat{\tau}_{\text{rdd}}^{y} + \left(\overline{W} - \hat{\beta}_{+, 0}^{w}\right)^{\top}\hat{\gamma}_{+} - \left(\overline{W} - \hat{\beta}_{-, 0}^{w}\right)^{\top}\hat{\gamma}_{-}$. \hfill $\triangleleft$
\end{remark}

\section{Bias corrected inference}\label{sec:inference}

As our third contribution, we study the large sample properties of the estimator in Proposition~\ref{prop:decomp}. We prove it is consistent. Then, we prove that it is asymptotically normal at the familiar rate $n^{-2/5}$ after bias correction.

\subsection{Consistency}

While prior work used placebo outcomes and predetermined covariates to detect assumption violations, we propose a method to directly correct for them and thereby recover $\tau_{0}$. Below we state sufficient conditions for consistency.

Let $\mu_{+, y}(d - d^{*}) = \E{Y \mid D = d}$ for $d \geq d^{*}$ and $\mu_{-, w_{j}}(d - d^{*}) = \E{W_{i, j} \mid D = d}$ for $d < d^{*}$. Similarly, define $\mu_{-, y}(d - d^{*})$, $\mu_{+, w_{j}}(d - d^{*})$, $\mu_{-, w_{j}}(d - d^{*})$, $\mu_{+, z_{j}}(d - d^{*})$, and $\mu_{-, z_{j}}(d - d^{*})$. Let $\mu_{+, z_{j}w_{k}}(d - d^{*}) = \E{W_{i, k}Z_{i, j} \mid D = d}$.

Let $\sigma^{2}_{+, y}(d - d^{*}) = \E{\left(Y_{i} - g_{+}(d - d^{*}) - W_{i}^{\top}\gamma_{+}\right)^{2} \mid D_i = d}$ for $d \geq d^{*}$ and $\sigma^{2}_{-, w_j}(d - d^{*}) = \hfill\break \E{\left(W_{i, j} - \mu_{-, w_j}(d - d^{*})\right)^{2} \mid D_i = d}$ for $d < d^{*}$. Similarly, define $\sigma^{2}_{-, y}(d - d^{*})$, $\sigma^{2}_{+, w_{j}}(d - d^{*})$, $\sigma^{2}_{+, z_{j}}(d - d^{*})$, and $\sigma^{2}_{-, y}(d - d^{*})$.

\begin{assumption}[Regularity conditions]\label{assum:reg}
    Let the following conditions hold.
    \begin{enumerate}[label = \alph*]
        \item (Continuous mean)  For all $d \in \mathcal{D}_{+}(\epsilon)$ and $j \in \{1, \dots, \dim(w)\}$, $g_{+}(\cdot)$ and $\mu_{+, w_{j}}(\cdot)$ are 3-times continuously differentiable and bounded. Additionally, $\mu_{+, z_{j}}(\cdot)$ and $\mu_{+, z_{j}w_{k}}(\cdot)$ are continuous and bounded.
        
        For all $d \in \mathcal{D}_{-}(\epsilon)$ and $j \in \{1, \dots, \dim(w)\}$, $\mu_{-, y}(\cdot)$ and $\mu_{-, w_{j}}(\cdot)$ are 3-times continuously differentiable and bounded. Additionally, $\mu_{-, z_{j}}(\cdot)$ and $\mu_{-, z_{j}w_{k}}(\cdot)$ are continuous and bounded.

        \item (Continuous variance) For all $d \in \mathcal{D}_{+}(\epsilon)$ and $j, k \in \{1, \dots, \dim(w)\}$, $\sigma^{2}_{+, y}(\cdot)$, $\sigma^{2}_{+, w_{j}}(\cdot)$, and $\sigma^{2}_{+, z_j}(\cdot)$ are continuous and bounded away from $0$. For all $d \in \mathcal{D}_{-}(\epsilon)$, $\sigma^{2}_{-, y}(\cdot)$, $\sigma^{2}_{-, w_{j}}(\cdot)$, and $\sigma^{2}_{-, z_j}(\cdot)$ are continuous and bounded away from $0$.

        For all $d \in \mathcal{D}_{+}(\epsilon)$, $\rho_{+, y^{2}z}(d - d^{*})$, $\rho_{+, y^{2}zz^{\top}}(d - d^{*})$, $\rho_{+, yw_jz}(d - d^{*})$, $\rho_{+, yw_j}(d - d^{*})$, $\rho_{+, z_kw_j}(d - d^{*})$, and $\rho_{+, w_jw_k}(d - d^{*})$ are continuous. For all $d \in \mathcal{D}_{-}(\epsilon)$, $\rho_{-, y^{2}z}(d - d^{*})$, $\rho_{-, y^{2}zz^{\top}}(d - d^{*})$, $\rho_{-, yw_jz}(d - d^{*})$, $\rho_{-, yw_j}(d - d^{*})$, $\rho_{-, z_kw_j}(d - d^{*})$, and $\rho_{-, w_jw_k}(d - d^{*})$ are continuous.\footnote{The $\rho_{\pm, x}(\cdot)$ are covariance terms, with formulas given in Appendix~\ref{sec:appendix_bias}.}

        \item (Continuous distribution) For all $d \in \mathcal{D}(\epsilon)\backslash\{d^{*}\}$, $f_{d}(d)$ is bounded, continuous, and nonzero, where $f_{d}(d)$ is the distribution of $D$.
        
        \item (Bounded higher moments) For all $d \in \mathcal{D}(\epsilon)$ and $j \in \{1, \dots, \dim(w)\}$, there exists some $\zeta > 0$ and $M \in \mathbb{R}$ such that we have $\E{\vert{Y_{i} - g_{+}(d - d^{*}) - W_{i}^{\top}\gamma_{+}}^{2 + \zeta} \mid D_{i} = d}$, \hfill\break $\E{\vert{Y_{i} - g_{-}(d - d^{*}) - W_{i}^{\top}\gamma_{-}}^{2 + \zeta} \mid D_{i} = d}$, $\E{\vert{W_{i, j} - \mu_{+, w_{j}}(d - d^{*})}^{2 + \zeta} \mid D_{i} = d }$, \hfill\break $\E{\vert{W_{i, j} - \mu_{-, w_{j}}(d - d^{*})}^{2 + \zeta} \mid D_{i} = d},$ $\E{\vert{Z_{i, j}\left(Y_{i} - g_{+}(d - d^{*}) - W_{i}^{\top}\gamma_{+}\right)}^{2 + \zeta} \mid D_{i} = d} < M$.
    \end{enumerate}
\end{assumption}

The regularity conditions in Assumption~\ref{assum:reg} are variations of standard assumptions in the RDD literature \citep{porter2003estimation, calonico_regression_2019}. 

\begin{assumption}[Kernel conditions]\label{assum:kernel}
    Assume that the kernel function $K \colon \mathbb{R} \mapsto \mathbb{R}^{+}$ is symmetric, bounded, and continuous at $0$ with bounded support.
\end{assumption}

Assumption~\ref{assum:kernel} is satisfied by the window kernel $\K{u} = \1{\vert{u} \leq 1}$ and the triangle kernel $\K{u} = \left(1 - \vert{u}\right) \1{\vert{u} \leq 1}$, which are the most frequently used kernels in the RDD literature. It also allows for the Epanechnikov kernel $\K{u} = \frac{3}{4}(1 - u^{2}) \1{\vert{u} \leq 1}$. Under these conditions we can characterize the probability limit of our local instrumental variable estimator for any vanishing bandwidth  $1\gg  h_n \gg n^{-1}$.

\begin{proposition}[Probability limit]\label{prop:estimand}
    Let Assumptions~\ref{assum:reg}a-c and~\ref{assum:kernel} hold, and assume that $nh_{n} \rightarrow \infty$ and $h_{n} \rightarrow 0$. Then, it follows that,
   $$
        \hat{\tau}_{\text{pdd}} \inprob \tau_{\text{rdd}}^{y} - (\tau_{\text{rdd}}^{w})^{\top} \gamma_{-},
$$
    where $\gamma_{-} = \lim_{\epsilon \downarrow 0} \Cov{ Z_{i}, W_{i}^{\top} \mid D = d^{*} - \epsilon}^{-1} \Cov{ Z_{i}, Y_{i} \mid D_{i} = d^{*} - \epsilon}$ is a best local linear approximation, i.e. it solves the local projection problem 
   $$
        Y = c + W^{\top} \gamma_{-} + \eta, \qquad \lim_{\epsilon \downarrow 0}  \E{\eta  Z_{i} \mid D_{i} = d^{*} - \epsilon} =0,\qquad  \lim_{\epsilon \downarrow 0} \E{\eta \mid D_{i} = d^{*} - \epsilon} = 0.
  $$\hfill$\triangleleft$
\end{proposition}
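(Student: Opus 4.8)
I would start from the finite-sample identity of Proposition~\ref{prop:decomp}, $\hat{\tau}_{\text{pdd}}=\hat{\tau}_{\text{rdd}}^{y}+(\hat{\tau}_{\text{rdd}}^{w})^{\top}\hat{\gamma}_{-}$, so that it suffices to (i) show $\hat{\tau}_{\text{rdd}}^{y}\inprob\tau_{\text{rdd}}^{y}$ and $\hat{\tau}_{\text{rdd}}^{w}\inprob\tau_{\text{rdd}}^{w}$, (ii) show $\hat{\gamma}_{-}\inprob\gamma_{-}$, and (iii) combine them by the continuous mapping theorem, since $(a,v,g)\mapsto a+v^{\top}g$ is continuous. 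Every object here is a smooth function of kernel-weighted sample averages, so the whole proof rests on one local-averaging lemma applied repeatedly; the only subtle point is the residualization hidden inside $\hat{\gamma}_{-}$.

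\emph{Step 1: a local-averaging law of large numbers.} Write $\kappa=\int_{0}^{\infty}K(u)\dv u$ and let $f_{-}$ denote the (positive, by the maintained density condition) value of the density of $D$ just below $d^{*}$. I would prove that for any $\phi$ with $d\mapsto\E{\phi(D,V)\mid D=d}$ and $d\mapsto\E{\|\phi(D,V)\|^{2}\mid D=d}$ continuous and bounded in a left neighborhood of $d^{*}$,
\[
\frac1n\sum_{i=1}^{n}\omega_{i,-}\,\phi(D_{i},V_{i})\;\inprob\;f_{-}\,\kappa\,\E{\phi(d^{*},V)\mid D=d^{*-}},
\]
and symmetrically for $\omega_{i,+}$, whenever $h_{n}\to0$ and $nh_{n}\to\infty$; here $d^{*-}$ denotes the one-sided limit from below. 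The proof is the standard boundary-kernel argument: the mean of a single summand converges to the right-hand side by the substitution $u=|D-d^{*}|/h_{n}$, dominated convergence, and Assumptions~\ref{assum:reg}--\ref{assum:kernel}; the variance is $O\big((nh_{n})^{-1}\big)\to0$, using the bounded conditional second moment and $\int_{0}^{\infty}u^{5}K(u)^{2}\dv u<\infty$. The moment conditions of Assumption~\ref{assum:reg} provide the uniform integrability needed when $\phi$ is a product such as $Z_{i}Y_{i}$ or $Z_{i}W_{i}^{\top}$.

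\emph{Step 2: the RDD pieces.} Writing the local linear estimator through its normal equations, $H_{1}\hat{\beta}_{-}^{y}=\big(\frac1n\sum\omega_{i,-}R_{i,1}R_{i,1}^{\top}\big)^{-1}\big(\frac1n\sum\omega_{i,-}R_{i,1}Y_{i}\big)$, Step~1 shows the design matrix converges to $f_{-}\kappa\,M$ with $M=\big(\begin{smallmatrix}1 & -m_{1}\\ -m_{1} & m_{2}\end{smallmatrix}\big)$, $m_{j}=\int_{0}^{\infty}u^{j}K/\kappa$, which is nonsingular, and the response vector converges to $f_{-}\kappa\,\mu_{-,y}(0)\,(1,-m_{1})^{\top}$ (the linear part of $\mu_{-,y}$ contributing only an $O(h_{n})$ remainder). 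Since $M(1,0)^{\top}=(1,-m_{1})^{\top}$, the continuous mapping theorem gives $H_{1}\hat{\beta}_{-}^{y}\inprob(\mu_{-,y}(0),0)^{\top}$, i.e. $\hat{\beta}_{-,0}^{y}\inprob\mu_{-,y}(0)$ with the rescaled slope vanishing; the same holds above the cutoff and for each coordinate of $W$. Hence $\hat{\tau}_{\text{rdd}}^{y}\inprob\tau_{\text{rdd}}^{y}$ and $\hat{\tau}_{\text{rdd}}^{w}\inprob\tau_{\text{rdd}}^{w}$, which is the usual consistency of the \cite{fan_variable_1992}/\cite{hahn2001identification} local linear RDD estimator, also implied by \cite{calonico_regression_2019}.

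\emph{Step 3: the adjustment weights, and the main obstacle.} Since the residual is linear in the fit, $\frac1n\sum\omega_{i,-}Z_{i}Y_{i}^{\perp}=\frac1n\sum\omega_{i,-}Z_{i}Y_{i}-\big(\frac1n\sum\omega_{i,-}Z_{i}R_{i,1}^{\top}\big)H_{1}\hat{\beta}_{-}^{y}$. By Step~1, $\frac1n\sum\omega_{i,-}Z_{i}Y_{i}\inprob f_{-}\kappa\,\E{ZY\mid D=d^{*-}}$ and $\frac1n\sum\omega_{i,-}Z_{i}R_{i,1}^{\top}\inprob f_{-}\kappa\,\E{Z\mid D=d^{*-}}(1,-m_{1})$; multiplying the latter by $H_{1}\hat{\beta}_{-}^{y}\inprob(\mu_{-,y}(0),0)^{\top}$ from Step~2, the $-m_{1}$ column is annihilated and the product tends to $f_{-}\kappa\,\E{Z\mid D=d^{*-}}\E{Y\mid D=d^{*-}}$. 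Subtracting, the numerator of $\hat{\gamma}_{-}$ converges to $f_{-}\kappa\,\Cov{Z,Y\mid D=d^{*-}}$, and the same computation gives $\frac1n\sum\omega_{i,-}Z_{i}(W_{i}^{\perp})^{\top}\inprob f_{-}\kappa\,\Cov{Z,W^{\top}\mid D=d^{*-}}$. Inverting (continuous mapping, using the implicit relevance condition that $\lim_{\epsilon\downarrow0}\Cov{Z,W^{\top}\mid D=d^{*}-\epsilon}$ is nonsingular), the common scalar $f_{-}\kappa$ cancels and $\hat{\gamma}_{-}\inprob\Cov{Z,W^{\top}\mid D=d^{*-}}^{-1}\Cov{Z,Y\mid D=d^{*-}}=\gamma_{-}$; the best-local-linear-projection characterization then follows by multiplying $Y=c+W^{\top}\gamma_{-}+\eta$ by $(1,Z^{\top})^{\top}$ and taking one-sided conditional expectations. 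Combining Steps~2--3 by the continuous mapping theorem gives the stated probability limit. The main obstacle is this last step: one must show that the sampling error of the residualizing regressions of $Y$ and $W$ on $D$ is asymptotically negligible inside the kernel-weighted IV moments and --- more delicately --- that the limits emerge as the one-sided \emph{covariances} rather than raw cross-moments. The latter is exactly what the intercept-consistency and slope-vanishing of Step~2 deliver, since they force $\big(\frac1n\sum\omega_{i,-}Z_{i}R_{i,1}^{\top}\big)H_{1}\hat{\beta}_{-}^{y}$ to reproduce $\E{Z\mid D=d^{*-}}\E{Y\mid D=d^{*-}}$ in the limit. A secondary point is checking, for the products involving $Z$, the uniform-integrability side conditions used in Step~1.
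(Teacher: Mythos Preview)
Your proposal is correct and follows essentially the same route as the paper: start from the Proposition~\ref{prop:decomp} decomposition, invoke standard local-linear consistency for $\hat{\tau}_{\text{rdd}}^{y}$ and $\hat{\tau}_{\text{rdd}}^{w}$, and analyze $\hat{\gamma}_{-}$ by expanding the residualized IV moment and showing the slope contribution vanishes so that one-sided covariances emerge. The paper packages your Step~3 computation as a standalone technical lemma (the ``$\mathbf{Q}_{-}$-limit'' lemma) and cites \cite{hahn2001identification} for your Step~2, but the substance---including the key observation that $H_{1}\hat{\beta}_{-}^{y}\inprob(\mu_{-,y}(0),0)^{\top}$ kills the slope column and turns the cross-moment into a covariance---is the same.
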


Building on Proposition~\ref{prop:decomp} (estimator equivalence), which characterizes the finite sample estimate,
Proposition~\ref{prop:estimand} characterizes the asymptotic limit of the local estimator. Now, at the population level, we show that if the placebo outcome does not jump at the cutoff ($\tau^{w}_{\text{rdd}} = 0$), then the local instrumental variable estimand reduces to the traditional RDD estimand. 

The population limit of the weights admits a similar interpretation as before.  It also sheds light on the nature of our approximation. 
In particular, Proposition~\ref{prop:estimand} shows that $\gamma_{-}$ can be interpreted as a linear projection of $Y$ onto the distribution of $W$, instrumenting with $Z$, when $D$  approaches the cutoff from the left. In other words, $\gamma_{-}$ is the parameter which best approximates the confounding bridge in Lemma~\ref{lemma:factuals} (factuals) as a partially linear function of the placebo outcome $W$. 

If the confounding bridge $h_{-}(0, w)$ is very nonlinear in $W$ then this approximation can be poor. However, we advocate for this approach because full nonparametric estimation over $(D_{i},W_{i})$ would require us to uniformly estimate $h_{-}(0, w)$ as a function of $w$ to characterize the asymptotic bias of the local instrumental variable estimator. There would be $\dim(w)$ additional kernels and bandwidths in the estimator. As discussed by \citet{calonico_regression_2019}, localizing weights  for many variables would suffer from a curse of dimensionality, and hence empirical applications would become challenging. Our estimator does not suffer from this curse of dimensionality, and is therefore practical for empirical researchers. Future work may attempt to overcome these challenges, perhaps by placing additional structure on the effective dimension or sparsity of $W$ \citep{noack2021flexible}. 

\begin{corollary}[Consistency]\label{cor:consistency} Let Assumptions~\ref{assum:rdd},~\ref{assum:proxy},~\ref{assum:indep},~\ref{assum:stable},~\ref{assum:bridge},~\ref{assum:reg}a, and~\ref{assum:kernel} hold. Additionally, suppose that equation~\eqref{eq:partial} holds almost surely. Then, if $nh_{n} \rightarrow \infty$ and $h_{n} \rightarrow 0$, it follows that $\hat{\tau}_{\text{pdd}} \inprob \tau_{0}$.\hfill$\triangleleft$
\end{corollary}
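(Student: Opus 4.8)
The plan is to reduce Corollary~\ref{cor:consistency} to Proposition~\ref{prop:estimand} together with one population-level identity, and then to verify that identity. For the reduction, observe that the hypotheses of the corollary contain exactly those of Proposition~\ref{prop:estimand} (Assumptions~\ref{assum:reg}a and~\ref{assum:kernel}, with $nh_{n}\to\infty$ and $h_{n}\to0$), so Proposition~\ref{prop:estimand} applies and gives $\hat{\tau}_{\text{pdd}}\inprob\tau_{\text{pdd}}$, where I abbreviate $\tau_{\text{pdd}}=\tau_{\text{rdd}}^{y}-(\tau_{\text{rdd}}^{w})^{\top}\gamma_{-}$ with $\tau_{\text{rdd}}^{y}=\beta_{+,0}^{y}-\beta_{-,0}^{y}$, $\tau_{\text{rdd}}^{w}=\beta_{+,0}^{w}-\beta_{-,0}^{w}$, and $\gamma_{-}$ the left one-sided local linear projection of $Y$ on $W$ instrumented by $Z$. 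Since $\tau_{\text{pdd}}$ and $\tau_{0}$ are both constants, it then suffices to show $\tau_{\text{pdd}}=\tau_{0}$ using the remaining hypotheses (Assumptions~\ref{assum:rdd}--\ref{assum:bridge} and the partially linear specification~\eqref{eq:partial}), after which $\hat{\tau}_{\text{pdd}}\inprob\tau_{0}$ follows.

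To establish the identity, I would first rewrite the conditional mean of $Y$ on each side of the cutoff. Applying the tower rule and Fubini to the bridge identity of Assumption~\ref{assum:bridge}, and then invoking the linearity in $w$ from~\eqref{eq:partial}, gives $\E{Y\mid D=d}=\int h_{+}(d-d^{*},w)\dv\pr{w\mid D=d}=g_{+}(d-d^{*})+\E{W\mid D=d}^{\top}\gamma_{+}$ for $d\in\mathcal{D}_{+}(\epsilon)$, and the analogue with $(g_{-},\gamma_{-})$ for $d\in\mathcal{D}_{-}(\epsilon)$. Passing to one-sided limits and using the continuity in Assumption~\ref{assum:reg}a yields $\beta_{\pm,0}^{y}=g_{\pm}(0)+(\beta_{\pm,0}^{w})^{\top}\gamma_{\pm}$. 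I also need the $\gamma_{-}$ supplied by Proposition~\ref{prop:estimand} to coincide with the $\gamma_{-}$ of~\eqref{eq:partial}: since~\eqref{eq:partial} holds, Lemma~\ref{lemma:factuals} yields the observable moment $\E{Y-g_{-}(D-d^{*})-W^{\top}\gamma_{-}\mid D=d,Z=z}=0$ on $\mathcal{D}_{-}(\epsilon)$, so $(g_{-},\gamma_{-})$ solves the local projection problem of Proposition~\ref{prop:estimand}, which has a unique solution because $\dim(z)=\dim(w)$ and the limiting cross-covariance $\lim_{\epsilon\downarrow0}\Cov{Z_{i},W_{i}^{\top}\mid D=d^{*}-\epsilon}$ is invertible; consequently no completeness assumption is needed here.

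Substituting $\beta_{\pm,0}^{y}=g_{\pm}(0)+(\beta_{\pm,0}^{w})^{\top}\gamma_{\pm}$ into $\tau_{\text{pdd}}=\beta_{+,0}^{y}-\beta_{-,0}^{y}-(\beta_{+,0}^{w}-\beta_{-,0}^{w})^{\top}\gamma_{-}$ cancels the $(\beta_{-,0}^{w})^{\top}\gamma_{-}$ terms and leaves $\tau_{\text{pdd}}=g_{+}(0)-g_{-}(0)+(\beta_{+,0}^{w})^{\top}(\gamma_{+}-\gamma_{-})$. On the other hand, plugging~\eqref{eq:partial} into the sharp-design formula of Theorem~\ref{thm:identify} gives $\tau_{0}=g_{+}(0)-g_{-}(0)+\bigl(\int w\dv\pr{w\mid D=d^{*}}\bigr)^{\top}(\gamma_{+}-\gamma_{-})$. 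By the continuity of $d\mapsto\E{W\mid D=d}$ on $\{d\geq d^{*}\}$ (Assumption~\ref{assum:reg}a), the mean of $\pr{w\mid D=d^{*}}$ equals the right-limit $\beta_{+,0}^{w}$ --- this is the sense in which the estimand is ``anchored above the cutoff'' (cf. the Remark on only-left-adjustment) --- so $\tau_{0}=\tau_{\text{pdd}}$, which proves the sharp case. The fuzzy case follows verbatim after dividing numerator and denominator by the standard first-stage jump $\lim_{\epsilon\downarrow0}[\E{A\mid D=d^{*}+\epsilon}-\E{A\mid D=d^{*}-\epsilon}]$, whose sample analogue is consistent by the usual local linear RDD argument.

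I expect the main obstacle to be precisely the one-sided-limit bookkeeping in the last paragraph: Theorem~\ref{thm:identify} presents $\tau_{0}$ against the single conditional law $\pr{w\mid D=d^{*}}$, whereas $\tau_{\text{pdd}}$ carries both one-sided limits $\beta_{\pm,0}^{w}$ of $\E{W\mid D=d}$ and only the left weight $\gamma_{-}$; the equality hinges on the cancellation of the $(\beta_{-,0}^{w})^{\top}\gamma_{-}$ contribution and on identifying the mean of $\pr{w\mid D=d^{*}}$ with the right-limit. A more routine point is checking that the ``best local linear approximation'' of Proposition~\ref{prop:estimand} equals the exact coefficient of~\eqref{eq:partial} under correct specification, which is where exact identification and the invertibility of the limiting cross-covariance enter.
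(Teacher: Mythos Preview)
Your proposal is correct and follows essentially the same logic as the paper's proof: both arguments (i) establish that the sample projection coefficient converges to the true partially linear coefficient of~\eqref{eq:partial} under correct specification, (ii) express the one-sided conditional means of $Y$ as $\beta_{\pm,0}^{y}=g_{\pm}(0)+(\beta_{\pm,0}^{w})^{\top}\gamma_{\pm}$, and (iii) identify $\E{W\mid D=d^{*}}$ with the right limit $\beta_{+,0}^{w}$ to match Theorem~\ref{thm:identify}. The one organizational difference is that the paper works side-by-side with the original estimator form $\hat{\alpha}_{\pm,0}+(\hat{\beta}_{\pm,0}^{w})^{\top}\hat{\gamma}_{\pm}$ and therefore verifies $\hat{\gamma}_{+}\inprob\gamma_{+}$ and $\hat{\gamma}_{-}\inprob\gamma_{-}$ separately, whereas you invoke the decomposition of Proposition~\ref{prop:estimand} directly, so only $\gamma_{-}$ appears and the $\gamma_{+}$ contribution is absorbed into the algebraic cancellation you describe; this is a cosmetic rearrangement rather than a different idea.
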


Corollary~\ref {cor:consistency} shows that when the confounding bridge lies in the class of partially linear functions in equation~\eqref{eq:partial}, then the local instrumental variable estimator is consistent for the treatment effect $\tau_{0}$. As shown in Appendix~\ref{sec:sim_description}, partially linear potential outcomes imply partially linear confounding bridges. More generally, partial linearity of the confounding bridge may be a reasonable approximation. 

\subsection{Bias correction and asymptotic normality}\label{sec:rbc_pdd}

As shown in Proposition~\ref{prop:decomp}, our estimator can be decomposed into a sum of two components: (i) the RDD estimate of the outcome discontinuity $\hat{\tau}_{\text{rdd}}^{y}$, and (ii) the RDD estimate of the placebo outcome discontinuity, appropriately weighted as $(\hat{\tau}_{\text{rdd}}^{w})^{\top}\hat{\gamma}_{-}$. A key insight from \citet{calonico_robust_2014} is that, when the bandwidth $h_n$ is selected to optimize mean squared error (MSE), these discontinuity estimators suffer from non-negligible bias in their sampling distributions. This bias persists in large samples and leads to invalid inference, particularly when constructing confidence intervals.

Proposition~\ref{prop:bias} shows the asymptotic bias of the naive PDD estimator at the MSE-optimal bandwidth rate of $h_n \asymp n^{-1/5}$. 

\begin{proposition}[PDD bias]\label{prop:bias}
     Let Assumptions~\ref{assum:rdd},~\ref{assum:proxy},~\ref{assum:indep},~\ref{assum:stable},~\ref{assum:bridge},~\ref{assum:reg}, and~\ref{assum:kernel} hold, and assume that the bandwidths satisfy $h_n = c_h n^{-1/5}$ and $b_n = c_b n^{-1/5}$, where $b_n$ is the bandwidth of the bias correction. Additionally, suppose that equation~\eqref{eq:partial} holds almost surely. Then, it follows that,
    \begin{align*}
        \E{\sqrt{nh_{n}}(\hat{\tau}_{\text{pdd}} - \tau_{\text{pdd}})} &\to \mathbf{e}_{+, w}^{\top} \left(\Psi_{+, 1}^{-1} \eta_{+, \mathrm{IV}} -  \Psi_{-, 1}^{-1} \eta_{-, \mathrm{IV}}\right) + \left(\gamma_{+} - \gamma_{-}\right)^{\top} \pmat{\mathbf{e}_{0}^{\top}\Gamma_{+, 1}^{-1} \eta_{+, w_1} \\ \vdots \\ \mathbf{e}_{0}^{\top}\Gamma_{+, 1}^{-1} \eta_{+, w_q}}.
    \end{align*}
    Formulas for $\mathbf{e}_{+, w}, \Psi_{+, 1}, \eta_{+, \mathrm{IV}}, \Gamma_{+, 1},$ and $\eta_{+, w_j}$ are given in Appendix~\ref{sec:appendix_bias}. \hfill$\triangleleft$
\end{proposition}

To address this issue, we extend the robust bias correction technique developed by \citet{calonico_robust_2014}. In particular, we construct bias-corrected estimators that remain consistent and asymptotically normal, even when using MSE-optimal bandwidths. We then define the bias-corrected PDD estimator $\hat{\tau}_{\mathrm{pdd}}^{\mathrm{rbc}}$ as,
\begin{equation}\begin{aligned}\label{eq:pdd_rbc_def}
    \hat{\tau}_{\mathrm{pdd}}^{\mathrm{rbc}} &= \hat{\tau}_{\mathrm{pdd}} - \left(\hat{\mathbf{e}}_{+, w}^{\top} \left(\hat{\Psi}_{+, 1}^{-1}\hat{\eta}_{+, \mathrm{IV}} - \hat{\Psi}_{-, 1}^{-1}\hat{\eta}_{-, \mathrm{IV}}\right) +  \left(\hat{\gamma}_{+, 1} - \hat{\gamma}_{-, 1}\right)\mathbf{e}_{0}^{\top}\hat{\Gamma}_{+, 1}^{-1} \hat{\eta}_{+, w_1} + \cdots + \left(\hat{\gamma}_{+, q} - \hat{\gamma}_{-, q}\right)\mathbf{e}_{0}^{\top}\hat{\Gamma}_{+, 1}^{-1} \hat{\eta}_{+, w_q}\right).
\end{aligned}\end{equation}

The theorem below establishes the asymptotic normality of our estimator.

\begin{theorem}[Limit distribution]\label{thm:limit_dist}
   Let Assumptions~\ref{assum:rdd},~\ref{assum:proxy},~\ref{assum:indep},~\ref{assum:stable},~\ref{assum:bridge},~\ref{assum:reg}, and~\ref{assum:kernel} hold, and assume that the bandwidths satisfy $h_n = c_h n^{-1/5}$ and $b_n = c_b n^{-1/5}$, where $b_n$ is the bandwidth of the bias correction. Additionally, suppose that equation~\eqref{eq:partial} holds almost surely. Then, it follows that,
    \begin{equation*}\begin{aligned}
        \sqrt{nh_{n}} \left(\hat{\tau}_{\mathrm{pdd}}^{\mathrm{rbc}} - \tau_{0}\right) &\indist \Norm{0, V},
    \end{aligned}\end{equation*}
    where $V$ and a consistent estimator for $V$ are given in Appendix~\ref{sec:appendix_bias}.\hfill$\triangleleft$
\end{theorem}

Theorem~\ref{thm:limit_dist} establishes the asymptotic normality of the bias-corrected local instrumental variable estimator under the commonly used MSE-optimal bandwidth rate of $h_n\asymp n^{-1/5}$ \citep{imbens_regression_2008}. The bias correction bandwidth may be of the same order, i.e. $b_n\asymp h_n$. 

In practice, we implement a data-driven bandwidth choice $\hat{h}_n$ along the lines of e.g. \citet{imbens2012optimal}. Future work may develop a data-driven bandwidth algorithm to minimize the length of confidence intervals along the lines of \citet{calonico2018effect,calonico_optimal_2020}.

Theorem~\ref{thm:limit_dist}  justifies the use of standard Wald-type inference for $\hat{\tau}_{\text{pdd}}^{\text{rbc}}$, e.g. the construction of confidence intervals and hypothesis tests.

\section{Discussion}\label{sec:conclusion}

This paper develops a new identification and estimation framework for regression discontinuity designs when the standard continuity of potential outcomes assumption fails due to unobserved confounding. We show that by leveraging a placebo treatment and a placebo outcome, it is possible to recover the causal parameter even when the running variable's distribution exhibits discontinuities at the threshold. Our identification argument relies on conditional continuity given an unobserved confounder, and it employs the concept of a confounding bridge to encode the relationship between the unobserved confounder and its observed proxy. Under a completeness condition, we show that the integrated bridge function is identified from observed data, and the treatment effect at the cutoff can be recovered using integrals of observable variables.

To operationalize this strategy, we propose a local instrumental variable estimator that approximates the confounding bridge using a partially linear specification. We demonstrate that the estimator decomposes into a standard RDD term and an adjustment term driven by the discontinuity in the placebo outcome. This decomposition yields an interpretable adjustment that allows researchers to correct for violations of the standard RDD assumptions, rather than discarding such designs altogether. We establish conditions for consistency and bias-corrected inference, allowing  optimal bandwidths. Our results extend the applicability of RDD methods to settings with strategic behavior that is detected via placebo discontinuities, providing researchers with a tractable and theoretically grounded approach for handling unobserved confounding near the cutoff.


\spacingset{1}
\bibliographystyle{apalike}
\bibliography{main}
\spacingset{1.8}

\newpage
\appendix
\numberwithin{assumption}{section}
\numberwithin{lemma}{section}
\numberwithin{theorem}{section}

\section{Identification proof}\label{sec:id_proof}

\subsection{Necessary independences}

\begin{proof}[Proof of Lemma~\ref{lemma:indep} (necessary independences)]

We proceed in steps.

\begin{enumerate}
    \item First, we show that $A \indep Z \mid D, U$. We write,
\begin{equation*}\begin{aligned}
\p{A=a\mid  D=d,Z=z,U}&=\p{A(d,\eta_a)=a\mid  D=d,Z=z,U}\\
&=\p{A(d,\eta_a)=a\mid  D=d,U},
\end{aligned}\end{equation*}
where the first line follows from Assumption~\ref{assum:proxy}d (exclusion) and the second line follows from Assumption~\ref{assum:proxy}b (placebo selection on unobservables).

To show that $Y \indep Z \mid D, U$, first note that,
\begin{equation*}\begin{aligned}
\p{Y=y\mid  D=d,Z=z,U}&=\p{Y(A,d,U,\eta_y)=y\mid  D=d,Z=z,U}\\
&=\p{Y(A,d,U,\eta_y)=y\mid  D=d,U},
\end{aligned}\end{equation*}
where the first line follows from Assumption~\ref{assum:proxy}d (exclusion) and the second line follows from Assumption~\ref{assum:proxy}b (placebo selection on unobservables) and the result we have already shown: $(A \indep Z \mid D, U)$. 

    \item To show the second claim ($W \indep D, Z \mid U$), we have that,
    \begin{equation*}\begin{aligned}
        \p{W = w \mid D = d, Z = z, U} &= \p{W(U, \eta_{w}) = w \mid D = d, Z = z, U} \\
        &= \p{W(U, \eta_{w}) = w \mid U},
    \end{aligned}\end{equation*}
    where the first line follows from Assumption~\ref{assum:proxy}d (exclusion) and the second line follows from Assumption~\ref{assum:proxy}b (placebo selection on unobservables).
    
    \item Similarly, we prove the last claim ($A\indep U \mid D$): 
    \begin{equation*}\begin{aligned}
    \p{A \mid D=d,U=u} &= \p{A(d,\eta_{a}) \mid D=d,U=u} \\
    &= \p{A(d,\eta_{a}) \mid D=d}.
    \end{aligned}\end{equation*}
    where the first line follows from Assumption~\ref{assum:proxy}d (exclusion) and the second line follows from Assumption~\ref{assum:indep}b ($\eta_{a} \indep U \mid D$). 
\end{enumerate}

\end{proof}

\subsection{Limit}

\begin{proof}[Proof of Lemma~\ref{lemma:limit} (limit)]

    Assume a fuzzy design. We proceed in steps.

    \begin{enumerate}
        \item Notice that under Assumption~\ref{assum:proxy} (placebo variables), we are able to write,
    $$Y = Y(0,D,U,\eta_y) + A(D, \eta_{a})\left(Y(1, D, U, \eta_{y}) - Y(0,D,U,\eta_y)\right).$$
        \item Then, we have that,
    \begin{equation*}\begin{aligned}
        &\E{Y \mid D = d^{*} + \epsilon, U} - \E{Y \mid D = d^{*} - \epsilon, U} \\
        &= \E{Y(0,d^{*} + \epsilon,U,\eta_y) \mid D = d^{*} + \epsilon, U} - \E{Y(0,d^{*} - \epsilon,U,\eta_y) \mid D = d^{*} - \epsilon, U}\\
        &\quad + \E{A(D,\eta_a)\left(Y(1, D , U, \eta_{y}) - Y(0, D ,U,\eta_y)\right) \mid D = d^{*} + \epsilon, U} \\
        &\quad - \E{A(D,\eta_a)\left(Y(1, D, U, \eta_{y}) - Y(0, D ,U,\eta_y)\right) \mid D = d^{*} - \epsilon, U} \qquad \text{(I)}\\
        &= \E{Y(0,d^{*} + \epsilon,U,\eta_y) \mid D = d^{*} + \epsilon, U} - \E{Y(0,d^{*} - \epsilon,U,\eta_y) \mid D = d^{*} - \epsilon, U}\\
        &\quad + \E{A(D, \eta_{a}) \mid D = d^{*} + \epsilon, U} \E{Y(1, D , U, \eta_{y}) - Y(0, D ,U,\eta_y) \mid D = d^{*} + \epsilon, U} \\
        &\quad - \E{A(D, \eta_{a}) \mid D = d^{*} - \epsilon, U} \E{Y(1, D, U, \eta_{y}) - Y(0, D ,U,\eta_y) \mid D = d^{*} - \epsilon, U} \qquad \text{(II)}\\
        &= \E{Y(0,d^{*} + \epsilon,U,\eta_y) \mid D = d^{*} + \epsilon, U} - \E{Y(0,d^{*} - \epsilon,U,\eta_y) \mid D = d^{*} - \epsilon, U}\\
        &\quad + \E{A(D, \eta_{a}) \mid D = d^{*} + \epsilon} \E{Y(1, D , U, \eta_{y}) - Y(0, D ,U,\eta_y) \mid D = d^{*} + \epsilon, U} \\
        &\quad - \E{A(D, \eta_{a}) \mid D = d^{*} - \epsilon} \E{Y(1, D, U, \eta_{y}) - Y(0, D ,U,\eta_y) \mid D = d^{*} - \epsilon, U} \qquad \text{(III)},
    \end{aligned}\end{equation*}
    where (I) follows from the expansion in step 1. The factoring of $A(D, \eta_{a})$ in (II) follows from Assumption~\ref{assum:indep} ($\eta_{a} \indep \eta_{y}, U \mid D = d$) and weak union, whereby $\eta_{a} \indep \eta_{y} \mid D = d, U$. Dropping the conditioning on $U$ in (III) follows from Lemma~\ref{lemma:indep}c ($A \indep U \mid D$).
    \item By Assumptions~\ref{assum:rdd} (RDD) and \ref{assum:stable} (continuity), we can take the limit of the equality above to show,
    \end{enumerate}
    \begin{equation*}\begin{aligned}
    &\lim_{\epsilon \downarrow 0}\left[\E{Y \mid D = d^{*} + \epsilon, U} - \E{Y \mid D = d^{*} - \epsilon, U}\right] \\
    &= \lim_{\epsilon \downarrow 0}\left[\E{Y(0,d^{*} + \epsilon,U,\eta_y) \mid D = d^{*} + \epsilon, U} - \E{Y(0,d^{*} - \epsilon,U,\eta_y) \mid D = d^{*} - \epsilon, U}\right] \\
    &\quad + \lim_{\epsilon \downarrow 0}\left[ \E{A(D, \eta_{a}) \mid D = d^{*} + \epsilon} \right] \lim_{\epsilon \downarrow 0}\left[ \E{Y(1, D , U, \eta_{y}) - Y(0, D ,U,\eta_y) \mid D = d^{*} + \epsilon, U} \right] \\
    &\quad - \lim_{\epsilon \downarrow 0}\left[ \E{A(D, \eta_{a}) \mid D = d^{*} - \epsilon} \right] \lim_{\epsilon \downarrow 0}\left[ \E{Y(1, D , U, \eta_{y}) - Y(0, D ,U,\eta_y) \mid D = d^{*} - \epsilon,U} \right] \\
    &= \lim_{\epsilon \downarrow 0}\left[\E{A(D, \eta_{a}) \mid D = d^{*} + \epsilon} -  \E{A(D, \eta_{a}) \mid D = d^{*} - \epsilon}\right]\\
    &\quad \times \E{Y(1, D , U, \eta_{y}) - Y(0, D ,U,\eta_y) \mid D = d^{*}, U}.
    \end{aligned}\end{equation*}
Specifically, to go from the second line to the third line, the initial terms cancel, and the later terms factorize.
    
    Rearranging the above, we then find that,
    \begin{equation*}\begin{aligned}
    &\E{Y(1, D , U, \eta_{y}) - Y(0, D ,U,\eta_y) \mid D = d^{*}, U} \\
    &= \frac{\lim_{\epsilon \downarrow 0}\left[\E{Y \mid D = d^{*} + \epsilon, U} - \E{Y \mid D = d^{*} - \epsilon, U}\right]}{\lim_{\epsilon \downarrow 0}\left[\E{A \mid D = d^{*} + \epsilon} -  \E{A \mid D = d^{*} - \epsilon}\right]}.
    \end{aligned}\end{equation*}
    \item This implies
    \begin{equation*}\begin{aligned}
     &\E{Y(1, D , U, \eta_{y}) - Y(0, D ,U,\eta_y) \mid D = d^{*}} \\
     &=
     \E{\E{Y(1, D , U, \eta_{y}) - Y(0, D ,U,\eta_y) \mid D = d^{*}, U}\mid D = d^{*} } \\
     &=
     \frac{\E{\lim_{\epsilon \downarrow 0}\left[ \E{Y \mid D = d^{*} + \epsilon, U} - \E{Y \mid D = d^{*} - \epsilon, U}\right] \mid D = d^{*}  }}{\lim_{\epsilon \downarrow 0}\left[ \E{A \mid D = d^{*} + \epsilon} -  \E{A \mid D = d^{*} - \epsilon}  \right] }\\
    \end{aligned}\end{equation*}
    where first equality follows from the law of iterated expectations and the second equality substitutes in our result above.\\

    The argument for the sharp design is the same, recognizing that Assumption~\ref{assum:indep} vacuously holds and simplifying the denominator.
\end{proof}

\subsection{Factuals}

\begin{lemma}[Towards factuals]\label{lemma:for_factuals}
     Lemma~\ref{lemma:indep} implies for all $d\in\mathcal{D}(\epsilon)$ and all $z\in\mathcal{Z}(\epsilon)$,
     \begin{equation*}\begin{aligned}
         \pr{y\mid  d,z} &= \int \pr{y\mid d,u}\dv \pr{u\mid  d,z}\\
         \pr{w\mid  d,z} &=  \int \pr{w\mid d, u} \dv \pr{u\mid  d,z} = \int \pr{w\mid u} \dv \pr{u\mid  d,z}
     \end{aligned}\end{equation*}
\end{lemma}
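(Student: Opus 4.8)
The plan is to prove the two displayed integral identities by conditioning on $U$ and invoking the conditional independences established in Lemma~\ref{lemma:indep}, exactly the same way one derives the classical negative-control ``observed-data'' equation. I would fix $d \in \mathcal{D}(\epsilon)$ and $z \in \mathcal{Z}(\epsilon)$ throughout, so that all the conditional-independence statements of Lemma~\ref{lemma:indep} are available at this $(d,z)$. Since the overlap condition in Assumption~\ref{assum:proxy}c guarantees the relevant densities exist and are well-defined on the neighborhood, all the conditional densities written below make sense.

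For the first identity, I would start from the law of total probability, integrating out $U$:
\begin{equation*}
\pr{y \mid d, z} = \int \pr{y \mid d, z, u}\dv\pr{u \mid d, z}.
\end{equation*}
Then I apply Lemma~\ref{lemma:indep}a, i.e.\ $Y \indep Z \mid D, U$, which gives $\pr{y \mid d, z, u} = \pr{y \mid d, u}$, and substituting yields the claimed formula. For the second identity, I would again begin with
\begin{equation*}
\pr{w \mid d, z} = \int \pr{w \mid d, z, u}\dv\pr{u \mid d, z},
\end{equation*}
and now use Lemma~\ref{lemma:indep}b, namely $W \indep D, Z \mid U$: this simultaneously gives $\pr{w \mid d, z, u} = \pr{w \mid u}$, which establishes the second equality, and (since $\pr{w\mid u} = \pr{w\mid d,u}$ by the same independence) the first equality $\int \pr{w\mid d,u}\dv\pr{u\mid d,z} = \int \pr{w\mid u}\dv\pr{u\mid d,z}$ as well.

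There is no real obstacle here — the lemma is essentially a bookkeeping step that repackages Lemma~\ref{lemma:indep} into the form needed to launch the confounding-bridge argument in Lemma~\ref{lemma:factuals}. The only point requiring a modicum of care is making sure that the conditioning event $\{D=d, Z=z\}$ has positive density so that the disintegration $\pr{\cdot\mid d,z} = \int \pr{\cdot\mid d,z,u}\dv\pr{u\mid d,z}$ is valid; this is exactly what Overlap (Assumption~\ref{assum:proxy}c) together with the assumed strict positivity of $f(u,\eta_y\mid d)$ near $d^*$ delivers, and I would note this explicitly rather than belabor it. If one wanted to be fully rigorous about the measure-theoretic disintegration, one could phrase the whole argument in terms of conditional expectations of bounded test functions of $Y$ (resp.\ $W$) and appeal to the tower property, but for the level of the paper the density-based statement above suffices.
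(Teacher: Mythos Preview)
Your proposal is correct and essentially identical to the paper's proof: the paper also disintegrates $\pr{y\mid d,z}$ and $\pr{w\mid d,z}$ over $U$ via the tower property and then invokes Lemma~\ref{lemma:indep}a ($Y\indep Z\mid D,U$) and Lemma~\ref{lemma:indep}b ($W\indep D,Z\mid U$) to drop the extraneous conditioning. Your remark about overlap guaranteeing well-definedness of the conditional densities is a welcome addition that the paper leaves implicit.
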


\begin{proof}[Proof of Lemma~\ref{lemma:for_factuals} (towards factuals)]

    To lighten notation, we abbreviate $(D=d,Z=z)=(d,z)$ and similarly for other variables.
    Then, we have that,
    \begin{equation*}\begin{aligned}
        \pr{y\mid  d,z}
        &=\int \pr{y,u\mid  d,z}\dv u 
        =\int \pr{y\mid  d,z,u}\dv \pr{u\mid  d,z}
        =\int \pr{y\mid  d,u}\dv \pr{u\mid  d,z} \\
        \pr{w\mid  d,z}
        &= \int  \pr{w,u\mid  d,z}\dv u 
        = \int  \pr{w\mid  d,z,u}\dv \pr{u\mid  d,z}
        =\int  \pr{w\mid d, u} \dv \pr{u\mid  d,z}\\
        &=\int  \pr{w\mid u} \dv \pr{u\mid  d,z},
    \end{aligned}\end{equation*}
    where we use that $Y \indep Z \mid U, D$ and $W \indep D, Z \mid U$  from Lemma~\ref{lemma:indep}.
\end{proof}

\begin{proof}[Proof of Lemma~\ref{lemma:factuals} (factuals I)]
   We have that,
    \begin{equation*}\begin{aligned}
     \E{Y\mid  d,z}
     &=\int y \dv \pr{y\mid  d,z} \\
     &=\int y \dv \pr{y\mid  d,u}\dv \pr{u\mid  d,z} \qquad \text{(I)}\\
         &=\int \E{Y\mid  d,u}\dv \pr{u\mid  d,z} \\
         &=\int h_{0}(d - d^{*},w)\dv \pr{w\mid  d,u}\dv \pr{u\mid  d,z} \qquad \text{(II)}\\
         &=\int h_{0}(d - d^{*}, w)\dv \pr{w\mid  d,z}  \qquad \text{(III)},
    \end{aligned}\end{equation*}
    where (I) and (III) follow from Lemma~\ref{lemma:for_factuals} (towards factuals), and (II) follows from Assumption~\ref{assum:bridge} (confounding bridge).
\end{proof}

\begin{proof}[Proof of Lemma~\ref{lemma:factuals2} (factuals II)]
    Suppose a solution to $
    \E{Y\mid d,z}=\int h_{0}(d - d^{*}, w)\dv \pr{w\mid d,z}$ exists. Then by  Lemma~\ref{lemma:for_factuals} (towards factuals),
    \begin{equation*}\begin{aligned}
         \E{Y\mid d,z}
         &=\int h_{0}(d - d^{*}, w)\dv \pr{w\mid d,z} \\
         &= \int h_{0}(d - d^{*}, w)\dv \pr{w\mid d, u}\dv \pr{u\mid  d,z}.
    \end{aligned}\end{equation*}

    Furthermore, using Lemma~\ref{lemma:for_factuals} (towards factuals), we can also write that,
    \begin{equation*}\begin{aligned}
         \E{Y\mid d,z}
         &=\int y \dv \pr{y\mid  d,z} \\
         &=\int y \dv \pr{y\mid  d, u} \dv \pr{u\mid  d,z} \\
         &=\int \mathbb{E}(Y\mid  d,u)\dv \pr{u\mid  d,z}.
    \end{aligned}\end{equation*}
    Therefore by Assumption~\ref{assum:complete} (completeness), we can equate the objects within the integrals.
\end{proof}

\subsection{Main result}

\begin{proof}[Proof of Theorem~\ref{thm:identify} (placebo identification)]

We proceed in steps.

\begin{enumerate}
    \item First, note that
    \begin{equation*}\begin{aligned}
     \E{\int h_{0}(d - d^{*}, w) \dv \pr{w\mid D = d, U  } \bigmid D = d'} &= \E{\int h_{0}(d - d^{*}, w) \dv \pr{w\mid U  } \bigmid D = d'} \\
     &= \E{\int h_{0}(d - d^{*}, w) \dv \pr{w\mid D=d', U  } \bigmid D = d'} \\
     &= \int h_{0}(d - d^{*}, w)\dv \pr{w \mid D = d'},
    \end{aligned}\end{equation*}
    where the first and second line follow from Lemma~\ref{lemma:indep}b $(W \indep D \mid U)$ and the last line uses the law of iterated expectations.

    \item Also recall from the proof of Lemma~\ref{lemma:limit} (limit) we showed that,
\begin{equation*}\begin{aligned}
        &\lim_{\epsilon \downarrow 0}\left[\E{Y \mid D = d^{*} + \epsilon, U} - \E{Y \mid D = d^{*} - \epsilon, U}\right]\\
        &= \lim_{\epsilon \downarrow 0}\left[\E{A(D, \eta_{a}) \mid D = d^{*} + \epsilon} -  \E{A(D, \eta_{a}) \mid D = d^{*} - \epsilon}\right]\\
        &\quad \times \E{Y(1, D , U, \eta_{y}) - Y(0, D ,U,\eta_y) \mid D = d^{*}, U}.
    \end{aligned}\end{equation*}
The limits on the right hand side exist due to Assumption~\ref{assum:rdd} (RDD). Therefore the limit on the left hand side also exists.

    \item Therefore, we can write the numerator of the expression from Lemma~\ref{lemma:limit} (limit) as,\small
    \begin{equation*}\begin{aligned}
    &\E{ \lim_{\epsilon \downarrow 0} \left\{ \E{Y \mid D = d^{*} + \epsilon, U} -  \E{Y \mid D = d^{*} - \epsilon, U} \bigmid D = d^{*} \right\}} \\
    &= \E{ \lim_{\epsilon \downarrow 0} \left\{\int h_{+}(\epsilon,w)\dv \pr{w\mid D=d^*+\epsilon, U} - \int h_{-}(-\epsilon,w)\dv \pr{w\mid D=d^*-\epsilon, U} \right\} \bigmid D = d^{*} } \\
    &= \lim_{\epsilon \downarrow 0}\E{ \int h_{+}(\epsilon,w)\dv \pr{w\mid D=d^*+\epsilon, U} -   \int h_{-}(-\epsilon,w)\dv \pr{w\mid D=d^*-\epsilon, U} \bigmid D = d^{*} }\\
    &=  \lim_{\epsilon \downarrow 0} \int h_{+}(\epsilon, w) \dv \pr{w \mid D = d^{*}} -  \lim_{\epsilon \downarrow 0} \int  h_{-}(-\epsilon,w)\dv \pr{w\mid D=d^*}
    \end{aligned}\end{equation*}
    \normalsize
    where the first equality follows from Assumption~\ref{assum:bridge} (confounding bridge), the second equality uses the dominated convergence theorem and the boundedness of Assumption~\ref{assum:bridge} (confounding bridge), and the last equality follows from step 1.
    
\item Finally, we argue that the overall expression is unique. To begin, note that $\E{Y \mid d, U}$ is unique. Furthermore, in the proof of Lemma~\ref{lemma:factuals2} (factuals II), we showed that  $\E{Y\mid  d,U}=\int h_{0}(d - d^{*}, w)\dv \mathbb{P}(w\mid d, U) $ almost surely by appealing to Assumption~\ref{assum:complete}, so the latter integral is unique. 
\end{enumerate}
\end{proof}

\section{Relaxing Assumption~\ref{assum:proxy}b}\label{sec:cont}

Assumption~\ref{assum:proxy}b implies the necessary condition $Z\indep \eta_y, \eta_a,\eta_w |D,U$. 

In this appendix, we show that this necessary condition, together with an additional continuity condition, lead to the same main result. Intuitively, we trade off an assumption on placebo exogeneity for an assumption on placebo continuity, i.e. independence for functional form.

Our goal is to define an alternative set of assumptions under which the conclusion of Theorem~\ref{thm:identify} continues to hold, replacing Assumption~\ref{assum:proxy}b with weaker conditions. We summarize the invariances of the full structural system in the next section. A graphical summary of the causal structure is provided in Figure~\ref{dag:relax_w}.

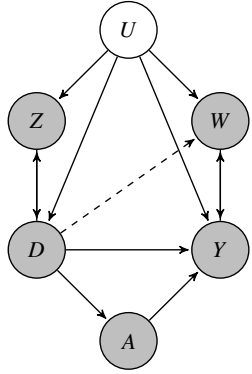
\begin{figure}[htbp]
\begin{center}
\begin{adjustbox}{width=.2\textwidth}
\begin{tikzpicture}[->,>=stealth',shorten >=1pt,auto,node distance=2cm,
                    semithick]
  \tikzstyle{every state}=[draw=black,text=black]

  \node[state]         (u) [fill = white]                        {$U$};
  \node[state]         (w) [below right of=u, fill=lightgray]    {$W$};
  \node[state]         (z) [below left of=u, fill=lightgray]     {$Z$};
  \node[state]         (d) [below of=z, fill=lightgray]             {$D$};
  \node[state]         (y) [below of=w, fill=lightgray]             {$Y$};
  \node[state]         (a) [below right of=d, fill=lightgray]    {$A$};

  \path (u) edge           node {$ $} (z)
            edge           node {$ $} (w)
            edge           node {$ $} (d)
            edge           node {$ $} (y)
        (z) edge           node {$ $} (d)
        (w) edge           node {$ $} (y)
        (d) edge           node {$ $} (y)
            edge           node {$ $} (a)
            edge[dashed]   node {$ $} (w)
            edge           node {$ $} (z)
        (a) edge           node {$ $} (y)
        (y) edge           node {$ $} (w);;
\end{tikzpicture}
\end{adjustbox}
\caption{Additional continuity DAG}
\label{dag:relax_w}
\end{center}
\end{figure}

\subsection{Extension}

\begin{assumption}[Placebo variables]\label{assum_w:proxy}
Assume there exists some $\epsilon>0$ such that for all $d\in \mathcal{D}(\epsilon)$ and for all $z\in \mathcal{Z}(\epsilon)$,
    \begin{enumerate}[label=\alph*]
        \item (Causal consistency): if $A=a$, $D=d$, and $Z=z$ then $Y=Y(a,d,z,U,\eta_y)$, and $W=W(a,d,z,U,\eta_w)$ almost surely. If $D=d$ and $Z=z$ then $A=A(d,z,U,\eta_{A})$ almost surely.
        \item (Placebo selection on unobservables): $Z \indep \eta_{y}, \eta_{a},\eta_w \mid  U, D$. 
        \item (Overlap): if $f(u)>0$ then $f(a,d,z\mid  u)>0$, where we assume the densities exist.
        \item (Exclusion): $Y(a,d,z,U,\eta_y)=Y(a,d,U,\eta_y)$, $W(a,d,z,U,\eta_w)=W(U,\eta_w)$, and $A(d, z, U, \eta_{a}) = A(d, \eta_{a})$ almost surely.
    \end{enumerate}
\end{assumption} 

Assumptions~\ref{assum_w:proxy}a,c,d are the same as those in Assumption~\ref{assum:proxy}. However, Assumption~\ref{assum_w:proxy}b is weaker than Assumption~\ref{assum:proxy}b by weak union. This relaxation makes placebo selection on observables a condition that is exclusively about the placebo treatment.

\begin{lemma}[Necessary independences]\label{lemma_w:indep}
    Assumption~\ref{assum_w:proxy} implies that for all $d \in \mathcal{D}(\epsilon)$ and $z \in \mathcal{Z}(\epsilon)$,
    \begin{enumerate}[label=\alph*]
            \item $Y \indep Z \mid  D, U$
            \item $W \indep  Z \mid D, U$
    \end{enumerate}
    Additionally under Assumption~\ref{assum:indep} it follows that,
        \begin{enumerate}[label=\alph*]
            \setcounter{enumi}{2}
            \item $A \indep U \mid D$
    \end{enumerate}
\end{lemma}

Lemma~\ref{lemma_w:indep}b is weaker than Lemma~\ref{lemma:indep}b by weak union. The rest of the lemma remains the same.

\begin{assumption}[Confounding bridge continuity]\label{assum_w:stable}
    Assume that the mapping \hfill\break $d \mapsto \E{h_{+}(d-d^*,W)\mid D=d, U=u}$ is continuous for all $d \in \mathcal{D}_{+}(\epsilon)$ and all $u$. Similarly, assume that $d \mapsto \E{h_{-}(d-d^*,W) \mid D=d, U=u}$ is continuous for all $d \in \mathcal{D}_{-}(\epsilon)$ and all $u$.
\end{assumption}

This additional continuity condition generalizes the standard RDD continuity condition (Assumption~\ref{assum:stable}). In particular, it replaces the potential outcome function with the confounding bridge function. Intuitively, if confounding is continuous, then we can allow more complex forms of it.

\begin{theorem}[Placebo identification]\label{thm_w:identify}
    Suppose Assumptions~\ref{assum:rdd},~\ref{assum:indep},~\ref{assum:stable},~\ref{assum:bridge},~\ref{assum_w:proxy}, and~\ref{assum_w:stable} hold. Then 
    $$\tau_{0} = \frac{\int h_{+}(0, w) \dv \pr{w \mid D = d^{*}} -  \int  h_{-}(0,w)\dv \pr{w\mid D=d^*}}{\lim_{\epsilon \downarrow 0}\left[\E{A \mid D = d^{*} + \epsilon} -  \E{A \mid D = d^{*} - \epsilon}\right]}$$
    If in addition Assumption~\ref{assum:complete} holds then the integrals of $h_0$ in the numerator are identified. Therefore $\tau_{0}$ is identified. \\
    
    In sharp design, under Assumptions~\ref{assum:rdd},~\ref{assum:stable},~\ref{assum:bridge},~\ref{assum_w:proxy}, and~\ref{assum_w:stable}, then this simplifies to:
    $$\tau_{0} = \lim_{\epsilon \downarrow 0} \int h_{+}(\epsilon, w) \dv \pr{w \mid D = d^{*}} -  \lim_{\epsilon \downarrow 0} \int  h_{-}(-\epsilon,w)\dv \pr{w\mid D=d^*}$$
\end{theorem}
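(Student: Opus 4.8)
The plan is to recycle the proof of Theorem~\ref{thm:identify} essentially line by line, checking at each step that the argument only used consequences of Assumption~\ref{assum:proxy} that are still delivered by the weaker Assumption~\ref{assum_w:proxy}, and isolating the single place where the discarded independence $W\indep D\mid U$ was genuinely needed, which I will then rebuild out of the new continuity hypothesis, Assumption~\ref{assum_w:stable}.

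First I would check that Lemma~\ref{lemma:limit} (limit) and the two ``factuals'' lemmas carry over unchanged. The proof of Lemma~\ref{lemma:limit} never touches $W$ or $Z$: it only uses the exclusion restriction $A=A(D,\eta_a)$, $Y=Y(A,D,U,\eta_y)$ (Assumption~\ref{assum_w:proxy}d), the conditional independences in Assumption~\ref{assum:indep}, the fact $A\indep U\mid D$ (Lemma~\ref{lemma_w:indep}c, proved as before), and Assumptions~\ref{assum:rdd} and~\ref{assum:stable}; hence it still yields $\tau_0 = N/\Delta_A$, with $\Delta_A = \lim_{\epsilon\downarrow 0}[\E{A\mid D=d^*+\epsilon}-\E{A\mid D=d^*-\epsilon}]$ and $N = \E{\lim_{\epsilon\downarrow 0}[\E{Y\mid D=d^*+\epsilon,U}-\E{Y\mid D=d^*-\epsilon,U}]\mid D=d^*}$, the one-sided limits inside existing by Assumption~\ref{assum:stable}. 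Similarly, the proofs of Lemmas~\ref{lemma:factuals} and~\ref{lemma:factuals2} only use $Y\indep Z\mid D,U$ and the identity $\pr{w\mid d,z}=\int \pr{w\mid d,u}\dv\pr{u\mid d,z}$, and the latter follows from the weaker $W\indep Z\mid D,U$ in Lemma~\ref{lemma_w:indep}b without the simplification $\pr{w\mid d,u}=\pr{w\mid u}$; so, under Assumption~\ref{assum:complete}, any $h_0$ that solves the observed-data moment also solves the structural bridge equation and its relevant integrals are pinned down.

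The heart of the matter is the analogue of step~1 in the proof of Theorem~\ref{thm:identify}, where $W\indep D\mid U$ was used to turn $\E{\int h_+(\epsilon,w)\dv\pr{w\mid D=d^*+\epsilon,U}\mid D=d^*}$ into $\int h_+(\epsilon,w)\dv\pr{w\mid D=d^*}$. Here I would argue in the limit instead. By Assumption~\ref{assum:bridge}, $\E{Y\mid D=d^*+\epsilon,U}=\E{h_+(\epsilon,W)\mid D=d^*+\epsilon,U}$ for all small $\epsilon>0$; Assumption~\ref{assum_w:stable} says the map $\epsilon\mapsto\E{h_+(\epsilon,W)\mid D=d^*+\epsilon,U=u}$ is continuous up to $\epsilon=0$, and, since the two sides agree for every $\epsilon>0$, this together with Assumption~\ref{assum:stable} forces $\lim_{\epsilon\downarrow0}\E{Y\mid D=d^*+\epsilon,U=u}=\E{h_+(0,W)\mid D=d^*,U=u}$, where $h_+(0,\cdot)$ is the boundary slice of the bridge; symmetrically for $h_-$. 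Substituting into $N$, applying the tower property over the law of $U$ given $D=d^*$, and using the boundedness of $h_\pm$ (Assumption~\ref{assum:bridge}) with dominated convergence to re-insert the limit, one gets
\begin{equation*}
N = \E{h_+(0,W)\mid D=d^*}-\E{h_-(0,W)\mid D=d^*} = \lim_{\epsilon\downarrow0}\int h_+(\epsilon,w)\dv\pr{w\mid D=d^*}-\lim_{\epsilon\downarrow0}\int h_-(-\epsilon,w)\dv\pr{w\mid D=d^*}.
\end{equation*}
Dividing by $\Delta_A$ delivers the fuzzy-design formula; when Assumption~\ref{assum:complete} also holds, the previous paragraph shows the numerator depends only on the identified integrals of $h_0$, so $\tau_0$ is identified. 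The sharp design is the special case $\Delta_A=1$, in which Assumption~\ref{assum:indep} is vacuous.

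The main obstacle is precisely the step just sketched: relating the ``diagonal'' limit that Assumption~\ref{assum_w:stable} controls, $\lim_{\epsilon\downarrow0}\E{h_+(\epsilon,W)\mid D=d^*+\epsilon,U}$, to the ``fixed-$D$'' object $\E{h_+(0,W)\mid D=d^*,U}$ that surfaces once $\int h_+(\epsilon,w)\dv\pr{w\mid D=d^*}$ is expanded through the conditional law of $U$ given $D=d^*$. In Theorem~\ref{thm:identify} the two conditional laws of $W$ were literally equal because $W\indep D\mid U$, so no reconciliation was needed; here it has to be done through Assumption~\ref{assum_w:stable}, read as continuity up to the boundary of the bridge-based conditional mean, which is exactly the advertised swap of placebo exogeneity for placebo continuity. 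A secondary care point is making sure the boundary slice $h_+(0,\cdot)$ is well defined with $h_+(\epsilon,\cdot)\to h_+(0,\cdot)$ in a dominated-convergence-compatible sense; this is immediate under the partially linear specification~\eqref{eq:partial} and is otherwise part of what Assumption~\ref{assum_w:stable} is implicitly demanding.
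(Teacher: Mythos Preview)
Your proposal is correct and follows essentially the same route as the paper: you verify that Lemma~\ref{lemma:limit} and the factuals lemmas survive under the weaker Lemma~\ref{lemma_w:indep}, then replace the step that previously used $W\indep D\mid U$ by the continuity of $d\mapsto \E{h_\pm(d-d^*,W)\mid D=d,U}$ from Assumption~\ref{assum_w:stable} to pass to the boundary value $\E{h_\pm(0,W)\mid D=d^*,U}$, and finish with the tower property and dominated convergence. The paper's proof is organized identically, including the definition of $h_\pm(0,\cdot)$ as a boundary limit and the appeal to Assumption~\ref{assum:complete} via Lemma~\ref{lemma:factuals2} for identification; your discussion of the ``diagonal versus fixed-$D$'' reconciliation and the boundary-slice caveat makes explicit exactly the points the paper treats more tersely.
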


Theorem~\ref{thm_w:identify} recovers our main identification result (Theorem~\ref{thm:identify}) under an alternative set of assumptions, as desired.

\subsection{Proof of extension}

\begin{proof}[Proof of Lemma~\ref{lemma_w:indep}]
    Compared to the proof of Lemma~\ref{lemma:indep}, the only difference is in the second claim ($W \indep Z \mid D, U$). We have that,
    \begin{equation*}\begin{aligned}
        \p{W = w \mid D = d, Z = z, U} &= \p{W(U, \eta_{w}) = w \mid D = d, Z = z, U} \\
        &= \p{W(U, \eta_{w}) = w \mid D, U},
    \end{aligned}\end{equation*}
    where the first line follows from Assumption~\ref{assum_w:proxy}d (exclusion) and the second line follows from Assumption~\ref{assum_w:proxy}b (placebo selection on unobservables).
\end{proof}

\begin{lemma}[Towards factuals]\label{lemma_w:for_factuals}
     Lemma~\ref{lemma_w:indep} implies for all $d\in\mathcal{D}(\epsilon)$ and all $z\in\mathcal{Z}(\epsilon)$,
     \begin{equation*}\begin{aligned}
         \pr{y\mid  d,z} &= \int \pr{y\mid d,u}\dv \pr{u\mid  d,z}\\
         \pr{w\mid  d,z} &=  \int \pr{w\mid d, u} \dv \pr{u\mid  d,z}.
     \end{aligned}\end{equation*}
\end{lemma}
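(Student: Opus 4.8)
The plan is to mirror the proof of Lemma~\ref{lemma:for_factuals} (towards factuals) almost verbatim, but to invoke the weaker conditional independences of Lemma~\ref{lemma_w:indep} rather than those of Lemma~\ref{lemma:indep}. As in the main text, I would first abbreviate $(D=d,Z=z)=(d,z)$ and similarly for the other variables, and note that the densities exist and $D$ has positive density in a neighborhood of $d^*$, so conditioning on $D=d$ for $d \in \mathcal{D}(\epsilon)$ is well defined. The only structural difference from Lemma~\ref{lemma:for_factuals} is that we will \emph{not} be able to further simplify $\pr{w\mid d,u}$ to $\pr{w\mid u}$; that extra step relied on $W\indep D,Z\mid U$, whereas here Assumption~\ref{assum_w:proxy}b only delivers $W\indep Z\mid D,U$ (Lemma~\ref{lemma_w:indep}b). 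This is by design: the weaker conclusion is exactly what the downstream argument in this appendix is built to use, together with the added continuity of $d\mapsto\E{h_{\pm}(d-d^*,W)\mid D=d,U=u}$ in Assumption~\ref{assum_w:stable}.

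Concretely, for the first claim I would write, via the law of total probability and then the conditional independence $Y\indep Z\mid D,U$ from Lemma~\ref{lemma_w:indep}a,
\begin{equation*}\begin{aligned}
\pr{y\mid d,z} = \int \pr{y,u\mid d,z}\dv u = \int \pr{y\mid d,z,u}\dv \pr{u\mid d,z} = \int \pr{y\mid d,u}\dv \pr{u\mid d,z}.
\end{aligned}\end{equation*}
For the second claim the identical manipulation applies, except that the substitution at the last step uses $W\indep Z\mid D,U$ from Lemma~\ref{lemma_w:indep}b:
\begin{equation*}\begin{aligned}
\pr{w\mid d,z} = \int \pr{w,u\mid d,z}\dv u = \int \pr{w\mid d,z,u}\dv \pr{u\mid d,z} = \int \pr{w\mid d,u}\dv \pr{u\mid d,z}.
\end{aligned}\end{equation*}
This is where the chain stops: there is no analogue of $\pr{w\mid d,u}=\pr{w\mid u}$, so the displayed equalities are the full statement of the lemma.

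There is no real obstacle here — the argument is a routine marginalize-then-substitute computation, and the only thing to be careful about is the usual measure-theoretic bookkeeping when conditioning on the continuous running variable $D$ at a point, which is handled exactly as in the main text (existence of densities, strict positivity of $f(d\mid u)$ near $d^*$, and Assumption~\ref{assum_w:proxy}c for the relevant overlap). The substantive point to flag in the surrounding prose is simply that the weakening of Assumption~\ref{assum:proxy}b to Assumption~\ref{assum_w:proxy}b costs us the reduction $\pr{w\mid d,u}\to\pr{w\mid u}$, which is why the later steps of the extension must lean on the additional continuity hypothesis (Assumption~\ref{assum_w:stable}) rather than on Lemma~\ref{lemma:indep}b.
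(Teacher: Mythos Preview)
Your proposal is correct and matches the paper's approach exactly: the paper's proof of this lemma is simply the one-line remark that ``the argument is identical to Lemma~\ref{lemma:for_factuals}, excluding the final step,'' which is precisely what you do---marginalize over $u$, apply $Y\indep Z\mid D,U$ and $W\indep Z\mid D,U$ from Lemma~\ref{lemma_w:indep}, and stop before the now-unavailable reduction $\pr{w\mid d,u}\to\pr{w\mid u}$. Your commentary on why the chain stops and how Assumption~\ref{assum_w:stable} later compensates is accurate and a bit more explicit than the paper itself.
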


\begin{proof}
    The argument is identical to Lemma~\ref{lemma:for_factuals}, excluding the final step.
\end{proof}

\begin{proof}[Proof of Theorem~\ref{thm_w:identify}]
Notice that Lemma~\ref{lemma:limit} remains unchanged under the conditions of this appendix, since it only appeals to Lemma~\ref{lemma:indep}c, which remains unchanged. Therefore, we proceed in steps similar to the proof of Theorem~\ref{thm:identify}. 

\begin{enumerate}
    \item From the proof of Lemma~\ref{lemma:limit} (limit) we showed that,
    \begin{equation*}\begin{aligned}
            &\lim_{\epsilon \downarrow 0}\left[\E{Y \mid D = d^{*} + \epsilon, U} - \E{Y \mid D = d^{*} - \epsilon, U}\right]\\
            &= \lim_{\epsilon \downarrow 0}\left[\E{A(D, \eta_{a}) \mid D = d^{*} + \epsilon} -  \E{A(D, \eta_{a}) \mid D = d^{*} - \epsilon}\right]\\
            &\quad \times \E{Y(1, D , U, \eta_{y}) - Y(0, D ,U,\eta_y) \mid D = d^{*}, U}.
        \end{aligned}\end{equation*}
    The limits on the right hand side exist due to Assumption~\ref{assum:rdd} (RDD). Therefore the limit on the left hand side also exists.
    
    \item We can write the numerator of the expression from Lemma~\ref{lemma:limit} (limit) as,
    \small
    \begin{equation*}\begin{aligned}
    &\E{ \lim_{\epsilon \downarrow 0} \left\{ \E{Y \mid D = d^{*} + \epsilon, U} - \E{Y \mid D = d^{*} - \epsilon, U} \right\} \bigmid D = d^{*}} \\
    &= \E{ \lim_{\epsilon \downarrow 0} \left\{\int h_{+}(\epsilon,w)\dv \pr{w\mid D=d^*+\epsilon, U}  - \int h_{-}(-\epsilon,w)\dv \pr{w\mid D=d^*-\epsilon, U} \right\} \bigmid D = d^{*} } \\
    &= \E{ \int h_{+}(0,w)\dv \pr{w\mid D=d^*, U} \bigmid D = d^{*}} - \E{ \int h_{-}(0,w)\dv \pr{w\mid D=d^*, U} \bigmid D = d^{*} } \\
    &= \int h_{+}(0,w)\dv \pr{w\mid D=d^*} -\int h_{-}(0,w)\dv \pr{w\mid D=d^*}
    \end{aligned}\end{equation*}
    \normalsize
    where the first equality follows from Assumption~\ref{assum:bridge} (confounding bridge), the second equality uses Assumption~\ref{assum_w:stable}, and the third uses the law of iterated expectations.
    
\item Finally, we argue that the overall expression is unique. To begin, note that $\E{Y \mid d, U}$ is unique. Furthermore, in the proof of Lemma~\ref{lemma:factuals2} (factuals II), we showed that  $\E{Y\mid  d,U}=\int h_{0}(d - d^{*}, w)\dv \mathbb{P}(w\mid d, U) $ almost surely by appealing to Assumption~\ref{assum:complete}, so the latter integral is unique. 
\end{enumerate}
\end{proof}
\section{Relaxing Assumption~\ref{assum:indep}}\label{sec:homog}

Assumption~\ref{assum:indep} holds trivially in the sharp design. In the fuzzy design, however, it imposes a nontrivial restriction on the treatment assignment mechanism.

In this appendix, we show that Assumption~\ref{assum:indep} can be relaxed in the fuzzy design if treatment effects are homogeneous. Intuitively, we trade off a restriction on the treatment mechanism for an assumption on the outcome model, i.e. independence for functional form.

Our goal is to define an alternative set of assumptions under which the conclusion of Lemma~\ref{lemma:limit} continues to hold, replacing Assumption~\ref{assum:indep} with weaker conditions. We summarize the invariances of the full structural system in the next section. A graphical summary of the causal structure is provided in Figure~\ref{dag:nc_homo}.

\begin{figure}[htbp]
\begin{center}
\begin{adjustbox}{width=.2\textwidth}
\begin{tikzpicture}[->,>=stealth',shorten >=1pt,auto,node distance=2cm,
                    semithick]
  \tikzstyle{every state}=[draw=black,text=black]

  \node[state]         (u) [fill = white]                        {$U$};
  \node[state]         (w) [below right of=u, fill=lightgray]    {$W$};
  \node[state]         (z) [below left of=u, fill=lightgray]     {$Z$};
  \node[state]         (d) [below of=z, fill=lightgray]             {$D$};
  \node[state]         (y) [below of=w, fill=lightgray]             {$Y$};
  \node[state]         (a) [below right of=d, fill=lightgray]    {$A$};

  \path (u) edge           node {$ $} (z)
            edge           node {$ $} (w)
            edge           node {$ $} (d)
            edge           node {$ $} (y)
            edge           node {$ $} (a)
        (z) edge           node {$ $} (d)
        (w) edge           node {$ $} (y)
        (d) edge           node {$ $} (y)
            edge           node {$ $} (a)
            edge           node {$ $} (z)
        (a) edge           node {$ $} (y)
        (y) edge           node {$ $} (w);;
\end{tikzpicture}
\end{adjustbox}
\caption{Homogeneous effect DAG}
\label{dag:nc_homo}
\end{center}
\end{figure}
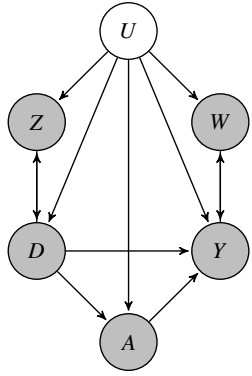

\subsection{Extension}

\begin{assumption}[RDD]\label{assum_h:rdd}
    Assume that the following limits exist,
    $$\lim_{\epsilon \downarrow 0} \left[ \E{A \mid D = d^{*} - \epsilon, U} \right],\quad \lim_{\epsilon \downarrow 0} \left[ \E{A \mid D = d^{*} + \epsilon, U} \right].$$
\end{assumption}

This condition modifies the standard RDD treatment assignment assumption (Assumption~\ref{assum:rdd}). It requires that a discontinuity in expected treatment probability at the cutoff after conditioning on $U$.

\begin{assumption}[Placebo variables]\label{assum_h:proxy}
Assume there exists some $\epsilon>0$ such that for all $d\in \mathcal{D}(\epsilon)$ and for all $z\in \mathcal{Z}(\epsilon)$,
    \begin{enumerate}[label=\alph*]
        \item (Causal consistency): if $A=a$, $D=d$, and $Z=z$ then $Y=Y(a,d,z,U,\eta_y)$, $W=W(a,d,z,U,\eta_w)$. If $D=d$ and $Z=z$ then $A=A(d,z,U,\eta_{A})$ almost surely.
        \item (Selection on unobservables): $Z \indep \eta_{y},  \eta_{a} \mid  U, D$ and $\eta_{w} \indep Z, D \mid U$. 
        \item (Overlap): if $f(u)>0$ then $f(a,d,z\mid  u)>0$, where we assume the densities exist.
        \item (Exclusion): $Y(a,d,z,U,\eta_y)=Y(a,d,U,\eta_y)$, $W(a,d,z,U,\eta_w)=W(U,\eta_w)$, and $A(d, z, U, \eta_{a}) = A(d, U, \eta_{a})$ almost surely.
    \end{enumerate}
\end{assumption}

Assumptions~\ref{assum_h:proxy}a–c are equivalent to those in Assumption~\ref{assum:proxy}. However, Assumption~\ref{assum_h:proxy}d relaxes  Assumption~\ref{assum:proxy}d, allowing for unobserved confounding in the treatment mechanism. Under Assumption~\ref{assum_h:proxy}, locally, our notation simplifies to 
$$
Y=Y(A, D,U,\eta_y),\quad W=W(U,\eta_w),\quad A = A(D, U, \eta_{a}),
$$
and can be summarized by the result below.

\begin{lemma}[Necessary independences]\label{lemma_h:indep}
    Assumption~\ref{assum_h:proxy} implies that for all $d \in \mathcal{D}(\epsilon)$ and $z \in \mathcal{Z}(\epsilon)$, 
    \begin{enumerate}[label = \alph*]
        \item $A, Y \indep Z \mid D, U$
        \item $W \indep D, Z \mid U$
    \end{enumerate}
\end{lemma}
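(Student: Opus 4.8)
The plan is to follow the template of the proof of Lemma~\ref{lemma:indep}: substitute the structural equations into the relevant conditional laws using causal consistency and exclusion, and then invoke placebo selection on unobservables. Two things differ from the setting of Lemma~\ref{lemma:indep}. First, the treatment equation is now $A=A(D,U,\eta_a)$, so $A$ genuinely depends on $U$ and there is no longer an $A\indep U\mid D$ conclusion to establish. Second, Assumption~\ref{assum_h:proxy}b only posits placebo exogeneity conditional on $U$, whereas the target independences also condition on $D$; closing this gap is the one nontrivial step.

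First I would prove part (a) in the joint form $(A,Y)\indep Z\mid D,U$. On the event $\{D=d,Z=z\}$, causal consistency (Assumption~\ref{assum_h:proxy}a) and exclusion (Assumption~\ref{assum_h:proxy}d) let me write $A=A(d,U,\eta_a)$ and $Y=Y(A,d,U,\eta_y)=Y\bigl(A(d,U,\eta_a),d,U,\eta_y\bigr)$, so that $(A,Y)$ is a fixed measurable function of $(d,U,\eta_a,\eta_y)$. Hence it suffices to show $(\eta_a,\eta_y)\indep Z\mid D,U$; pushing this forward through the measurable map then yields $(A,Y)\indep Z\mid D,U$, and in particular $A\indep Z\mid D,U$ and $Y\indep Z\mid D,U$.

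To upgrade the assumed $(\eta_a,\eta_y)\indep Z\mid U$ (Assumption~\ref{assum_h:proxy}b) to $(\eta_a,\eta_y)\indep Z\mid D,U$, I would use that $D$ is a pre-treatment quantity generated from $(U,Z)$ together with exogenous noise that is independent of $(\eta_a,\eta_y)$ given $U$, consistent with the causal structure of Figure~\ref{dag:nc_homo}. Conditioning further on $D$ given $(U,Z)$ therefore only restricts that exogenous noise and leaves the conditional law of $(\eta_a,\eta_y)$ given $(U,Z)$ --- which by assumption already equals its law given $U$ alone --- unchanged; equivalently, $Z$ is d-separated from $\{\eta_a,\eta_y\}$ by $\{D,U\}$ in the graph underlying Figure~\ref{dag:nc_homo}. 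Part (b) is then immediate: on $\{D=d,Z=z\}$, consistency and the exclusion restriction $W(a,d,z,U,\eta_w)=W(U,\eta_w)$ give $W=W(U,\eta_w)$, a measurable function of $(U,\eta_w)$, so that $\eta_w\indep Z,D\mid U$ (Assumption~\ref{assum_h:proxy}b) pushes forward to $W\indep D,Z\mid U$.

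The main obstacle is the single subtle point flagged above --- showing that placebo exogeneity, which is only assumed conditional on $U$, survives the additional conditioning on the running variable $D$ --- which I would dispatch via the graphical (equivalently, contraction/weak-union graphoid) argument just sketched. Everything else is a mechanical substitution of structural equations followed by a direct application of Assumption~\ref{assum_h:proxy}b, exactly paralleling the proof of Lemma~\ref{lemma:indep}. If one instead reads Assumption~\ref{assum_h:proxy}b as already conditioning on $(U,D)$ (matching Assumption~\ref{assum:proxy}b), this step is vacuous and the proof is a verbatim adaptation of Lemma~\ref{lemma:indep} with $A(D,\eta_a)$ replaced by $A(D,U,\eta_a)$ and the $A\indep U\mid D$ clause deleted.
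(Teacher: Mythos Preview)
Your proposal is correct and follows the same approach as the paper's (very terse) proof: substitute the structural equations via consistency and exclusion to write $A=A(d,U,\eta_a)$ and $Y=Y(A(d,U,\eta_a),d,U,\eta_y)$, then invoke selection on unobservables to drop $Z$ from the conditioning set, with part (b) handled exactly as in Lemma~\ref{lemma:indep}. The only difference is that you carefully flag the gap between conditioning on $U$ alone versus $(U,D)$ in Assumption~\ref{assum_h:proxy}b and sketch a graphical argument to close it, whereas the paper applies the independence without comment---in effect silently adopting your final parenthetical reading that the assumption is meant to condition on $(U,D)$ as in Assumption~\ref{assum:proxy}b.
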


\begin{assumption}[Homogeneity]\label{assum_h:homo}
    Under homogenous treatment effects, we assume that for $d \in \mathcal{D}(\epsilon)$ the treatment effect is constant, i.e.
    $$\tau_{0} = Y(1, d, U, \eta_{y}) - Y(0, d, U, \eta_{y})$$
    almost surely.
\end{assumption}

We use Assumption~\ref{assum_h:homo} in place of  Assumption~\ref{assum:indep} in Section~\ref{sec:double_proxy}.We impose that treatment effects are homogeneous across individuals near the cutoff, while allowing treatment assignment to depend on unobserved confounders that affect the potential outcomes.

Together with Assumptions~\ref{assum_h:rdd}, \ref{assum_h:proxy}, and \ref{assum:stable}, this yields the following identification result:

\begin{lemma}[Limit]\label{lemma_h:limit}
    Under Assumptions \ref{assum_h:rdd}, \ref{assum_h:proxy}, \ref{assum_h:homo}, and \ref{assum:stable},
    $$
    \tau_{0} = \frac{\E{\lim_{\epsilon \downarrow 0} \left[\E{Y \mid D = d^{*} + \epsilon, U} - \E{Y \mid D = d^{*} - \epsilon, U} \right]\mid D = d^{*}}}{\E{\lim_{\epsilon \downarrow 0}\left[ \E{A \mid D = d^{*} + \epsilon, U} - \E{A \mid D = d^{*} - \epsilon, U} \right] \mid D = d^{*}}},
    $$
    when $\E{\lim_{\epsilon \downarrow 0}\left[ \E{A \mid D = d^{*} + \epsilon, U} - \E{A \mid D = d^{*} - \epsilon, U} \right] \mid D = d^{*}} \not= 0$. 
\end{lemma}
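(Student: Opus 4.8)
The plan mirrors the proof of Lemma~\ref{lemma:limit}, except that homogeneity (Assumption~\ref{assum_h:homo}) takes over the role played there by Assumption~\ref{assum:indep}, and we never drop the conditioning on $U$. First I would invoke the local simplification of Assumption~\ref{assum_h:proxy} (causal consistency together with exclusion), under which $Y = Y(A,D,U,\eta_y)$ and $A = A(D,U,\eta_a)$ almost surely for $D \in \mathcal{D}(\epsilon)$. Since $A$ is binary, the switching-regression identity gives
\begin{equation*}
Y = Y(0,D,U,\eta_y) + A\left(Y(1,D,U,\eta_y) - Y(0,D,U,\eta_y)\right),
\end{equation*}
and Assumption~\ref{assum_h:homo} collapses the parenthetical term to the scalar constant $\tau_{0}$, so that $Y = Y(0,D,U,\eta_y) + \tau_{0}A$ almost surely near the cutoff.

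Next I would take conditional expectations given $D = d$ and $U$, obtaining $\E{Y \mid D = d, U} = \E{Y(0,d,U,\eta_y) \mid D = d, U} + \tau_{0}\,\E{A \mid D = d, U}$ for every $d \in \mathcal{D}(\epsilon)$. Because $\tau_{0}$ is a scalar constant, pulling it out of the expectation requires no independence assumption --- this is precisely what homogeneity buys us relative to step (II) of the proof of Lemma~\ref{lemma:limit}. Forming the difference of this identity at $d = d^{*} + \epsilon$ and $d = d^{*} - \epsilon$ and sending $\epsilon \downarrow 0$, the $Y(0,\cdot,U,\eta_y)$ terms cancel by the continuity in Assumption~\ref{assum:stable} (applied to the $a = 0$ potential outcome, conditional on $U = u$), while $\lim_{\epsilon \downarrow 0}\left[\E{A \mid D = d^{*} + \epsilon, U} - \E{A \mid D = d^{*} - \epsilon, U}\right]$ exists by Assumption~\ref{assum_h:rdd}. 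Hence, for every $u$,
\begin{equation*}
\lim_{\epsilon \downarrow 0}\left[\E{Y \mid D = d^{*} + \epsilon, U} - \E{Y \mid D = d^{*} - \epsilon, U}\right] = \tau_{0}\lim_{\epsilon \downarrow 0}\left[\E{A \mid D = d^{*} + \epsilon, U} - \E{A \mid D = d^{*} - \epsilon, U}\right].
\end{equation*}

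Finally I would take $\E{\cdot \mid D = d^{*}}$ of both sides; since $\tau_{0}$ is constant it factors out of the outer conditional expectation, so the numerator of the claimed formula equals $\tau_{0}$ times its denominator. Note that, unlike the proof of Lemma~\ref{lemma:limit}, no dominated-convergence exchange of limit and expectation is needed here, since the limit already sits inside the outer conditional expectation. The one point requiring care is the concluding division: I would argue that the denominator $\E{\lim_{\epsilon \downarrow 0}\left[\E{A \mid D = d^{*} + \epsilon, U} - \E{A \mid D = d^{*} - \epsilon, U}\right] \mid D = d^{*}}$ is nonzero. This is where Assumption~\ref{assum_h:rdd} does its work, and it is the main thing to get right: one must read that assumption as delivering, conditional on $U$, a one-sided jump in $\E{A \mid D = d^{*} \pm \epsilon, U}$ of a single sign across values of $u$, so that averaging over $U \mid D = d^{*}$ does not cancel it. Everything else is routine linearity of conditional expectation and manipulation of one-sided limits.
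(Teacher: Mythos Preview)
Your proposal is correct and follows essentially the same approach as the paper's proof: decompose $Y$ via the switching identity, invoke homogeneity to factor out $\tau_0$, difference across $d^*\pm\epsilon$, apply Assumption~\ref{assum:stable} to cancel the baseline terms and Assumption~\ref{assum_h:rdd} for the $A$-limits, then take $\E{\cdot\mid D=d^*}$ and rearrange. Your added remark about the denominator being nonzero is a point the paper's proof simply glosses over.
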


Lemma~\ref{lemma_h:limit} expresses the treatment effect $\tau_0$ as the ratio of two discontinuities, each computed after conditioning on $U$ and then averaging over $U$. This formulation permits identification even when the treatment $(A)$ is confounded by unobservables. 

For estimation, we can apply the same confounding bridge approach from Section~\ref{sec:estimation} to recover the counterfactual term $ \E{ \lim_{\epsilon \downarrow 0}[\E{A \mid D = d^{*} - \epsilon, U}] \mid D = d^{*}}$ by using the observed proxies for $U$. The final estimator will look like a ratio of estimators in Section~\ref{sec:estimation}, using $Y$ in the numerator and $A$ in the denominator.

\subsection{Proof of extension}

\begin{proof}[Proof of Lemma~\ref{lemma_h:indep}]
    To show that $A \indep Z \mid D, U$, we write
    \begin{equation*}\begin{aligned}
    \p{A=a\mid  D=d,Z=z,U}&=\p{A(d,U,\eta_a)=a\mid  D=d,Z=z,U}\\
    &=\p{A(d,U,\eta_a)=a \mid  D=d,U},
    \end{aligned}\end{equation*}
    where the second line follows from Assumption~\ref{assum_h:proxy}b (placebo selection on unobservables).
    Otherwise, the result is the same as Lemma~\ref{lemma:indep}.
\end{proof}

\begin{proof}[Proof of Lemma~\ref{lemma_h:limit}]
    Assume a fuzzy design. We proceed in steps.

    \begin{enumerate}
    
    \item Notice that under Assumption~\ref{assum_h:proxy} (proxy controls), we are able to write,
    $$Y = Y(0,D,U,\eta_y) + A(D, U, \eta_{a})\left(Y(1, D, U, \eta_{y}) - Y(0,D,U,\eta_y)\right).$$

    \item Using the decomposition above, we can write
    \begin{equation*}\begin{aligned}
        &\E{Y \mid D = d^{*} + \epsilon, U} - \E{Y \mid D = d^{*} - \epsilon, U} \\
        &= \E{Y(0,d^{*} + \epsilon,U,\eta_y) \mid D = d^{*} + \epsilon, U} - \E{Y(0,d^{*} - \epsilon,U,\eta_y) \mid D = d^{*} - \epsilon, U}\\
        &\quad + \E{A(d^{*} + \epsilon, U, \eta_{a})\left(Y(1, D , U, \eta_{y}) - Y(0, D ,U,\eta_y)\right) \mid D = d^{*} + \epsilon, U} \\
        &\quad - \E{A(d^{*} - \epsilon, U, \eta_{a})\left(Y(1, D, U, \eta_{y}) - Y(0, D ,U,\eta_y)\right) \mid D = d^{*} - \epsilon, U}\\
        &= \E{Y(0,d^{*} + \epsilon,U,\eta_y) \mid D = d^{*} + \epsilon, U} - \E{Y(0,d^{*} - \epsilon,U,\eta_y) \mid D = d^{*} - \epsilon, U}\\
        &\quad + \tau_{0} \E{A(d^{*} + \epsilon, U, \eta_{a}) \mid D = d^{*} + \epsilon, U} -  \tau_{0} \E{A(d^{*} - \epsilon, U, \eta_{a}) \mid D = d^{*} - \epsilon, U},
    \end{aligned}\end{equation*}
    where we use Assumption~\ref{assum_h:homo} to derive the second equality.

    \item By Assumptions~\ref{assum_h:rdd} (RDD) and \ref{assum:stable} (continuity), we can take the limit of the equality above to show,
    \begin{equation*}\begin{aligned}
        &\lim_{\epsilon \downarrow 0}\left[\E{Y \mid D = d^{*} + \epsilon, U} - \E{Y \mid D = d^{*} - \epsilon, U}\right] \\
        &= \lim_{\epsilon \downarrow 0}\left[\E{Y(0,d^{*} + \epsilon,U,\eta_y) \mid D = d^{*} + \epsilon, U} - \E{Y(0,d^{*} - \epsilon,U,\eta_y) \mid D = d^{*} - \epsilon, U}\right] \\
        &\quad + \tau_{0}\lim_{\epsilon \downarrow 0}\left[ \E{A(D, U, \eta_{a}) \mid D = d^{*} + \epsilon, U} - \E{A(D, U, \eta_{a}) \mid D = d^{*} - \epsilon, U}\right] \\
        &= \tau_{0}\lim_{\epsilon \downarrow 0}\left[ \E{A(D, U, \eta_{a}) \mid D = d^{*} + \epsilon, U} - \E{A(D, U, \eta_{a}) \mid D = d^{*} - \epsilon, U}\right].
    \end{aligned}\end{equation*}
    Specifically, to go from the second line to the third line, the initial terms cancel, and the later terms factorize.
    
  \item Taking conditional expectations of both sides, we have that 
    \begin{equation*}\begin{aligned}
        &\E{\lim_{\epsilon \downarrow 0}\left[\E{Y \mid D = d^{*} + \epsilon, U} - \E{Y \mid D = d^{*} - \epsilon, U}\right] \mid D = d^{*}} \\
        &= \tau_{0} \E{\lim_{\epsilon \downarrow 0}\left[ \E{A(D, U, \eta_{a}) \mid D = d^{*} + \epsilon, U} - \E{A(D, U, \eta_{a}) \mid D = d^{*} - \epsilon, U}\right] \mid D = d^{*}}.
    \end{aligned}\end{equation*}
    Rearranging and gives
    $$
     \tau_{0} = \frac{\E{\lim_{\epsilon \downarrow 0} \left[\E{Y \mid D = d^{*} + \epsilon, U} - \E{Y \mid D = d^{*} - \epsilon, U} \right]\mid D = d^{*}}}{\E{\lim_{\epsilon \downarrow 0}\left[ \E{A \mid D = d^{*} + \epsilon, U} - \E{A \mid D = d^{*} - \epsilon, U} \right] \mid D = d^{*}}}.
    $$
    \end{enumerate}
\end{proof}
\section{Equivalence proof}\label{sec:decomposition}

\subsection{Notation}

Let $\mathbf{e}_{0} = (1, 0)^{\top}$ and $\mathbf{e}_{1} = (0, 1)^{\top}$. Furthermore, let $q=\dim(w)$ and define 
\begin{equation*}\begin{aligned}
    \omega_{i, +} &= \frac{1}{h_{n}}\1{D_{i} \geq d^{*}} \K{\frac{\vert{D_{i} - d^{*}}}{h_{n}}} \in \mathbb{R}\\
    R_{i, 1} &= \left(1, \left(\frac{D_{i} - d^{*}}{h_{n}}\right) \right)^{\top} \in \mathbb{R}^{2}\\
    R_{i, 2} &= \left(1, \left(\frac{D_{i} - d^{*}}{b_{n}}\right), \left(\frac{D_{i} - d^{*}}{b_{n}}\right)^{2} \right)^{\top} \in \mathbb{R}^{3} \\
    \mathbf{H}_{p} &= \diag\left(1, \dots, h_{n}^{p}\right) \in \mathbb{R}^{(p + 1) \times (p + 1)}\\
\end{aligned}\hspace{4em}\begin{aligned}
    \mathbf{K}_{+} &= \diag(\omega_{1,+}, \dots, \omega_{n,+})\in \mathbb{R}^{n \times n}\\
    \mathbf{R}_{p} &= \left(R_{1, p}, \dots, R_{n, p} \right)^{\top} \in \mathbb{R}^{n \times (p + 1)}\\
    \mathbf{Z} &= \left( Z_{1}, \dots,  Z_{n}\right)^{\top} \in \mathbb{R}^{n \times q}\\
    \mathbf{W} &= \left(W_{1}, \dots, W_{n}\right)^{\top} \in \mathbb{R}^{n \times q}.\\
\end{aligned}\end{equation*}

\subsection{Estimator recap}

We estimate $\nu_{+}^{\top} = (\alpha_{+, 0}, \alpha_{+, 1}, \gamma_{+}^{\top})$ by solving a local linear IV objective function with first order conditions,
\begin{equation*}\begin{aligned}
    \frac{1}{n} \sum_{i = 1}^{n}  \omega_{i, +} \begin{bmatrix}
     R_{i, 1} \\  Z_{i}
    \end{bmatrix} \left(Y_{i} -  R_{i, 1}^{\top} \mathbf{H}_{1} \hat{\alpha}_{+} - W_{i}^{\top}\hat{\gamma}_{+} \right) &= \pmb{0}.
\end{aligned}\end{equation*}

Then we have that,
\begin{equation*}\begin{aligned}
    \hat{\nu}_{+} &= \left(\begin{bmatrix}
        \mathbf{R}_{1}^{\top} \\
        \mathbf{Z}^{\top}
    \end{bmatrix}\mathbf{K}_{+} \begin{bmatrix}
        \mathbf{R}_{1} & \mathbf{W}
    \end{bmatrix}\right)^{-1}\left(\begin{bmatrix}
        \mathbf{R}_{1}^{\top} \\
        \mathbf{Z}^{\top}
    \end{bmatrix} \mathbf{K}_{+} \mathbf{Y} \right)
    =\begin{bmatrix}
        \mathbf{R}_{1}^{\top} \mathbf{K}_{+}\mathbf{R}_{1} & \mathbf{R}_{1}^{\top} \mathbf{K}_{+}\mathbf{W} \\
        \mathbf{Z}^{\top} \mathbf{K}_{+} \mathbf{R}_{1} & \mathbf{Z}^{\top} \mathbf{K}_{+} \mathbf{W}
    \end{bmatrix}^{-1} \begin{bmatrix} \mathbf{R}_{1}^{\top} \mathbf{K}_{+} \mathbf{Y} \\ \mathbf{Z}^{\top}\mathbf{K}_{+} \mathbf{Y} \end{bmatrix}.
\end{aligned}\end{equation*}

We estimate $\beta_{+}^{w_{j}}$ using the local linear estimator of \citet{hahn2001identification},
\begin{equation*}\begin{aligned}
    \mathbf{H}_{1}\hat{\beta}_{+}^{w_{j}} &= \argmin_{\beta} \frac{1}{n} \sum_{i = 1}^{n} \omega_{i, +}\left(W_{i,j} - R_{i, 1}^{\top}\beta \right)^{2} = \left(\mathbf{R}_{1}^{\top}\mathbf{K}_{+}\mathbf{R}_{1}\right)^{-1}\mathbf{R}_{1}^{\top}\mathbf{K}_{+}\mathbf{W}_{,j}.
\end{aligned}\end{equation*}
Then, we let 
$\beta_{+, 0}^{w} = \left(\beta_{+, 0}^{w_{1}}, \dots, \beta_{+, 0}^{w_{q}}\right)^{\top} \in \mathbb{R}^{q}$ and $\beta_{+}^{w} = \left(\beta_{+}^{w_{1}}, \dots, \beta_{+}^{w_{q}}\right) \in \mathbb{R}^{2 \times q}$.

\subsection{Main result}

\begin{proof}[Proof of Proposition~\ref{prop:decomp} (estimator decomposition)]

We proceed in steps, using regression algebra. 
\begin{enumerate}
    \item Using the formula for the inverse of a block matrix,
\small
\begin{equation*}\begin{aligned}
    \hat{\nu}_{+} &= \begin{bmatrix}
        \mathbf{R}_{1}^{\top} \mathbf{K}_{+}\mathbf{R}_{1} & \mathbf{R}_{1}^{\top} \mathbf{K}_{+}\mathbf{W} \\
        \mathbf{Z}^{\top} \mathbf{K}_{+} \mathbf{R}_{1} & \mathbf{Z}^{\top} \mathbf{K}_{+} \mathbf{W}
    \end{bmatrix}^{-1} \begin{bmatrix} \mathbf{R}_{1}^{\top} \mathbf{K}_{+} \mathbf{Y} \\ \mathbf{Z}^{\top}\mathbf{K}_{+} \mathbf{Y} \end{bmatrix} \\
    &= \begin{bmatrix}
    (\mathbf{R}_{1}^{\top} \mathbf{K}_{+} \mathbf{R}_{1})^{-1} + (\mathbf{R}_{1}^{\top} \mathbf{K}_{+} \mathbf{R}_{1})^{-1} \mathbf{R}_{1}^{\top} \mathbf{K}_{+} \mathbf{W} \, \mathbf{Q}_+^{-1} \, \mathbf{Z}^{\top} \mathbf{K}_{+} \mathbf{R}_{1} (\mathbf{R}_{1}^{\top} \mathbf{K}_{+} \mathbf{R}_{1})^{-1}
    &
    - (\mathbf{R}_{1}^{\top} \mathbf{K}_{+} \mathbf{R}_{1})^{-1} \mathbf{R}_{1}^{\top} \mathbf{K}_{+} \mathbf{W} \, \mathbf{Q}_+^{-1}
    \\
    - \mathbf{Q}_+^{-1} \mathbf{Z}^{\top} \mathbf{K}_{+} \mathbf{R}_{1} (\mathbf{R}_{1}^{\top} \mathbf{K}_{+} \mathbf{R}_{1})^{-1}
    &
    \mathbf{Q}_+^{-1} \end{bmatrix}  \begin{bmatrix} \mathbf{R}_{1}^{\top} \mathbf{K}_{+} \mathbf{Y} \\ \mathbf{Z}^{\top}\mathbf{K}_{+} \mathbf{Y} \end{bmatrix}
\end{aligned}\end{equation*}
\normalsize
where $\mathbf{Q}_+ =\mathbf{Z}^{\top} \mathbf{K}_{+} \left\{\mathbf{I} - \mathbf{R}_{1} (\mathbf{R}_{1}^{\top} \mathbf{K}_{+} \mathbf{R}_{1})^{-1} \mathbf{R}_{1}^{\top} \mathbf{K}_{+} \right\} \mathbf{W}$ is the Schur complement. Then, it follows that for $\hat{\alpha}_{+} = (\hat{g}_{+}(0), h_{n} \hat{g}_{+}^{(1)}(0))^{\top}$, we have that,
\begin{equation*}\begin{aligned}
   \hat{\alpha}_{+} &= (\mathbf{R}_{1}^{\top} \mathbf{K}_{+} \mathbf{R}_{1})^{-1}\mathbf{R}_{1}^{\top}\mathbf{K}_{+}\mathbf{Y} \\
    &\quad + (\mathbf{R}_{1}^{\top} \mathbf{K}_{+} \mathbf{R}_{1})^{-1} \mathbf{R}_{1}^{\top} \mathbf{K}_{+} \mathbf{W} \, \mathbf{Q}_+^{-1} \, \mathbf{Z}^{\top} \mathbf{K}_{+} \mathbf{R}_{1} (\mathbf{R}_{1}^{\top} \mathbf{K}_{+} \mathbf{R}_{1})^{-1} \mathbf{R}_{1}^{\top}\mathbf{K}_{+}\mathbf{Y} \\
    &\quad - (\mathbf{R}_{1}^{\top} \mathbf{K}_{+} \mathbf{R}_{1})^{-1} \mathbf{R}_{1}^{\top} \mathbf{K}_{+} \mathbf{W} \, \mathbf{Q}_+^{-1} \mathbf{Z}^{\top}\mathbf{K}_{+}\mathbf{Y}\\
    &= \hat{\beta}_{+}^y + \hat{\beta}_{+}^{w} \left\{\mathbf{Q}_+^{-1} \, \mathbf{Z}^{\top} \mathbf{K}_{+} \mathbf{R}_{1} (\mathbf{R}_{1}^{\top} \mathbf{K}_{+} \mathbf{R}_{1})^{-1} \mathbf{R}_{1}^{\top}\mathbf{K}_{+}\mathbf{Y}
    - \, \mathbf{Q}_+^{-1} \mathbf{Z}^{\top}\mathbf{K}_{+}\mathbf{Y}\right\}\\
    &=\hat{\beta}_{+}^y - \hat{\beta}_{+}^{w} \left[\mathbf{Q}_+^{-1} \mathbf{Z}^{\top}\mathbf{K}_{+}\{\mathbf{I} - \mathbf{R}_{1}(\mathbf{R}_{1}^{\top}\mathbf{K}_{+}\mathbf{R}_{1})^{-1}\mathbf{R}_{1}^{\top}\mathbf{K}_{+}\}\mathbf{Y}\right]\\
    &= \hat{\beta}_{+}^{y} - \hat{\beta}_{+}^{w} \hat{\gamma}_{+},
\end{aligned}\end{equation*}
where $\hat{\beta}_{+}^{y}$ and $\hat{\beta}_{+}^{w_{j}}$ solve the local linear objective functions,
\begin{equation*}\begin{aligned}
    \hat{\beta}_{+}^{y} = \argmin_{\beta}  \sum_{i = 1}^{n} \omega_{i, +}\left(Y_{i} - R_{i, 1}^{\top}\beta \right)^{2}, \qquad
    \hat{\beta}_{+}^{w_{j}} = \argmin_{\beta} \sum_{i = 1}^{n} \omega_{i, +}\left(W_{i, j} - R_{i, 1}^{\top}\beta \right)^{2}.
\end{aligned}\end{equation*}

    \item We can go through the same steps to show that $\hat{\alpha}_{-} = \hat{\beta}_{-}^{y} - \hat{\beta}_{-}^{w} \hat{\gamma}_{-}$. 

    \item Plugging these expressions into our estimator for $\hat{\tau}_{\text{pdd}}$ in Equation~\eqref{eq:naive_estimator}, we find that,
\begin{equation*}\begin{aligned}
    \hat{\tau}_{\text{pdd}} &= \hat{\alpha}_{+, 0} + \left(\hat{\beta}_{+, 0}^{w}\right)^{\top} \hat{\gamma}_{+} - \left\{\hat{\alpha}_{-, 0} + \left(\hat{\beta}_{+, 0}^{w}\right)^{\top} \hat{\gamma}_{-}\right\}\\
    &= \hat{\beta}_{+, 0}^{y} - \left(\hat{\beta}_{+, 0}^{w}\right)^{\top} \hat{\gamma}_{+} + \left(\hat{\beta}_{+, 0}^{w}\right)^{\top} \hat{\gamma}_{+} - \left(\hat{\beta}_{-, 0}^{y} - \left(\hat{\beta}_{-, 0}^{w}\right)^{\top} \hat{\gamma}_{-} + \left(\hat{\beta}_{+, 0}^{w}\right)^{\top} \hat{\gamma}_{-}\right)\\
    &= \hat{\beta}_{+, 0}^{y} - \hat{\beta}_{-, 0}^{y} - \left(\hat{\beta}_{+, 0}^{w} -  \hat{\beta}_{-, 0}^{w}\right)^{\top} \hat{\gamma}_{-}\\
    &=\hat{\tau}_{\text{rdd}}^{y} - \left(\hat{\tau}_{\text{rdd}}^{w} \right)^\top \hat{\gamma}_{-}.
\end{aligned}\end{equation*}

\item Finally, we show the claim that $\hat{\gamma}_{-} = \left[\frac{1}{n} \sum_{i = 1}^{n} \omega_{i, -} \{Z_{i} (W_{i}^{\perp})^{\top}\}\right]^{-1} \left\{\frac{1}{n} \sum_{i = 1}^{n}  \omega_{i, -} (Z_{i} Y_{i}^{\perp})\right\}$. Expanding out our expression for $\hat{\gamma}_{-}$, we have that,
\begin{equation*}\begin{aligned}
     \hat{\gamma}_{-}&=\mathbf{Q}_-^{-1} \mathbf{Z}^{\top}\mathbf{K}_{-}\{\mathbf{I} - \mathbf{R}_{1}(\mathbf{R}_{1}^{\top}\mathbf{K}_{-}\mathbf{R}_{1})^{-1}\mathbf{R}_{1}^{\top}\mathbf{K}_{-}\}\mathbf{Y}\\
    &= 
    \left[\mathbf{Z}^{\top}\mathbf{K}_{-}\{\mathbf{I} - \mathbf{R}_{1}(\mathbf{R}_{1}^{\top}\mathbf{K}_{-}\mathbf{R}_{1})^{-1}\mathbf{R}_{1}^{\top}\mathbf{K}_{-}\}\mathbf{W}\right]^{-1} \left[ \mathbf{Z}^{\top}\mathbf{K}_{-}\{\mathbf{I} - \mathbf{R}_{1}(\mathbf{R}_{1}^{\top}\mathbf{K}_{-}\mathbf{R}_{1})^{-1}\mathbf{R}_{1}^{\top}\mathbf{K}_{-}\}\mathbf{Y}\right] \\
    &= \left[\sum_{i = 1}^{n} \omega_{i, -} Z_{i} \left\{W_{i}^{\top} - R_{i, 1}^{\top}(\mathbf{R}_{1}^{\top}\mathbf{K}_{-}\mathbf{R}_{1})^{-1}\mathbf{R}_{1}^{\top}\mathbf{K}_{-}\mathbf{W} \right\} \right]^{-1} 
    \left[\sum_{i = 1}^{n} \omega_{i, -} Z_{i} \left\{Y_{i} - R_{i, 1}^{\top}(\mathbf{R}_{1}^{\top}\mathbf{K}_{-}\mathbf{R}_{1})^{-1}\mathbf{R}_{1}^{\top}\mathbf{K}_{-}\mathbf{Y} \right\} \right]\\
    &= \left[ \frac{1}{n} \sum_{i = 1}^{n} \omega_{i, -} Z_{i} \left\{W_{i}^{\top} - R_{i, 1}^{\top}\left(\frac{1}{n}\sum_{i = 1}^{n} \omega_{i, -} R_{i, 1}R_{i, 1}^{\top}\right)^{-1} \frac{1}{n}\sum_{i = 1}^{n} \omega_{i, -} R_{i, 1} W_{i}^{\top} \right\} \right]^{-1} \\
    &\quad \times \left[ \frac{1}{n} \sum_{i = 1}^{n} \omega_{i, -} Z_{i} \left\{Y_{i} - R_{i, 1}^{\top}\left(\frac{1}{n}\sum_{i = 1}^{n} \omega_{i, -} R_{i, 1}R_{i, 1}^{\top}\right)^{-1} \frac{1}{n}\sum_{i = 1}^{n} \omega_{i, -} R_{i, 1} Y_{i} \right\} \right]\\
    &=\left[\frac{1}{n} \sum_{i = 1}^{n} \omega_{i, -} \{Z_{i} (W_{i}^{\perp})^{\top}\}\right]^{-1} \left\{\frac{1}{n} \sum_{i = 1}^{n}  \omega_{i, -} (Z_{i} Y_{i}^{\perp})\right\}.
\end{aligned}\end{equation*}
\end{enumerate}

\end{proof}
\section{Consistency proof}

\subsection{Main result}

\begin{proof}[Proof of Proposition~\ref{prop:estimand} (estimand)]

We proceed in steps.

\begin{enumerate}
    \item By Proposition~\ref{prop:decomp},
    \begin{equation*}\begin{aligned}
        \hat{\tau}_{\text{pdd}} &= \hat{\beta}_{+, 0}^{y} - \hat{\beta}_{-, 0}^{y} - \left(\hat{\beta}_{+, 0}^{w} -  \hat{\beta}_{-, 0}^{w}\right)^{\top} \hat{\gamma}_{-}.
    \end{aligned}\end{equation*}

    Standard results from e.g. \citet[Section 4.1]{hahn2001identification} show that under our conditions,
    \begin{equation*}\begin{aligned}
        \hat{\beta}_{+, 0}^{y} - \hat{\beta}_{-, 0}^{y} &\inprob \lim_{\epsilon \downarrow 0}\left\{\E{Y \mid D = d^{*} + \epsilon} - \E{Y \mid D = d^{*} - \epsilon} \right\}\\
        \hat{\beta}_{+, 0}^{w} - \hat{\beta}_{-, 0}^{w} &\inprob \lim_{\epsilon \downarrow 0}\left\{\E{W \mid D = d^{*} + \epsilon} - \E{W \mid D = d^{*} - \epsilon} \right\}.
    \end{aligned}\end{equation*}
    Therefore, appealing to the continuous mapping theorem, it suffices to show that $\hat{\gamma}_{-} \inprob \gamma_{-}$. 
    \item Within the proof of Proposition~\ref{prop:decomp}, we have shown that for \hfill\break $\mathbf{Q}_- = \mathbf{Z}^{\top} \mathbf{K}_{-} \left\{\mathbf{I} - \mathbf{R}_{1} (\mathbf{R}_{1}^{\top} \mathbf{K}_{-} \mathbf{R}_{1})^{-1} \mathbf{R}_{1}^{\top} \mathbf{K}_{-} \right\} \mathbf{W}$,
    \begin{equation*}\begin{aligned}
        \hat{\gamma}_{-} &= \mathbf{Q}_-^{-1} \mathbf{Z}^{\top}\mathbf{K}_{-}\{\mathbf{I} - \mathbf{R}_{1}(\mathbf{R}_{1}^{\top}\mathbf{K}_{-}\mathbf{R}_{1})^{-1}\mathbf{R}_{1}^{\top}\mathbf{K}_{-}\}\mathbf{Y}.
    \end{aligned}\end{equation*}
   
    Using Lemma~\ref{lemma:q-lim} ($\mathbf{Q}_-$-limit) and the continuous mapping theorem, we can write that,
    $$n \mathbf{Q}_-^{-1} \inprob \kappa^{-1} \lim_{\epsilon \downarrow 0} \Cov{ Z_{i}, W_{i}^{\top} \mid D = d^{*} - \epsilon}^{-1}, \qquad \kappa = f_{d, +}(d^{*}) \int_{0}^{\infty} K(u) \dv u.$$ 

   By Lemma~\ref{lemma:q-lim} ($\mathbf{Q}_-$-limit), replacing $\mathbf{W}$ with $\mathbf{Y}$, we can analogously show that,
    $$\frac{1}{n} \mathbf{Z}^{\top}\mathbf{K}_{-}\{\mathbf{I} - \mathbf{R}_{1}(\mathbf{R}_{1}^{\top}\mathbf{K}_{-}\mathbf{R}_{1})^{-1}\mathbf{R}_{1}^{\top}\mathbf{K}_{-}\}\mathbf{Y} \inprob \kappa \lim_{\epsilon \downarrow 0} \Cov{ Z_{i}, Y_{i} \mid D_{i} = d^{*} - \epsilon}.$$
    
      Therefore, applying the continuous mapping and dominated convergence theorems, it follows that,
    $$\hat{\gamma}_{-} \inprob \lim_{\epsilon \downarrow 0} \Cov{ Z_{i}, W_{i}^{\top} \mid D = d^{*} - \epsilon}^{-1} \Cov{ Z_{i}, Y_{i} \mid D_{i} = d^{*} - \epsilon}.$$
    
    \item   We verify the claim that $\gamma_{-}$ is the best linear predictor for 
    \begin{equation*}\begin{aligned}
        Y = c + W^{\top} \gamma_{-} + \eta, \qquad \lim_{\epsilon \downarrow 0}  \E{\eta  Z_{i} \mid D_{i} = d^{*} - \epsilon} =0,\qquad  \lim_{\epsilon \downarrow 0} \E{\eta \mid D_{i} = d^{*} - \epsilon} = 0.
    \end{aligned}\end{equation*}
    Substituting for $Y$ in the probability limit gives
    \begin{equation*}\begin{aligned}
        \lim_{\epsilon \downarrow 0} \ &\Cov{ Z_{i}, W_{i}^{\top} \mid D_{i} = d^{*} - \epsilon}^{-1} \Cov{ Z_{i}, Y_{i} \mid D_{i} = d^{*} - \epsilon}\\
        &= \lim_{\epsilon \downarrow 0} \Cov{ Z_{i}, W_{i}^{\top} \mid D_{i} = d^{*} - \epsilon}^{-1}\left\{ \Cov{ Z_{i}, c + W_{i}^{\top}\gamma_{-} + \eta \mid D_{i} = d^{*} - \epsilon}\right\}\\
        &= \lim_{\epsilon \downarrow 0} \Cov{ Z_{i}, W_{i}^{\top} \mid D_{i} = d^{*} - \epsilon}^{-1} \left\{\Cov{ Z_{i}, W_{i}^{\top} \mid D_{i} = d^{*} - \epsilon} \gamma_{-} + \Cov{ Z_{i}, \eta \mid D_{i} = d^{*} - \epsilon}\right\}\\
        &= \gamma_{-}.
    \end{aligned}\end{equation*}
\end{enumerate}
\end{proof}

\begin{proof}[Proof of Corollary~\ref{cor:consistency} (consistency)]

We proceed in steps. We focus only on showing convergence above the cutoff, and convergence below the cutoff follows from the same argument applied to $\alpha_{-, 0}$ and $\gamma_{-}$.\footnote{We will not use cancellation, so that the same argument gives convergence below the cutoff.}  We want to show that,
\begin{equation*}\begin{aligned}
    \hat{\alpha}_{+, 0} + (\hat{\beta}_{+, 0}^{w})^{\top} \hat{\gamma}_{+} &\inprob \lim_{\epsilon \downarrow 0} \E{h_{+}(\epsilon, W) \mid D = d^{*}}.
\end{aligned}\end{equation*}

\begin{enumerate}
\item  We express the limit as a sum.

Substituting in $h_{+}(d - d^{*}, w)$ from Equation~\eqref{eq:partial}, 
\begin{equation*}\begin{aligned}
  \lim_{\epsilon \downarrow 0} \E{h_{+}(\epsilon, W) \mid D = d^{*}} 
    &= g_{+}(0) + \E{W^{\top} \mid D = d^{*}}\gamma_{+}\\
    &= \alpha_{+, 0} + \left(\beta_{+, 0}^{w}\right)^{\top}\gamma_{+}.
\end{aligned}\end{equation*}
    \item It is helpful to characterize a partially linear representation for $Y$. 
    
    By Lemma~\ref{lemma:factuals} (factuals) and Equation~\eqref{eq:partial}, for all $d \in \mathcal{D}_{+}(\epsilon)$ recall that we can write,
    \begin{equation*}\begin{aligned}
        \E{Y - h_{+}(D-d^*, W) \mid D = d, Z = z} &= \E{Y - g_{+}(D_{i} - d^{*}) - W^{\top}\gamma_{+} \mid D = d, Z = z} = 0.
    \end{aligned}\end{equation*}
    Using this moment condition, it follows that for some $\delta \in [0, D - d^{*}]$ we can use a Taylor expansion to represent, 
   \begin{equation*}\begin{aligned}
    Y &= g_{+}(D - d^{*}) + W^{\top}\gamma_{+} + \tilde{\eta} \\
        &= g_{+}(0) + g_{+}^{(1)}(0)(D - d^{*}) + W^{\top}\gamma_{+} + \tilde{\eta} + \frac{1}{2} g_{+}^{(2)}(\delta)(D - d^{*})^{2}\\
        &= \alpha_{+, 0} + \alpha_{+, 1}(D - d^{*}) + W^{\top}\gamma_{+} + \tilde{\eta} + \frac{1}{2} g_{+}^{(2)}(\delta)(D - d^{*})^{2},
    \end{aligned}\end{equation*}
    where $\E{\tilde{\eta} \mid D=d, Z=z} = 0$ for all $d \in \mathcal{D}_{+}(\epsilon)$ and $z \in \mathcal{Z}_{+}(\epsilon)$.

    \item Using the representation in step 2, we now show that $\alpha_{+, 0} = \beta_{+, 0}^{y} - (\beta_{+, 0}^{w})^{\top} \gamma_{+}$. 
    
    Applying the conditional expectation $\E{\cdot \mid Z,D = d^{*} + \epsilon}$  on both sides,
    $$
    \E{Y\mid Z,D = d^{*} + \epsilon}=\alpha_{+, 0} + \alpha_{+, 1}\epsilon + \E{W^{\top} \mid Z ,D = d^{*} + \epsilon }\gamma_{+} + \frac{1}{2} g_{+}^{(2)}(\delta)\epsilon^{2}.
    $$
    Further taking the conditional expectation $\E{\cdot \mid D = d^{*} + \epsilon}$  on both sides and appealing to the law of iterated expectations,
    $$
    \E{Y\mid D = d^{*} + \epsilon}=\alpha_{+, 0} + \alpha_{+, 1}\epsilon + \E{W^{\top} \mid D = d^{*} + \epsilon }\gamma_{+} + \frac{1}{2} g_{+}^{(2)}(\delta)\epsilon^{2}.
    $$
    Finally, taking the limit on both sides, 
    $$
    \lim_{\epsilon \downarrow 0} \E{Y \mid D_{i} = d^{*} + \epsilon} = \alpha_{+, 0} + \lim_{\epsilon \downarrow 0} \E{W_{i}^{\top} \mid D_{i} = d^{*} + \epsilon} \gamma_{+}.
    $$
    Rearranging gives $\alpha_{+, 0} = \beta_{+, 0}^{y} - (\beta_{+, 0}^{w})^{\top}\gamma_{+}$.

  \item Next, we show that $\hat{\gamma}_{+} \inprob \gamma_{+}$. 

    As argued in the proof of Proposition~\ref{prop:estimand} (estimand) we have that,
    \small
    \begin{equation*}\begin{aligned}
        \hat{\gamma}_{+}
        &\inprob \lim_{\epsilon \downarrow 0} \Cov{ Z_{i}, W_{i}^{\top} \mid D = d^{*} + \epsilon}^{-1} \Cov{ Z_{i}, Y_{i} \mid D_{i} = d^{*} + \epsilon}\\
        &= \lim_{\epsilon \downarrow 0} \Cov{ Z_{i}, W_{i}^{\top} \mid D = d^{*} + \epsilon}^{-1} \Cov{ Z_{i},  \alpha_{+, 0} + \alpha_{+, 1}(D_{i} - d^{*}) + W_{i}^{\top}\gamma_{+} + \tilde{\eta} + \frac{1}{2} g_{+}^{(2)}(\delta)(D_{i} - d^{*})^{2} \mid D_{i} = d^{*} + \epsilon}\\
        &= \lim_{\epsilon \downarrow 0} \Cov{ Z_{i}, W_{i}^{\top} \mid D = d^{*} + \epsilon}^{-1} \Cov{ Z_{i}, W_{i}^{\top} \mid D_{i} = d^{*} + \epsilon} \gamma_{+} \\
        &\quad + \lim_{\epsilon \downarrow 0} \Cov{ Z_{i}, W_{i}^{\top} \mid D = d^{*} + \epsilon}^{-1} \Cov{ Z_{i}, \tilde{\eta} \mid D_{i} = d^{*} + \epsilon}\\
        &= \gamma_{+},
    \end{aligned}\end{equation*}
    \normalsize
    where the first equality uses step 2 and the third equality uses
    that 
    \begin{equation*}\begin{aligned}
    \Cov{ Z_{i}, \tilde{\eta} \mid D_{i} = d^{*} + \epsilon} 
    &= \E{Z_{i}\tilde{\eta} \mid D_{i} = d^{*} + \epsilon}-\E{Z_{i} \mid D_{i} = d^{*} + \epsilon}\E{\tilde{\eta} \mid D_{i} = d^{*} + \epsilon} \\
    &= \E{Z_{i} \E{\tilde{\eta} \mid Z_{i}, D_{i} = d^{*} + \epsilon}}
    -\E{Z_{i} \mid D_{i} = d^{*} + \epsilon}\E{\E{\tilde{\eta} \mid Z_i, D_{i} = d^{*} + \epsilon}} \\
    &= 0. \end{aligned}\end{equation*}

    \item  We also show that $\hat{\alpha}_{+, 0} \inprob \alpha_{+, 0}$. 
    
    In the proof of Proposition~\ref{prop:decomp} (estimator decomposition), we showed that $\hat{\alpha}_{+, 0} = \hat{\beta}_{+, 0}^{y} - (\hat{\beta}_{+, 0}^{w})^{\top}\hat{\gamma}_{+}$. In step 3, we showed that $\alpha_{+, 0} = \beta_{+, 0}^{y} - (\beta_{+, 0}^{w})^{\top} \gamma_{+}$. 
    
    Recall that $\hat{\beta}_{+, 0}^{y} \inprob \beta_{+, 0}^{y}$ and $\hat{\beta}_{+, 0}^{w} \inprob \beta_{+, 0}^{w}$ by standard arguments \citep[Section 4.1]{hahn2001identification}. Using that $\hat{\gamma}_{+} \inprob \gamma_{+}$ from step 4, the continuous mapping theorem implies that $\hat{\beta}_{+, 0}^{y} - (\hat{\beta}_{+, 0}^{w})^{\top}\hat{\gamma}_{+} \inprob \beta_{+, 0}^{y} - (\beta_{+, 0}^{w})^{\top} \gamma_{+}$. Therefore, $\hat{\alpha}_{+, 0} \inprob \alpha_{+, 0}$.

    \item Collecting results and applying the continuous mapping theorem we have shown that,
    \begin{equation*}\begin{aligned}
        \hat{\alpha}_{+, 0} + (\hat{\beta}^{w}_{+, 0})^{\top}\hat{\gamma}_{+} &\inprob \alpha_{+, 0} + (\beta_{+, 0}^{w})^{\top} \gamma_{+}\\
        &= \lim_{\epsilon \downarrow 0} \E{h_{+}(\epsilon, W) \mid D = d^{*}}
    \end{aligned}\end{equation*}
    where the convergence uses steps 4 and 5, and the equality uses step 1.
\end{enumerate}

\end{proof}

\subsection{Technical lemma}

\begin{lemma}[$\mathbf{Q}_-$-limit]\label{lemma:q-lim}
    Let Assumptions~\ref{assum:reg}a-c and~\ref{assum:kernel} hold, and assume that $h \rightarrow 0$ and $nh \rightarrow \infty$. Then,
    $$\frac{1}{n} \mathbf{Q}_- \inprob  \kappa  \lim_{\epsilon \downarrow 0} \Cov{ Z_{i}, W_{i}^{\top} \mid D = d^{*} - \epsilon},$$
    for $\kappa = f_{d, -}(d^{*}) \int_{0}^{\infty} K(u) \dv u$.\hfill$\triangleleft$
\end{lemma}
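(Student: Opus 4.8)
The plan is to expand $\tfrac1n\mathbf{Q}_-$ into kernel-weighted sample moments, identify the probability limit of each via a boundary law of large numbers, and then collapse the resulting block expression with a short explicit $2\times2$ computation. Define the sample moments $S_{ZW}=\tfrac1n\sum_i\omega_{i,-}Z_iW_i^\top$, $S_{ZR}=\tfrac1n\sum_i\omega_{i,-}Z_iR_{i,1}^\top$, $S_{RR}=\tfrac1n\sum_i\omega_{i,-}R_{i,1}R_{i,1}^\top$, and $S_{RW}=\tfrac1n\sum_i\omega_{i,-}R_{i,1}W_i^\top$. From the definition of $\mathbf{Q}_-$ we have $\tfrac1n\mathbf{Q}_-=S_{ZW}-S_{ZR}S_{RR}^{-1}S_{RW}$, so --- once the limit of $S_{RR}$ is shown to be invertible --- it suffices by the continuous mapping theorem to compute the four limits separately.

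First I would prove the workhorse fact: for $\psi(V_i)$ equal to $1$, $Z_i$, $W_i^\top$, or $Z_iW_i^\top$ and for $k\in\{0,1,2\}$,
\[
\frac1n\sum_{i=1}^n\omega_{i,-}\Bigl(\frac{D_i-d^*}{h_n}\Bigr)^k\psi(V_i)\;\inprob\;(-1)^k\,f_d(d^*)\,\nu_k\,\lim_{\epsilon\downarrow0}\E{\psi(V_i)\mid D_i=d^*-\epsilon},\qquad \nu_k:=\int_0^\infty u^kK(u)\,du.
\]
The mean is handled by the change of variables $u=(d^*-D_i)/h_n>0$, under which $(D_i-d^*)/h_n=-u$; then $h_n\to0$, the continuity and boundedness of the relevant conditional moments in Assumption~\ref{assum:reg}a, and dominated convergence (with an envelope built from those moment bounds and the kernel-tail bound $\nu_k<\infty$ of Assumption~\ref{assum:kernel}) deliver the stated limit, while the variance is $O(1/(nh_n))\to0$ by the $(2+\zeta)$-moment bounds of Assumption~\ref{assum:reg}c together with $\int_0^\infty u^{2k}K(u)^2\,du<\infty$ (again Assumption~\ref{assum:kernel} and continuity of $K$). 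Write $\mu_Z=\lim_{\epsilon\downarrow0}\E{Z\mid D=d^*-\epsilon}$, $\mu_W=\lim_{\epsilon\downarrow0}\E{W\mid D=d^*-\epsilon}$, $M_{ZW}=\lim_{\epsilon\downarrow0}\E{ZW^\top\mid D=d^*-\epsilon}$, and
\[
\Gamma=\begin{pmatrix}\nu_0 & -\nu_1\\ -\nu_1 & \nu_2\end{pmatrix}.
\]
Applying the fact coordinate by coordinate gives $S_{ZW}\inprob f_d(d^*)\nu_0 M_{ZW}$, $S_{RR}\inprob f_d(d^*)\Gamma$, $S_{ZR}\inprob f_d(d^*)\mu_Z(\nu_0,-\nu_1)$, and $S_{RW}\inprob f_d(d^*)(\nu_0,-\nu_1)^\top\mu_W^\top$; the matrix $\Gamma$ is positive definite (hence invertible) since $\nu_1^2<\nu_0\nu_2$ by Cauchy--Schwarz, so the continuous mapping theorem applies to $S_{RR}^{-1}$.

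Then I would finish with the algebra. The three factors of $f_d(d^*)$ in $S_{ZR}S_{RR}^{-1}S_{RW}$ reduce to one, and using $\Gamma^{-1}=(\nu_0\nu_2-\nu_1^2)^{-1}\bigl(\begin{smallmatrix}\nu_2 & \nu_1\\ \nu_1 & \nu_0\end{smallmatrix}\bigr)$ one checks directly that $(\nu_0,-\nu_1)\,\Gamma^{-1}\,(\nu_0,-\nu_1)^\top=\nu_0$; hence $S_{ZR}S_{RR}^{-1}S_{RW}\inprob f_d(d^*)\nu_0\,\mu_Z\mu_W^\top$ and
\[
\frac1n\mathbf{Q}_-\;\inprob\; f_d(d^*)\nu_0\bigl(M_{ZW}-\mu_Z\mu_W^\top\bigr)=\kappa\bigl(M_{ZW}-\mu_Z\mu_W^\top\bigr).
\]
Finally, continuity and boundedness of the conditional moments (Assumption~\ref{assum:reg}a) give $M_{ZW}-\mu_Z\mu_W^\top=\lim_{\epsilon\downarrow0}\bigl(\E{ZW^\top\mid D=d^*-\epsilon}-\E{Z\mid D=d^*-\epsilon}\E{W^\top\mid D=d^*-\epsilon}\bigr)=\lim_{\epsilon\downarrow0}\Cov{Z_i,W_i^\top\mid D=d^*-\epsilon}$, which is the claim.

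The step I expect to be the main obstacle is the boundary law of large numbers. Making the bias/variance decomposition fully rigorous requires care in constructing an integrable dominating function for dominated convergence out of Assumption~\ref{assum:reg}'s moment bounds and Assumption~\ref{assum:kernel}'s tail conditions, and in checking the $O(1/(nh_n))$ variance bound once the higher powers $u^k$ and the squared kernel $K(u)^2$ enter. Everything downstream --- the block-matrix bookkeeping and the $2\times2$ identity $(\nu_0,-\nu_1)\Gamma^{-1}(\nu_0,-\nu_1)^\top=\nu_0$ --- is routine.
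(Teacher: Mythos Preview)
Your argument is correct and reaches the same conclusion, but the route differs from the paper's in a way worth noting. The paper does not expand $S_{RR}^{-1}$ explicitly; instead it splits $R_{i,1}^\top=\mathbf{e}_0^\top+\tfrac{D_i-d^*}{h_n}\mathbf{e}_1^\top$ and invokes the known local linear consistency results of \citet{hahn2001identification} as a black box, namely $\mathbf{e}_0^\top(\mathbf{R}_1^\top\mathbf{K}_-\mathbf{R}_1)^{-1}\mathbf{R}_1^\top\mathbf{K}_-\mathbf{W}\inprob(\beta_{-,0}^w)^\top$ and $h_n^{-1}\mathbf{e}_1^\top(\cdots)\inprob(\beta_{-,1}^w)^\top$. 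This reduces the problem to two one-dimensional kernel averages---one delivering the covariance limit, the other seen to be $O_p(h_n)$ and hence negligible---avoiding your $\Gamma$ computation entirely. Your approach is more self-contained: you rebuild the boundary LLN from scratch and close with the explicit identity $(\nu_0,-\nu_1)\Gamma^{-1}(\nu_0,-\nu_1)^\top=\nu_0$, which is clean and makes the role of the kernel moments transparent. The paper's version is shorter because it outsources the local linear analysis; yours is more elementary and does not rely on external citations.

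One small caution on the variance step: you cite Assumption~\ref{assum:reg}c for the $O(1/(nh_n))$ bound, but the lemma's hypotheses only include Assumption~\ref{assum:reg}a, and in any case \ref{assum:reg}c covers higher moments of $Y$ and $W_j$ separately, not of products like $Z_iW_i^\top$. The paper handles this by simply asserting ``the weak law of large numbers'' without further detail, so it is equally informal there; but if you want to be precise you should either note that bounded conditional second moments of $Z$, $W$, and $ZW^\top$ are implicitly needed, or point to the triangular-array WLLN with the $\int K(u)^2\,du<\infty$ bound from Assumption~\ref{assum:kernel}.
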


\begin{proof}[Proof of Lemma~\ref{lemma:q-lim}]
   
    We proceed in steps.

\begin{enumerate}

\item  As shown in the proof of Proposition~\ref{prop:decomp},
    $$\mathbf{Q}_- = \mathbf{Z}^{\top} \mathbf{K}_{-} \left\{\mathbf{I} - \mathbf{R}_{1} (\mathbf{R}_{1}^{\top} \mathbf{K}_{-} \mathbf{R}_{1})^{-1} \mathbf{R}_{1}^{\top} \mathbf{K}_{-} \right\} \mathbf{W}.$$  Therefore, we have the decomposition 
    \begin{equation*}\begin{aligned}
        \frac{1}{n}\mathbf{Q}_- &= \frac{1}{n} \sum_{i = 1}^{n} \omega_{i, -}  Z_{i} \left\{W_{i}^{\top} -  R_{i, 1}^{\top} (\mathbf{R}_{1}^{\top} \mathbf{K}_{-} \mathbf{R}_{1})^{-1} \mathbf{R}_{1}^{\top} \mathbf{K}_{-} \mathbf{W} \right\}\\
        &= \frac{1}{n} \sum_{i = 1}^{n} \omega_{i, -}  Z_{i} \left(W_{i}^{\top} -  \mathbf{e}_{0}^{\top}(\mathbf{R}_{1}^{\top} \mathbf{K}_{-} \mathbf{R}_{1})^{-1} \mathbf{R}_{1}^{\top} \mathbf{K}_{-} \mathbf{W}\right) \\
        &\quad - \frac{1}{n} \sum_{i = 1}^{n} \omega_{i, -}  Z_{i} \left\{\left(\frac{D_{i} - d^{*}}{h_{n}} \right)\mathbf{e}^{\top}_{1}(\mathbf{R}_{1}^{\top} \mathbf{K}_{-} \mathbf{R}_{1})^{-1} \mathbf{R}_{1}^{\top} \mathbf{K}_{-} \mathbf{W} \right\}.
    \end{aligned}\end{equation*}
   We analyze each term below.

\item  Under Assumptions~\ref{assum:reg}a and~\ref{assum:kernel}, results from \citet[Section 4.1]{hahn2001identification} imply that 
    \begin{align*}
     &\mathbf{e}_{0}^{\top}(\mathbf{R}_{1}^{\top} \mathbf{K}_{-} \mathbf{R}_{1})^{-1} \mathbf{R}_{1}^{\top} \mathbf{K}_{-} \mathbf{W} \inprob (\beta^w_{-, 0})^{\top}=\lim_{\epsilon \downarrow 0} \E{W_{i}^{\top} \mid D_{i} = d^{*} - \epsilon} \\
        &\frac{1}{h_{n}}\mathbf{e}_{1}^{\top}(\mathbf{R}_{1}^{\top} \mathbf{K}_{-} \mathbf{R}_{1})^{-1} \mathbf{R}_{1}^{\top} \mathbf{K}_{-} \mathbf{W} \inprob (\beta^w_{-, 1})^{\top}.
    \end{align*}

    \item Here, we show that,
    \begin{align*}
    \left(\frac{1}{nh_{n}} \sum_{i = 1}^{n} \1{D_{i} < d^{*}} \K{\frac{D_{i} - d^{*}}{h_{n}}}  Z_{i}(W_{i} -  \beta^w_{-, 0} )^{\top}  \right) &= \frac{1}{h_{n}} \E{\1{D_{i} < d^{*}} \K{\frac{D_{i} - d^{*}}{h_{n}}}  Z_{i}(W_{i} -  \beta^w_{-, 0} )^{\top}} + \op{1}.
    \end{align*}
    
    Set $\mathbf{X}_{i} = \frac{1}{h_{n}}\1{D_{i} < d^{*}} \K{\frac{D_{i} - d^{*}}{h_{n}}} Z_{i}(W_{i} -  \beta^w_{-, 0} )^{\top} \in \R^{q \times q}$. For each component $\ell \in \{(1, 1), \dots, (q, q)\}$, we show $\frac{1}{n}\sum_{i} X_{i, \ell} = \E{X_{i, \ell}} + \op{1}$ by bounding $\Var{\frac{1}{n}\sum_{i}X_{i, \ell}} = \frac{1}{n}\Var{X_{1, \ell}} \leq \frac{1}{n}\norm{X_{1, \ell}}_{L_2}^{2}$ and applying Chebyshev's inequality. Computing the $L_2$ bound, by the law of iterated expectations we have,
        \begin{align*}
            h_{n}\E{X_{i, \ell}^{2}} &= \frac{1}{h_{n}}\E{\1{D_i < d^{*}}\K{\frac{D_{i} - d^{*}}{h_{n}}} Z_{i, k}^{2} \left(W_{i, j} - \beta_{-, 0}^{w_j}\right)^{2}}\\
            &= \int_{-\infty}^{0} \K{u}^{2} f_{d}(uh_{n} + d^{*}) \E{Z_{i, k}^{2} \left(W_{i, j} - \beta_{-, 0}^{w_j} \right)^{2} \mid D = uh_n + d^{*}} \dv u\\
            &\to \rho_{-, z_kw_j}(0) f_{d, -}(d^{*}) \int_{-\infty}^{0} \K{u}^{2} < \infty,
        \end{align*}
        where the squaring is element-wise and the limit/integral swap uses Assumptions~\ref{assum:reg} and~\ref{assum:kernel}. This shows that $\norm{X_{1, \ell}}_{L_2}^{2} = O(1/h_{n})$ and thus $\frac{1}{n}\norm{X_{1, \ell}}_{L_2}^{2} = O\left(\tfrac{1}{nh_{n}}\right) \to 0$. Then from Chebyshev's inequality,
        \begin{align*}
            \pr{\vert{\frac{1}{n} \sum_{i = 1}^{n} (X_{i, \ell} - \E{X_{i, \ell}})} > \epsilon} \leq \frac{1}{\epsilon^{2}} \frac{1}{n} \norm{X_{1, \ell}}_{L_{2}}^{2} \to 0,
        \end{align*}
        and the conclusion follows.

    \item Consider the first term in the decomposition. We will show that
    $$
    \frac{1}{n} \sum_{i = 1}^{n} \omega_{i, -}  Z_{i} \left(W_{i}^{\top} -  \mathbf{e}_{0}^{\top}(\mathbf{R}_{1}^{\top} \mathbf{K}_{-} \mathbf{R}_{1})^{-1} \mathbf{R}_{1}^{\top} \mathbf{K}_{-} \mathbf{W}\right) \inprob  \kappa \lim_{\epsilon \downarrow 0} \Cov{ Z_{i}, W_{i}^{\top} \mid D_{i} = d^{*} - \epsilon}.
    $$
    By steps 2 and 3,
    \begin{equation*}\begin{aligned}
             & \frac{1}{n} \sum_{i = 1}^{n} \omega_{i, -}  Z_{i} \left(W_{i}^{\top} -  \mathbf{e}_{0}^{\top}(\mathbf{R}_{1}^{\top} \mathbf{K}_{-} \mathbf{R}_{1})^{-1} \mathbf{R}_{1}^{\top} \mathbf{K}_{-} \mathbf{W}\right) \\
            &= \frac{1}{n} \sum_{i = 1}^{n} \omega_{i, -}  Z_{i} (W_{i} -  \beta^w_{-, 0} )^{\top} + \op{1} \qquad  \\
            &= \left(\frac{1}{nh_{n}} \sum_{i = 1}^{n} \1{D_{i} < d^{*}} \K{\frac{D_{i} - d^{*}}{h_{n}}}  Z_{i}(W_{i} -  \beta^w_{-, 0} )^{\top}  \right) + \op{1}\\
            &= \frac{1}{h_{n}} \E{\1{D_{i} < d^{*}} \K{\frac{D_{i} - d^{*}}{h_{n}}}  Z_{i}(W_{i} -  \beta^w_{-, 0} )^{\top}} + \op{1}.
        \end{aligned}\end{equation*}
    We use the law of iterated expectations and $u$-substitution with $u = \frac{d - d^{*}}{h_{n}} $ to express the former term as,
    \begin{equation*}\begin{aligned}
        & \frac{1}{h_{n}}\E{\1{D_{i} < d^{*}} \K{\frac{D_{i} - d^{*}}{h_{n}}} \E{ Z_{i}(W_{i} -  \beta^w_{-, 0} )^{\top} \bigmid D_{i}} } \\
        &=\frac{1}{h_{n}}  \int_{-\infty}^{d^*}f_d(d)\1{d< d^{*}} \K{\frac{d- d^{*}}{h_{n}}}  \E{ Z (W  -  \beta^w_{-, 0} )^{\top} \bigmid D=d} \dv d  \\
         &=\frac{1}{h_{n}}  \int_{-\infty}^{0}f_d(d^{*} + uh_{n})\K{-u}  \E{ Z (W  -  \beta^w_{-, 0} )^{\top} \bigmid D=d^{*} + uh_{n}} h_n\dv u  \\
             &=  \int_{-\infty}^{0}f_d(d^{*} + uh_{n})\K{-u}  \E{ Z (W  -  \beta^w_{-, 0} )^{\top} \bigmid D=d^{*} + uh_{n}} \dv u  \\ 
        &= \kappa \lim_{\epsilon \downarrow 0} \Cov{ Z , W ^{\top} \mid D  = d^{*} - \epsilon}+o_p(1).
    \end{aligned}\end{equation*}
    The last line uses the dominated convergence theorem and the boundedness of $\E{Z_{i} W_{i}^{\top} \mid D_{i} = \cdot}$ and $\mu_{-,z}(\cdot)$ from Assumption~\ref{assum:reg}, taking the limit as $h_{n} \rightarrow 0$. Specifically, we argue that
    \begin{align*}
        &\lim_{h_n\downarrow 0} \int_{-\infty}^{0}f_d(d^{*} + uh_{n})\K{-u}  \E{ Z (W  -  \beta^w_{-, 0} )^{\top} \bigmid D=d^{*} + uh_{n}} \dv u \\
        &=  \int_{-\infty}^{0}f_d(d^{*})\K{-u}  \lim_{\epsilon \downarrow 0} \E{ Z (W  -  \beta^w_{-, 0} )^{\top} \bigmid D=d^{*} - \epsilon } \dv u \\
        &=f_d(d^{*}) \lim_{\epsilon \downarrow 0} \E{ Z (W  -  \beta^w_{-, 0} )^{\top} \bigmid D=d^{*} - \epsilon} \int_{-\infty}^{0}\K{-u}   \dv u \\
        &= \kappa \lim_{\epsilon \downarrow 0} \Cov{ Z , W ^{\top} \mid D  = d^{*} - \epsilon}.
    \end{align*}

      \item Consider the second term in the decomposition. We will show that
    $$\frac{1}{n} \sum_{i = 1}^{n} \omega_{i, -} Z_{i} \left\{  \left(\frac{D_{i} - d^{*}}{h_{n}}\right) \mathbf{e}_{1}^{\top}(\mathbf{R}_{1}^{\top} \mathbf{K}_{-} \mathbf{R}_{1})^{-1} \mathbf{R}_{1}^{\top} \mathbf{K}_{-} \mathbf{W} \right\} \inprob 0.$$ 
   By step 2 and an analagous argument as in step 3, 
    \begin{equation*}\begin{aligned}
        &\frac{1}{n} \sum_{i = 1}^{n} \omega_{i, -} Z_{i} \left\{\left(\frac{D_{i} - d^{*}}{h_{n}}\right) \mathbf{e}_{1}^{\top}(\mathbf{R}_{1}^{\top} \mathbf{K}_{-} \mathbf{R}_{1})^{-1} \mathbf{R}_{1}^{\top} \mathbf{K}_{-} \mathbf{W} \right\} \\
         &=\frac{1}{n} \sum_{i = 1}^{n} \omega_{i, -} Z_{i} \left(D_{i} - d^{*}\right) (\beta^w_{-, 1})^{\top} +o_p(1) \\
        &= \frac{1}{n h_{n}} \sum_{i = 1}^{n} \1{D_{i} < d^{*}} \K{\frac{D_{i} - d^{*}}{h_{n}}} Z_{i}  \left(D_{i} - d^{*}\right) (\beta_{-, 1}^{w})^{\top}   + \op{1}\\
        &= \frac{1}{h_{n}} \E{\1{D_{i} < d^{*}} \K{\frac{D_{i} - d^{*}}{h_{n}}} Z_{i}\left(D_{i} - d^{*}\right) (\beta_{-, 1}^{w})^{\top} } + \op{1}.
 \end{aligned}\end{equation*}
 We use the law of iterated expectations and $u$-substitution with $u = \frac{d - d^{*}}{h_{n}} $ to express the former term as
        \begin{equation*}\begin{aligned}
        &\frac{1}{h_{n}} \E{\1{D_{i} < d^{*}} \K{\frac{D_{i} - d^{*}}{h_{n}}} \E{Z_{i} \mid D_{i}}\left(D_{i} - d^{*}\right)  }(\beta_{-, 1}^{w})^{\top} \\
                &= \frac{1}{h_{n}} \int_{-\infty}^{d^*}f_d(d)\1{d< d^{*}} \K{\frac{d- d^{*}}{h_{n}}} \E{Z  \mid D=d}\left(d - d^{*}\right) \dv d \, (\beta_{-, 1}^{w})^{\top}  \\
        &=  \frac{1}{h_{n}} \int_{-\infty}^{0} f_{d}(d^{*} + uh_{n})\K{-u} \, \E{Z \mid D = d^{*} + uh_{n}} \, uh_{n} \,  h_n\dv u \, (\beta_{-, 1}^{w})^{\top}\\
         &=   h_n \int_{-\infty}^{0} f_{d}(d^{*} + uh_{n})\K{-u} \, \E{Z \mid D = d^{*} + uh_{n}} \, u \,  \dv u \, (\beta_{-, 1}^{w})^{\top} \\
        &= \op{1}.
    \end{aligned}\end{equation*}
   The last line uses the dominated convergence theorem and the boundedness of $\mu_{-, z}(\cdot)$ from Assumption~\ref{assum:reg}, taking the limit as $h_{n} \rightarrow 0$. Specifically, we argue that
    \begin{align*}
        \lim_{h_n\downarrow 0}\int_{-\infty}^{0} f_{d}(d^{*} + uh_{n})\K{-u} \, \E{Z \mid D = d^{*} + uh_{n}} \, u \,  \dv u
        &= \int_{-\infty}^{0} f_{d, +}(d^{*})\K{-u} \, \mu_{-, z}(0) \, u \,  \dv u\\
        &=f_{d, -}(d^{*})\mu_{-, z}(0)\int_{-\infty}^{0} \K{-u} \,  \, u \,  \dv u
    \end{align*}
\end{enumerate}

\end{proof}
\section{Bias corrected inference}\label{sec:appendix_bias}

\subsection{Notation}

Here, we compile all the notation used in the appendix. When we introduce a term for the first time, we try to redefine it in the proofs and lemmas in this section.

We write the ratio of the two bandwidths as
\begin{equation*}
    \xi \coloneq \frac{h_{n}}{b_{n}}, \qquad\text{equivalently}\qquad b_{n} = \frac{1}{\xi} h_{n},
\end{equation*}
so that $h_{n}^{p}/b_{n}^{p} = \xi^{p}$ for every $p$. Under the MSE-optimal rates $h_{n} = c_{h} n^{-1/5}$ and $b_{n} = c_{b} n^{-1/5}$, the ratio $\xi = c_{h}/c_{b}$ is a fixed constant.

For the kernel constants, define
\begin{equation*}
    \gamma_{k, j} = \int_{0}^{\infty} u^{k} K(u)^{j} \dv u, \qquad
    \bar{\gamma}_{k, 2}(\xi) = \int_{0}^{\infty} u^{k} K(u) K(\xi u) \dv u.
\end{equation*}
We note that $\bar{\gamma}_{k, 2}(1) = \gamma_{k, 2}$, and when $\xi$ is clear from context we write $\bar{\gamma}_{k, 2} = \bar{\gamma}_{k, 2}(\xi)$.

Throughout, $g_{+}(\cdot)$ and $g_{-}(\cdot)$ denote the outcome regression functions just above and below the cutoff, and $\mu_{+, w_j}(\cdot)$, $\mu_{+, z_j}(\cdot)$ the corresponding conditional-mean functions of $W_{i, j}$ and $Z_{i, j}$; superscript $(m)$ denotes the $m^{\mathrm{th}}$ derivative, evaluated at $0$ for the one-sided limit at the cutoff. We also write $C \coloneq c_{h}^{5/2}$. We let $\mathbf{e}_{k}$ be a conformable matrix of zeros, with a 1 in the $(k + 1)^{\mathrm{th}}$ position. For example, written $\mathbf{e}_{2}^{\top}\Gamma_{+, 2}$, we have $\mathbf{e}_{2} = (0, 0, 1)^{\top}$, and writting $\mathbf{e}_{2}^{\top} \Psi_{+, 2}$, we have $\mathbf{e}_{2} = (0, 0, 1, 0, \dots, 0) \in \R^{3 + q}$.

We collect the basic objects below.
\small\begin{equation*}\begin{aligned}
    f_{d, +}(d^{*}) &\coloneq \lim_{\epsilon \downarrow 0^{+}} f_{d}(d^{*} + \epsilon)\\
    f_{d, -}(d^{*}) &\coloneq \lim_{\epsilon \downarrow 0^{+}} f_{d}(d^{*} - \epsilon)\\
    \mu_{+, w}(0) &\coloneq (\mu_{+, w_{1}}(0), \dots, \mu_{+, w_{q}}(0))^{\top} \in \mathbb{R}^{q} \\ 
    \mu_{+, z}(0) &\coloneq (\mu_{+, z_{1}}(0), \dots, \mu_{+, z_{q}}(0))^{\top} \in \mathbb{R}^{q} \\ 
    \mu_{+, zw^{\top}}(0) &= \pmat{\mu_{+, z_{1}w_{1}}(0) & \cdots & \mu_{+, z_{1}w_{q}}(0)\\ \vdots & \ddots & \vdots \\ \mu_{+, z_{q}w_{1}}(0) & \cdots & \mu_{+, z_{q}w_{q}}(0)} \in \R^{q \times q}
\end{aligned}\hspace{1em}\begin{aligned}
    \hat{\beta}_{+, 0}^{w_j} &\coloneq \mathbf{e}_{0}^{\top} \hat{\Gamma}_{+, 1}^{-1} \left(\frac{1}{n} \sum_{i = 1}^{n} \omega_{+, i} R_{i, 1} W_{i, j}\right) \in \mathbb{R}\\
    \hat{\beta}_{+, 0}^{w} &\coloneq \left(\hat{\beta}_{+, 0}^{w_1}, \dots, \hat{\beta}_{+, 0}^{w_q}\right)^{\top} \in \R^{q}\\
    \mathbf{e}_{+, w} &\coloneq (1, 0, \lim_{\epsilon \to 0^{+}} \E{W_{i}^{\top} \mid D = d^{*} + \epsilon})^{\top} \in \mathbb{R}^{2 + q}\\
    &= (1, 0, \beta_{+, 0}^{w_1}, \dots, \beta_{+, 0}^{w_q})\\
    \mathbf{H}_{1} &\coloneq \diag\left(1, h_n\right) \in \R^{2} \\
    \mathbf{B}_{2} &\coloneq \diag\left(1, b_{n}, b_{n}^{2}\right) \in \mathbb{R}^{3} \\
    \omega_{+, i} &\coloneq \frac{1}{h_{n}} \1{D_{i} \geq d^{*}} \K{\frac{D_{i} - d^{*}}{h_{n}}}\\
    \delta_{+, i} &\coloneq \frac{1}{b_{n}} \1{D_{i} \geq d^{*}} \K{\frac{D_{i} - d^{*}}{b_{n}}}
\end{aligned}\end{equation*}
\normalsize
Here $R_{i, 1} = \left(1, \frac{D_{i} - d^{*}}{h_{n}}\right)^{\top}$ is localized at the bandwidth $h_{n}$, whereas $R_{i, 2} = \left(1, \frac{D_{i} - d^{*}}{b_{n}}, \left(\frac{D_{i} - d^{*}}{b_{n}}\right)^{2}\right)^{\top}$ and $\delta_{+, i}$ are localized at the bandwidth $b_{n}$. For a vector $\mathbf{v}$ we use $\mathbf{v}^{2} = (v_{1}^{2}, \dots, v_{n}^{2})^{\top}$, and $\mathbf{e}_{p}$ denotes a conformable vector of zeros with a $1$ in the $(p + 1)^{\mathrm{th}}$ element.

The $``-"$ analogues of every object below are defined by replacing each $``+"$ subscript, and the associated one-sided limits and weights, with their $``-"$ counterparts. The structural coefficients $\gamma_{+}, \gamma_{-} \in \R^{q}$ on $W_{i}$ (Equation~\eqref{eq:naive_estimator}) are estimated by $\hat{\gamma}_{+}, \hat{\gamma}_{-}$, with components $\gamma_{\pm, j}$. The remaining regression coefficients and their estimators are
\small
\begin{equation*}\begin{aligned}
    \alpha_{+} &\coloneq \mathbf{H}_{1}(g_{+}(0), g_{+}^{(1)}(0))^{\top} = (\alpha_{+, 0}, \alpha_{+, 1})^{\top}\\
    \nu_{+} &\coloneq (\alpha_{+, 0}, \alpha_{+, 1}, \gamma_{+}^{\top})^{\top}, \qquad \hat{\nu}_{+} \coloneq \hat{\Psi}_{+, 1}^{-1}\left(\tfrac{1}{n}\textstyle\sum_{i} \omega_{i, +}\bmat{R_{i, 1}\\Z_i}Y_i\right)\\
    \beta_{+, 0}^{w_j} &\coloneq \mu_{+, w_j}(0), \qquad \beta_{+}^{w_j} \coloneq \mathbf{H}_{1}(\mu_{+, w_j}(0), \mu_{+, w_j}^{(1)}(0))^{\top}\\
    \pi_{+} &\coloneq \mathbf{B}_{2}\left(g_{+}(0), g_{+}^{(1)}(0), \tfrac{1}{2}g_{+}^{(2)}(0)\right)^{\top}\\
    \kappa_{+}^{w_j} &\coloneq \mathbf{B}_{2}\left(\mu_{+, w_j}(0), \mu_{+, w_j}^{(1)}(0), \tfrac{1}{2}\mu_{+, w_j}^{(2)}(0)\right)^{\top}\\
    \hat{V} &\coloneq \hat{\mathbf{s}}_{+}^{\top} \left(\frac{1}{n h_n}\sum_{i} X_{i, +}X_{i, +}^{\top}\right) \hat{\mathbf{s}}_{+} + \hat{\mathbf{s}}_{-}^{\top} \left(\frac{1}{n h_n}\sum_{i} \tilde{X}_{i, -}\tilde{X}_{i, -}^{\top}\right) \hat{\mathbf{s}}_{-}
\end{aligned}\hspace{1.5em}\begin{aligned}
    \mathbf{e}_{2}^{\top}\hat{\pi}_{+} &\coloneq \mathbf{e}_{2}^{\top}\hat{\Psi}_{+, 2}^{-1}\left(\tfrac{1}{n}\textstyle\sum_{i}\delta_{+, i}\bmat{R_{i, 2}\\Z_i}Y_i\right)\\
    \mathbf{e}_{2}^{\top}\hat{\kappa}_{+}^{w_j} &\coloneq \mathbf{e}_{2}^{\top}\hat{\Gamma}_{+, 2}^{-1}\left(\tfrac{1}{n}\textstyle\sum_{i}\delta_{+, i}R_{i, 2}W_{i, j}\right)\\
    \hat{g}_{+}^{(2)}(0) &\coloneq \tfrac{2}{b_n^{2}}\mathbf{e}_{2}^{\top}\hat{\pi}_{+}\\
    \hat{\mu}_{+, w_j}^{(2)}(0) &\coloneq \tfrac{2}{b_n^{2}}\mathbf{e}_{2}^{\top}\hat{\kappa}_{+}^{w_j}\\
    \hat{\mathbf{e}}_{+, w} &\coloneq \left(1, 0, \left(\hat{\beta}_{+, 0}^{w}\right)^{\top}\right)^{\top} \in \R^{2 + q}\\
    V &\coloneq \mathbf{s}_{+}^{\top} \Sigma_{x, +} \mathbf{s}_{+} + \mathbf{s}_{-}^{\top} \Sigma_{x, -} \mathbf{s}_{-}
\end{aligned}\end{equation*}
\normalsize
The bias-correction design matrices and the curvature weights, bias terms, and their estimators are
\small\begin{equation*}\begin{aligned}
        \hat{\Gamma}_{+, 1} &\coloneq \frac{1}{n} \sum_{i = 1}^{n} \omega_{i, +} R_{i, 1}R_{i, 1}^{\top}  \in \mathbb{R}^{2 \times 2}\\
    \hat{\Psi}_{+, 1} &\coloneq \frac{1}{n} \sum_{i = 1}^{n} \omega_{i, +} \bmat{R_{i, 1}R_{i, 1}^{\top} & R_{i, 1} W_{i}^{\top} \\  Z_{i} R_{i, 1}^{\top} & Z_{i} W_{i}^{\top}} \in \mathbb{R}^{(2 + q) \times (2 + q)}\\
    \hat{\Gamma}_{+, 2} &\coloneq \tfrac{1}{n}\textstyle\sum_{i}\delta_{+, i}R_{i, 2}R_{i, 2}^{\top} \in \R^{3 \times 3}\\
    \hat{\Psi}_{+, 2} &\coloneq \tfrac{1}{n}\textstyle\sum_{i}\delta_{+, i}\bmat{R_{i, 2}R_{i, 2}^{\top} & R_{i, 2}W_i^{\top}\\Z_iR_{i, 2}^{\top} & Z_iW_i^{\top}} \in \R^{(3 + q) \times (3 + q)}\\
    \Gamma_{+, 1} &\coloneq f_{d, +}(d^{*})\bmat{\gamma_{0, 1} & \gamma_{1, 1}\\\gamma_{1, 1} & \gamma_{2, 1}} \in \R^{2 \times 2}\\
    \Gamma_{+, 2} &\coloneq f_{d, +}(d^{*})\bmat{\gamma_{0, 1} & \gamma_{1, 1} & \gamma_{2, 1}\\\gamma_{1, 1} & \gamma_{2, 1} & \gamma_{3, 1}\\\gamma_{2, 1} & \gamma_{3, 1} & \gamma_{4, 1}} \in \R^{3 \times 3}\\
\end{aligned}\hspace{1.5em}\begin{aligned}
    \hat{\Omega}_{+} &\coloneq \tfrac{1}{n}\textstyle\sum_{i}\omega_{+, i}\bmat{R_{i, 1}\\Z_i}\left(\tfrac{D_i - d^{*}}{h_n}\right)^{2} \in \R^{2 + q}\\
    \hat{\Lambda}_{+} &\coloneq \tfrac{1}{n}\textstyle\sum_{i}\omega_{+, i}R_{i, 1}\left(\tfrac{D_i - d^{*}}{h_n}\right)^{2} \in \R^{2}\\
    \Omega_{+} &\coloneq f_{d, +}(d^{*})\left(\gamma_{2, 1}, \gamma_{3, 1}, \gamma_{2, 1}\mu_{+, z}(0)^{\top}\right)^{\top} \in \R^{2 + q}\\
    \Lambda_{+} &\coloneq f_{d, +}(d^{*})\left(\gamma_{2, 1}, \gamma_{3, 1}\right)^{\top} \in \R^{2}\\
    \eta_{+, \mathrm{IV}} &\coloneq \tfrac{C}{2}g_{+}^{(2)}(0)\Omega_{+}\\
    \hat{\eta}_{+, \mathrm{IV}} &\coloneq \tfrac{h_n^{2}}{2}\hat{\Omega}_{+}\hat{g}_{+}^{(2)}(0)\\
    \eta_{+, w_j} &\coloneq \tfrac{C}{2}\Lambda_{+}\mu_{+, w_j}^{(2)}(0)\\
    \hat{\eta}_{+, w_j} &\coloneq \tfrac{h_n^{2}}{2}\hat{\Lambda}_{+}\hat{\mu}_{+, w_j}^{(2)}(0)
\end{aligned}\end{equation*}
\normalsize
$\Psi_{+, 1}$ and $\Psi_{+, 2}$ denote the probability limits of $\hat{\Psi}_{+, 1}$ and $\hat{\Psi}_{+, 2}$, with explicit forms in Lemmas~\ref{lem:psi1_converge} and~\ref{lem:psi2_conv}; $\tau_{\mathrm{pdd}}$ is the PDD estimand; $\hat{\tau}_{\mathrm{pdd}}$ is its estimator in Equation~\eqref{eq:naive_estimator}; and $\hat{\tau}_{\mathrm{pdd}}^{\mathrm{rbc}}$ is the bias-corrected estimator in Equation~\eqref{eq:pdd_rbc_def_app}.

The conditional second-moment functions of the residuals that enter Lemma~\ref{lem:pdd_variance} are defined as follows. Writing $\varepsilon_{i, y} = Y_i - g_{+}(D_i - d^{*}) - W_i^{\top}\gamma_{+}$ and $\varepsilon_{i, w_j} = W_{i, j} - \mu_{+, w_j}(D_i - d^{*})$ for the above-cutoff residuals (with $``-"$ analogues replacing $g_{+}, \gamma_{+}, \mu_{+, w_j}$ by $g_{-}, \gamma_{-}, \mu_{-, w_j}$), and taking one-sided limits at the cutoff,
\small\begin{equation*}\begin{aligned}
    \sigma^{2}_{+, y}(0) &\coloneq \lim_{\epsilon \downarrow 0^{+}} \E{\varepsilon_{i, y}^{2} \mid D_i = d^{*} + \epsilon} \in \R\\
    \sigma^{2}_{+, w_j}(0) &\coloneq \lim_{\epsilon \downarrow 0^{+}} \E{\varepsilon_{i, w_j}^{2} \mid D_i = d^{*} + \epsilon} \in \R\\
    \rho_{+, yw_j}(0) &\coloneq \lim_{\epsilon \downarrow 0^{+}} \E{\varepsilon_{i, y}\varepsilon_{i, w_j} \mid D_i = d^{*} + \epsilon} \in \R\\
    \rho_{+, z_kw_j}(0) &\coloneq \lim_{\epsilon \downarrow 0^{+}} \E{Z_{i, k}\varepsilon_{i, w_j} \mid D = d^{*} + \epsilon} \in \R
\end{aligned}\hspace{2em}\begin{aligned}
    \rho_{+, y^{2}z}(0) &\coloneq \lim_{\epsilon \downarrow 0^{+}} \E{\varepsilon_{i, y}^{2} Z_i \mid D_i = d^{*} + \epsilon} \in \R^{q}\\
    \rho_{+, y^{2}zz^{\top}}(0) &\coloneq \lim_{\epsilon \downarrow 0^{+}} \E{\varepsilon_{i, y}^{2} Z_i Z_i^{\top} \mid D_i = d^{*} + \epsilon} \in \R^{q \times q}\\
    \rho_{+, yw_jz}(0) &\coloneq \lim_{\epsilon \downarrow 0^{+}} \E{\varepsilon_{i, y}\varepsilon_{i, w_j} Z_i \mid D_i = d^{*} + \epsilon} \in \R^{q}
\end{aligned}\end{equation*}
\normalsize
Since $\E{\varepsilon_{i, y} \mid D_i, Z_i} = 0$ and $\E{\varepsilon_{i, w_j} \mid D_i} = 0$, these coincide with the corresponding one-sided conditional variances and covariances at the cutoff.

\subsection{Characterizing asymptotic bias}
Let $\hat{\mathbf{e}}_{+, w} = \left(1, 0, \left(\hat{\beta}_{+, 0}^{w}\right)^{\top}\right)^{\top}$ and $\nu_{+} = (\alpha_{+, 0}, \alpha_{+, 1}, \gamma_{+}^{\top})^{\top} \in \mathbb{R}^{2 + q}$, where $\alpha_{+} = \mathbf{H}_{1} (g_{+}(0), g_{+}^{(1)}(0))^{\top}$. Using this notation, we can express our estimator from Equation~\eqref{eq:naive_estimator} as,
\begin{align*}
    \hat{\tau}_{\text{pdd}} - \tau_{\text{pdd}} &= \left\{\hat{\alpha}_{+, 0} + \left(\hat{\beta}_{+, 0}^{w}\right)^{\top} \hat{\gamma}_{+} - \alpha_{+, 0} - \left(\beta_{+, 0}^{w}\right)^{\top}\gamma_{+}\right\} - \left\{\hat{\alpha}_{-, 0} + \left(\hat{\beta}_{+, 0}^{w}\right)^{\top} \hat{\gamma}_{-} - \alpha_{-, 0} - \left(\beta_{+, 0}^{w}\right)^{\top}\gamma_{-}\right\}\\ 
    &= \hat{\mathbf{e}}_{+, w}^{\top} \left\{\hat{\nu}_{+} - \nu_{+}\right\} + \nu_{+}^{\top}\left(\hat{\mathbf{e}}_{+, w} - \mathbf{e}_{+, w}\right)  - \hat{\mathbf{e}}_{+, w}^{\top} \left\{\hat{\nu}_{-} - \nu_{-} \right\} - \nu_{-}^{\top}\left(\hat{\mathbf{e}}_{+, w} - \mathbf{e}_{+, w}\right)\\
    &= \hat{\mathbf{e}}_{+, w}^{\top} \left\{\hat{\nu}_{+} - \nu_{+}\right\} + \gamma_{+}^{\top}\left(\hat{\beta}_{+, 0}^{w} - \beta_{+, 0}^{w} \right)  - \hat{\mathbf{e}}_{+, w}^{\top} \left\{\hat{\nu}_{-} - \nu_{-} \right\} - \gamma_{-}^{\top}\left(\hat{\beta}_{+, 0}^{w} - \beta_{+, 0}^{w} \right)\\
    \hat{\nu}_{+} &\coloneq \hat{\Psi}_{+, 1}^{-1} \left( \frac{1}{n} \sum_{i = 1}^{n} \omega_{i, +} \bmat{R_{i, 1} \\ Z_{i}} Y_{i}\right).
\end{align*}

With these definitions, to characterize the asymptotic bias we decompose the differences $\left(\hat{\nu}_{+} - \nu_{+}\right)$ and $(\hat{\beta}_{+, 0}^{w} - \beta_{+, 0}^{w})^{\top}$ as follows,
\begin{equation}\begin{aligned}\label{eq:nu_decompose}
    \left(\hat{\nu}_{+} - \nu_{+}\right) &=  \hat{\Psi}_{+, 1}^{-1} \left( \frac{1}{n} \sum_{i = 1}^{n} \omega_{i, +} \bmat{R_{i, 1} \\ Z_{i}} (Y_{i} - R_{i, 1}^{\top} \alpha_{+} - W_{i}^{\top} \gamma_{+})\right)\\
    &= \hat{\Psi}_{+, 1}^{-1} \left( \frac{1}{n} \sum_{i = 1}^{n} \omega_{i, +} \bmat{R_{i, 1} \\ Z_{i}} (Y_{i} - g_{+}(D_{i} - d^{*}) - W_{i}^{\top} \gamma_{+})\right) \\
    &\qquad + \hat{\Psi}_{+, 1}^{-1} \left( \frac{1}{n} \sum_{i = 1}^{n} \omega_{i, +} \bmat{R_{i, 1} \\ Z_{i}} (g_{+}(D_{i} - d^{*}) - R_{i, 1}^{\top} \alpha_{+})\right)\\
    \left(\hat{\beta}_{+, 0}^{w} - \beta_{+, 0}^{w}\right)^{\top} &= \mathbf{e}_{0}^{\top} \hat{\Gamma}_{+, 1}^{-1} \left(\frac{1}{n} \sum_{i = 1}^{n} \omega_{+, i} R_{i, 1} \left\{W_{i}^{\top} - \mu_{+, w}(D_{i} - d^{*})^{\top} \right\}\right) \\
    &\qquad + \mathbf{e}_{0}^{\top} \hat{\Gamma}_{+, 1}^{-1} \left(\frac{1}{n} \sum_{i = 1}^{n} \omega_{+, i} R_{i, 1} \left\{\mu_{+, w}(D_{i} - d^{*})^{\top} - R_{i, 1}^{\top} \beta_{+}^{w}\right\}\right)
\end{aligned}\end{equation}

The two lemmas below characterize the limit of the leading terms above at the MSE-optimal rate $h_{n} \asymp c_{h} n^{-1/5}$.

\begin{lemma}[$\hat{\Psi}_{+, 1}$ convergence]\label{lem:psi1_converge}
    Under the conditions of Theorem~\ref{thm:limit_dist}, $\hat{\Psi}_{+, 1} \inprob \Psi_{+, 1}$, where
    \begin{align*}
        \Psi_{+, 1} &\coloneq f_{d, +}(d^{*}) \pmat{\bmat{\gamma_{0, 1} & \gamma_{1, 1}\\ \gamma_{1, 1} & \gamma_{2, 1}} & \bmat{\gamma_{0, 1} \\ \gamma_{1, 1}} \mu_{+, w}(0)^{\top}\\
        \mu_{+, z}(0) \bmat{\gamma_{0, 1} & \gamma_{1, 1}} & \gamma_{0, 1} \mu_{+, zw^{\top}}(0)}.
    \end{align*}\hfill$\triangleleft$
\end{lemma}

\begin{lemma}[$\hat{\Gamma}_{+, 1}$ convergence]\label{lem:Gamma1_converge}
    Under the conditions of Theorem~\ref{thm:limit_dist}, $\hat{\Gamma}_{+, 1} \inprob \Gamma_{+, 1} \coloneq f_{d, +}(d^{*}) \left(\begin{smallmatrix}\gamma_{0, 1} & \gamma_{1, 1}\\ \gamma_{1, 1} & \gamma_{2, 1}\end{smallmatrix}\right)$. \hfill$\triangleleft$
\end{lemma}

\begin{lemma}[IV bias convergence]\label{lem:IV_bias_converge}
    Under the conditions of Theorem~\ref{thm:limit_dist}, for $C = c_{h}^{5/2}$, we have that,
    \begin{align*}
        \sqrt{nh_{n}}\left(\hat{\nu}_{+} - \nu_{+}\right) &= \hat{\Psi}_{+, 1}^{-1} \left( \sqrt{\frac{h_{n}}{n}} \sum_{i = 1}^{n} \omega_{i, +} \bmat{R_{i, 1} \\ Z_{i}} (Y_{i} - g_{+}(D_{i} - d^{*}) - W_{i}^{\top} \gamma_{+})\right) + \Psi_{+, 1}^{-1} \eta_{+, \mathrm{IV}} + \op{1}\\
        \eta_{+, \mathrm{IV}} &\coloneq \frac{C}{2} g_{+}^{(2)}(0)  \Omega_{+}, \qquad
        \Omega_{+} \coloneq f_{d, +}(d^{*}) \left(\gamma_{2, 1}, \; \gamma_{3, 1} , \; \gamma_{2, 1} \mu_{+, z}(0)^{\top}\right)^{\top}.
    \end{align*}\hfill$\triangleleft$
\end{lemma}

\begin{lemma}[Local linear bias convergence]\label{lem:rdd_bias_converge}
    Under the conditions of Theorem~\ref{thm:limit_dist}, for $C = c_{h}^{5/2}$, we have that,
    \begin{align*}
        \sqrt{nh_n}\left(\hat{\beta}_{+, 0}^{w_j} - \beta_{+, 0}^{w_j}\right) &= \mathbf{e}_{0}^{\top} \hat{\Gamma}_{+, 1}^{-1} \left(\sqrt{\frac{h_n}{n}} \sum_{i = 1}^{n} \omega_{+, i} R_{i, 1} \left\{W_{i, j} - \mu_{+, w_j}(D_{i} - d^{*}) \right\}\right) + \mathbf{e}_{0}^{\top} \Gamma_{+, 1}^{-1} \eta_{+, w_j} + \op{1}\\
        \eta_{+, w_j} &\coloneq \frac{C}{2} \Lambda_{+} \mu_{+, w_j}^{(2)}(0), \qquad
        \Lambda_{+} \coloneq f_{d, +}(d^{*}) \left(\gamma_{2, 1}, \gamma_{3, 1}\right)^{\top}.
    \end{align*}\hfill$\triangleleft$
\end{lemma}

\subsection{Bias correction}

To correct the asymptotic bias above, we estimate the bias terms $\eta_{+, \cdot}$ from Proposition~\ref{prop:bias} (PDD bias) and subtract them from $\hat{\tau}_{\text{pdd}}$.

Recall $\eta_{+, \mathrm{IV}} = \frac{C}{2} g_{+}^{(2)}(0) \Omega_{+}$. To estimate $\frac{1}{2} g_{+}^{(2)}(0)$, define $\pi_{+} = \mathbf{B}_{2} \left(g_{+}(0), g_{+}^{(1)}(0), \frac{1}{2} g_{+}^{(2)}(0) \right)^{\top}$ and estimate $\hat{\pi}_{+}$ by solving,
\begin{align*}
    0 &= \frac{1}{n} \sum_{i = 1}^{n} \delta_{+, i} \bmat{R_{i, 2} \\ Z_{i}} \left(Y_{i} - R_{i, 2}^{\top} \hat{\pi}_{+} - W_{i}^{\top} \tilde{\gamma}_{+} \right)
    \;\Rightarrow\; \mathbf{e}_{2}^{\top} \hat{\pi}_{+} = \mathbf{e}_{2}^{\top} \hat{\Psi}_{+, 2}^{-1}\left(\frac{1}{n} \sum_{i = 1}^{n} \delta_{+, i} \bmat{R_{i, 2} \\ Z_{i}} Y_{i} \right),
\end{align*}
where $\hat{\Psi}_{+, 2} \coloneq \frac{1}{n} \sum_{i = 1}^{n} \delta_{+, i} \left(\begin{smallmatrix} R_{i, 2}R_{i, 2}^{\top} & R_{i, 2}W_{i}^{\top} \\ Z_i R_{i, 2}^{\top} & Z_i W_i^{\top}\end{smallmatrix}\right)$. Letting $\mathbf{e}_{2} = (0, 0, 1)^{\top} \in \R^{3}$, we define,
\begin{align*}
    \hat{\eta}_{+, \mathrm{IV}} &\coloneq \frac{h_{n}^{2}}{2} \hat{\Omega}_{+} \hat{g}_{+}^{(2)}(0), \qquad
    \hat{g}_{+}^{(2)}(0) \coloneq \frac{2}{b_{n}^{2}} \mathbf{e}_{2}^{\top} \hat{\pi}_{+}, \qquad
    \hat{\Omega}_{+} \coloneq \frac{1}{n} \sum_{i = 1}^{n} \omega_{+, i} \pmat{R_{i, 1} \\ Z_{i}} \left( \frac{D_{i} - d^{*}}{h_{n}}\right)^{2}.
\end{align*}

For $\eta_{+, w_j} = \frac{C}{2} \Lambda_{+} \mu_{+, w_j}^{(2)}(0) $, we apply the same correction as \cite{calonico_robust_2014}. For $\kappa_{+}^{w_{j}} = \mathbf{B}_{2} \left(\mu_{+, w_{j}}(0), \mu_{+, w_{j}}^{(1)}(0), \frac{1}{2} \mu_{+, w_{j}}^{(2)}(0) \right)^{\top}$, estimate $\hat{\kappa}_{+}^{w_{j}}$ by solving,
\begin{align*}
    0 &= \frac{1}{n} \sum_{i = 1}^{n} \delta_{+, i} R_{i, 2}\left(W_{i, j} - R_{i, 2}^{\top} \hat{\kappa}_{+}^{w_{j}} \right)
    \;\Rightarrow\; \mathbf{e}_{2}^{\top} \hat{\kappa}_{+}^{w_{j}} = \mathbf{e}_{2}^{\top} \hat{\Gamma}_{+, 2}^{-1} \left(\frac{1}{n} \sum_{i = 1}^{n} \delta_{+, i} R_{i, 2} W_{i, j} \right),
\end{align*}
where $\hat{\Gamma}_{+, 2} \coloneq \frac{1}{n} \sum_{i = 1}^{n} \delta_{+, i} R_{i, 2} R_{i, 2}^{\top}$. We then define,
\begin{align*}
    \hat{\eta}_{+, w_{j}} &\coloneq \frac{h_{n}^{2}}{2} \hat{\Lambda}_{+} \hat{\mu}_{+, w_{j}}^{(2)}(0), \qquad
    \hat{\mu}_{+, w_{j}}^{(2)}(0) \coloneq \frac{2}{b_{n}^{2}} \mathbf{e}_{2}^{\top} \hat{\kappa}_{+}^{w_{j}}, \qquad
    \hat{\Lambda}_{+} \coloneq \frac{1}{n} \sum_{i = 1}^{n} \omega_{+, i} R_{i, 1} \left(\frac{D_i - d^{*}}{h_{n}} \right)^{2}.
\end{align*}

The lemmas below characterize the limits of the estimators above.

\begin{lemma}[$\Lambda$-convergence]\label{lem:lambda_conv}
    Under the conditions of Theorem~\ref{thm:limit_dist}, we have that $\hat{\Lambda}_{+} \inprob \Lambda_{+}$, where,
    \begin{align*}
        \hat{\Lambda}_{+} &\coloneq \frac{1}{n} \sum_{i = 1}^{n} \omega_{+, i} R_{i, 1} \left(\frac{D_i - d^{*}}{h_{n}} \right)^{2}, \qquad
        \Lambda_{+} \coloneq f_{d, +}(d^{*}) \left(\gamma_{2, 1}, \; \gamma_{3, 1}\right)^{\top}.
    \end{align*}\hfill$\triangleleft$
\end{lemma}

\begin{lemma}[$\Omega$-convergence]\label{lem:omega_conv}
    Under the conditions of Theorem~\ref{thm:limit_dist}, we have that $\hat{\Omega}_{+} \inprob \Omega_{+}$, where,
    \begin{align*}
        \hat{\Omega}_{+} &\coloneq \frac{1}{n} \sum_{i = 1}^{n} \omega_{+, i} \pmat{R_{i, 1} \\ Z_{i}} \left( \frac{D_{i} - d^{*}}{h_{n}}\right)^{2}, \qquad \Omega_{+} \coloneq f_{d, +}(d^{*}) \left(\gamma_{2, 1}, \; \gamma_{3, 1} , \; \gamma_{2, 1} \mu_{+, z}(0)^{\top}\right)^{\top}.
    \end{align*}\hfill$\triangleleft$
\end{lemma}

\begin{lemma}[$\Gamma_{+, 2}$-convergence]\label{lem:Gamma2_conv}
    Under the conditions of Theorem~\ref{thm:limit_dist}, we have that $\hat{\Gamma}_{+, 2} \inprob \Gamma_{+, 2}$, where,
    \begin{align*}
        \hat{\Gamma}_{+, 2} &\coloneq \frac{1}{n} \sum_{i = 1}^{n} \delta_{+, i} R_{i, 2} R_{i, 2}^{\top}, \qquad \Gamma_{+, 2} \coloneq f_{d, +}(d^{*}) \bmat{\gamma_{0, 1} & \gamma_{1, 1} & \gamma_{2, 1} \\ \gamma_{1, 1} & \gamma_{2, 1} & \gamma_{3, 1} \\ \gamma_{2, 1} & \gamma_{3, 1} & \gamma_{4, 1}}.
    \end{align*}\hfill$\triangleleft$
\end{lemma}

\begin{lemma}[$\Psi_{+, 2}$-convergence]\label{lem:psi2_conv}
    Under the conditions of Theorem~\ref{thm:limit_dist}, we have that $\hat{\Psi}_{+, 2} \inprob \Psi_{+, 2}$, where
    \begin{align*}
        \hat{\Psi}_{+, 2} &\coloneq \frac{1}{n} \sum_{i = 1}^{n} \delta_{+, i} \pmat{ R_{i, 2}R_{i, 2}^{\top} & R_{i, 2}W_{i}^{\top} \\ Z_i R_{i, 2}^{\top} & Z_i W_i^{\top}} , \qquad \Psi_{+, 2} \coloneq f_{d, +}(d^{*}) \pmat{ \bmat{\gamma_{0, 1} & \gamma_{1, 1} & \gamma_{2, 1} \\ \gamma_{1, 1} & \gamma_{2, 1} & \gamma_{3, 1} \\ \gamma_{2, 1} & \gamma_{3, 1} & \gamma_{4, 1}} & \bmat{\gamma_{0, 1} \\ \gamma_{1, 1} \\ \gamma_{2, 1}} \mu_{+, w}(0)^{\top}\\
        \mu_{+, z}(0) \bmat{\gamma_{0, 1} & \gamma_{1, 1} & \gamma_{2, 1}} & \gamma_{0, 1} \mu_{+, zw^{\top}}(0)}.
    \end{align*}\hfill$\triangleleft$
\end{lemma}

We define the bias-corrected PDD estimator $\hat{\tau}_{\mathrm{pdd}}^{\mathrm{rbc}}$ as,
\begin{equation}\label{eq:pdd_rbc_def_app}\begin{aligned}
    \hat{\tau}_{\mathrm{pdd}}^{\mathrm{rbc}} &= \hat{\tau}_{\mathrm{pdd}} - \left(\hat{\mathbf{e}}_{+, w}^{\top} \left(\hat{\Psi}_{+, 1}^{-1}\hat{\eta}_{+, \mathrm{IV}} - \hat{\Psi}_{-, 1}^{-1}\hat{\eta}_{-, \mathrm{IV}}\right) +  \sum_{j = 1}^{q} \left(\hat{\gamma}_{+, j} - \hat{\gamma}_{-, j}\right)\mathbf{e}_{0}^{\top}\hat{\Gamma}_{+, 1}^{-1} \hat{\eta}_{+, w_j} \right).
\end{aligned}\end{equation}

\subsection{Central limit theorem}

In this section, we state the final results necessary to reach the conclusion of Theorem~\ref{thm:limit_dist}.

\begin{lemma}[$\hat{\eta}_{\mathrm{IV}}$-decomposition]\label{lem:eta_IV_decomp}
    Under the conditions of Theorem~\ref{thm:limit_dist},
    \begin{align*}
        \hat{\Psi}_{+, 1}^{-1}\sqrt{nh_{n}} \hat{\eta}_{+, \mathrm{IV}} &=  \sqrt{nh_{n}}\,h_{n}^{2}  \hat{\Psi}_{+, 1}^{-1} \hat{\Omega}_{+} \mathbf{e}_{2}^{\top} \hat{\Psi}_{+, 2}^{-1}\left(\frac{1}{n b_{n}^{2}} \sum_{i = 1}^{n} \delta_{+, i} \bmat{R_{i, 2} \\ Z_{i}} \left(Y_{i} - g_{+}(D_i - d^{*}) - W_i^{\top} \gamma_{+}\right) \right) \\
        &\qquad + \Psi_{+, 1}^{-1} \eta_{+, \mathrm{IV}} + \op{1}.
    \end{align*}\hfill$\triangleleft$
\end{lemma}

\begin{lemma}[$\hat{\eta}_{w}$-decomposition]\label{lem:eta_w_decomp}
    Under the conditions of Theorem~\ref{thm:limit_dist},
    \begin{align*}
        \hat{\Gamma}_{+, 1}^{-1} \sqrt{nh_{n}} \hat{\eta}_{+, w_{j}} &=  \sqrt{nh_{n}}\,h_{n}^{2} \hat{\Gamma}_{+, 1}^{-1} \hat{\Lambda}_{+} \mathbf{e}_{2}^{\top} \hat{\Gamma}_{+, 2}^{-1}\left(\frac{1}{n b_{n}^{2}} \sum_{i = 1}^{n} \delta_{+, i} R_{i, 2} \left(W_{i, j} - \mu_{+, w_{j}}\left(D_{i} - d^{*} \right) \right) \right) \\
        &\qquad + \Gamma_{+, 1}^{-1} \eta_{+, w_{j}} + \op{1}.
    \end{align*}\hfill$\triangleleft$
\end{lemma}

We collect the scores of the terms above into the vectors below. The blocks paired with $\hat{\Psi}_{\cdot, 2}$ and $\hat{\Gamma}_{\cdot, 2}$ use $R_{i, 2}$ and the kernel $\K{(D_{i} - d^{*})/b_{n}}$; the remaining blocks use $R_{i, 1}$ and $\K{(D_{i} - d^{*})/h_{n}}$.
\small\begin{equation*}\begin{aligned}
    \hat{\mathbf{s}}_{+} &\coloneq \pmat{\left\{\hat{\mathbf{e}}_{+, w}^{\top} \hat{\Psi}_{+, 1}^{-1}\right\}^{\top} \\
                                    -\left\{\xi^{3}  \hat{\mathbf{e}}_{+, w}^{\top} \hat{\Psi}_{+, 1}^{-1} \hat{\Omega}_{+} \mathbf{e}_{2}^{\top} \hat{\Psi}_{+, 2}^{-1}\right\}^{\top} \\
                                    \left\{(\gamma_{+, 1} - \gamma_{-, 1}) \mathbf{e}_{0}^{\top} \hat{\Gamma}_{+, 1}^{-1}\right\}^{\top} \\
                                    \vdots \\
                                    \left\{(\gamma_{+, q} - \gamma_{-, q}) \mathbf{e}_{0}^{\top} \hat{\Gamma}_{+, 1}^{-1}\right\}^{\top} \\
                                    -\left\{\xi^{3} (\hat{\gamma}_{+, 1} - \hat{\gamma}_{-, 1}) \mathbf{e}_{0}^{\top} \hat{\Gamma}_{+, 1}^{-1} \hat{\Lambda}_{+} \mathbf{e}_{2}^{\top} \hat{\Gamma}_{+, 2}^{-1} \right\}^{\top}\\
                                    \vdots \\
                                    -\left\{\xi^{3} (\hat{\gamma}_{+, q} - \hat{\gamma}_{-, q}) \mathbf{e}_{0}^{\top} \hat{\Gamma}_{+, 1}^{-1} \hat{\Lambda}_{+} \mathbf{e}_{2}^{\top} \hat{\Gamma}_{+, 2}^{-1}\right\}^{\top}
                                }\\
    \hat{\mathbf{s}}_{-} &\coloneq \pmat{-\left\{\hat{\mathbf{e}}_{+, w}^{\top} \hat{\Psi}_{-, 1}^{-1}\right\}^{\top} \\
                                    \left\{\xi^{3}  \hat{\mathbf{e}}_{+, w}^{\top} \hat{\Psi}_{-, 1}^{-1} \hat{\Omega}_{-} \mathbf{e}_{2}^{\top} \hat{\Psi}_{-, 2}^{-1}\right\}^{\top}}\\
    \hat{\mathbf{s}} &= \left(\hat{\mathbf{s}}_{+}^{\top}, \hat{\mathbf{s}}_{-}^{\top}\right)^{\top}\\
    X_{i} &\coloneq \left(X_{i, +}^{\top}, X_{i, -}^{\top}\right)^{\top}\\
    \mathbf{X}_{n} &\coloneq \frac{1}{\sqrt{nh_n}} \sum_{i = 1}^{n} X_{i}
    \end{aligned}\hspace{1em}\begin{aligned}
    X_{i, +} &\coloneq \pmat{  \1{D_{i} \geq d^{*}}\K{\frac{D_{i} - d^{*}}{h_{n}}} \bmat{R_{i, 1} \\ Z_{i}} \left(Y_{i} - g_{+}(D_{i} - d^{*}) - W_{i}^{\top}\gamma_{+}\right)\\
                        \1{D_{i} \geq d^{*}}\K{\frac{D_{i} - d^{*}}{b_{n}}} \bmat{R_{i, 2} \\ Z_{i}}\left(Y_{i} - g_{+}(D_{i} - d^{*}) - W_{i}^{\top}\gamma_{+}\right) \\
                        \1{D_{i} \geq d^{*}}\K{\frac{D_{i} - d^{*}}{h_{n}}} R_{i, 1} \left\{W_{i, 1} - \mu_{+, w_1}(D_i - d^{*}) \right\} \\
                        \vdots \\
                        \1{D_{i} \geq d^{*}}\K{\frac{D_{i} - d^{*}}{h_{n}}} R_{i, 1} \left\{W_{i, q} - \mu_{+, w_q}(D_i - d^{*})\right\}\\ 
                        \1{D_{i} \geq d^{*}}\K{\frac{D_{i} - d^{*}}{b_{n}}} R_{i, 2} \left\{W_{i, 1} - \mu_{+, w_1}(D_i - d^{*}) \right\} \\
                        \vdots \\
                        \1{D_{i} \geq d^{*}}\K{\frac{D_{i} - d^{*}}{b_{n}}} R_{i, 2} \left\{W_{i, q} - \mu_{+, w_q}(D_i - d^{*}) \right\}}\\
    X_{i, -} &\coloneq \pmat{ \1{D_{i} < d^{*}}\K{\frac{D_{i} - d^{*}}{h_{n}}} \bmat{R_{i, 1} \\ Z_{i}} \left(Y_{i} - g_{-}(D_{i} - d^{*}) - W_{i}^{\top}\gamma_{-}\right)\\
                        \1{D_{i} < d^{*}}\K{\frac{D_{i} - d^{*}}{b_{n}}} \bmat{R_{i, 2} \\ Z_{i}}\left(Y_{i} - g_{-}(D_{i} - d^{*}) - W_{i}^{\top}\gamma_{-}\right)}
\end{aligned}\end{equation*}\normalsize

\begin{corollary}[$\hat{\tau}_{\mathrm{pdd}}^{\mathrm{rbc}}$ vector characterization]\label{cor:clt_decomp} Under the conditions of Theorem~\ref{thm:limit_dist},
    $\sqrt{nh_n}\left(\hat{\tau}_{\mathrm{pdd}}^{\mathrm{rbc}} - \tau\right) = \hat{\mathbf{s}}^{\top} \mathbf{X}_{n} + \op{1}$. \hfill$\triangleleft$
\end{corollary}

Using the results above, we are ready to prove Theorem~\ref{thm:limit_dist}.

\begin{proof}[Proof of Theorem~\ref{thm:limit_dist}]
    Without loss of generality, assume $\dim(W) = q = 1$; for $q > 1$ the argument is identical with additional cross terms. By Corollary~\ref{cor:clt_decomp}, $\sqrt{nh_n}(\hat{\tau}_{\mathrm{pdd}}^{\mathrm{rbc}} - \tau) = \hat{\mathbf{s}}^{\top} \mathbf{X}_{n} + \op{1}$, where (suppressing the $``-"$ terms) $\hat{\mathbf{s}}_{+} \in \R^{12}$ and $X_{i, +} \in \R^{12}$ are as below, 
    \begin{equation*}\begin{aligned}
    \hat{\mathbf{s}}_{+} &\coloneq \pmat{\left\{\hat{\mathbf{e}}_{+, w}^{\top} \hat{\Psi}_{+, 1}^{-1}\right\}^{\top} \\
                                    -\left\{\left(\sqrt{nh_{n}}h_{n}^{2} \xi\right)  \hat{\mathbf{e}}_{+, w}^{\top} \hat{\Psi}_{+, 1}^{-1} \hat{\Omega}_{+} \mathbf{e}_{2}^{\top} \hat{\Psi}_{+, 2}^{-1}\right\}^{\top} \\
                                    \left\{(\gamma_{+, 1} - \gamma_{-, 1}) \mathbf{e}_{0}^{\top} \hat{\Gamma}_{+, 1}^{-1}\right\}^{\top} \\
                                    -\left\{\left(\sqrt{nh_{n}}h_{n}^{2} \xi\right) (\hat{\gamma}_{+, 1} - \hat{\gamma}_{-, 1}) \mathbf{e}_{0}^{\top} \hat{\Gamma}_{+, 1}^{-1} \hat{\Lambda}_{+} \mathbf{e}_{2}^{\top} \hat{\Gamma}_{+, 2}^{-1} \right\}^{\top}}
    \\
    X_{i, +} &\coloneq \pmat{\K{\frac{D_i - d^{*}}{h_n}} \1{D_i \geq d^{*}} \bmat{R_{i, 1} \\ Z_{i}} \left(Y_{i} - g_{+}(D_{i} - d^{*}) - W_{i}^{\top}\gamma_{+}\right)\\
                        \K{\frac{D_i - d^{*}}{h_n}} \1{D_i \geq d^{*}} \bmat{R_{i, 2} \\ Z_{i}}\left(Y_{i} - g_{+}(D_{i} - d^{*}) - W_{i}^{\top}\gamma_{+}\right) \\
                        \K{\frac{D_i - d^{*}}{h_n}} \1{D_i \geq d^{*}} R_{i, 1} \left\{W_{i, 1} - \mu_{+, w_1}(D_i - d^{*}) \right\} \\
                        \K{\frac{D_i - d^{*}}{h_n}} \1{D_i \geq d^{*}} R_{i, 2} \left\{W_{i, 1} - \mu_{+, w_1}(D_i - d^{*}) \right\}}\\
    \mathbf{X}_{n, +} &\coloneq  \sum_{i = 1}^{n} \frac{1}{\sqrt{nh_n}} X_{i, +}
    \end{aligned}\end{equation*}
    
    We proceed in three steps.
    \begin{itemize}
        \item We show $\mathbf{X}_{n, +} \indist \Norm{0, \Sigma_{x, +}}$ via the Cram\'er--Wold device, verifying the Lyapunov condition.
        \item We show $\hat{\mathbf{s}}_{+} \inprob \mathbf{s}_{+}$, so Slutsky gives $\hat{\mathbf{s}}_{+}^{\top} \mathbf{X}_{n, +} \indist \Norm{0, \mathbf{s}_{+}^{\top} \Sigma_{x, +} \mathbf{s}_{+}}$.
        \item By symmetry $\hat{\mathbf{s}}_{-}^{\top} \mathbf{X}_{n, -} \indist \Norm{0, \mathbf{s}_{-}^{\top} \Sigma_{x, -} \mathbf{s}_{-}}$, and by independence across the cutoff we conclude.
    \end{itemize}

    \begin{enumerate}[label = \roman*]
        \item Let $\mathbf{t} = (t_1, \dots, t_{12})^{\top}$ be a nonzero finite vector and $\zeta > 0$ the constant in Assumption~\ref{assum:reg}d. Group $\mathbf{t}^{\top} X_{i, +}$ into its four blocks, the second and fourth of which are localized at $b_{n}$:
        \begin{align*}
            \mathbf{t}^{\top} X_{i, +} &= \1{D_i \geq d^{*}}\K{\tfrac{D_i - d^{*}}{h_n}} A_{i}^{(1)} + \1{D_i \geq d^{*}}\K{\tfrac{D_i - d^{*}}{b_n}} A_{i}^{(2)} \\
            &\quad + \1{D_i \geq d^{*}}\K{\tfrac{D_i - d^{*}}{h_n}} B_{i}^{(1)} + \1{D_i \geq d^{*}}\K{\tfrac{D_i - d^{*}}{b_n}} B_{i}^{(2)},
        \end{align*}
        where, writing $u = \tfrac{D_i - d^{*}}{h_n}$ and $v = \tfrac{D_i - d^{*}}{b_n} = \xi u$,
        \begin{align*}
            A_{i}^{(1)} &= \left(t_1 + t_2 u + t_3 Z_i\right)\varepsilon_{i, y}, &
            A_{i}^{(2)} &= \left(t_4 + t_5 v + t_6 v^{2} + t_7 Z_i\right)\varepsilon_{i, y}, \\
            B_{i}^{(1)} &= \left(t_8 + t_9 u\right)\varepsilon_{i, w_1}, &
            B_{i}^{(2)} &= \left(t_{10} + t_{11} v + t_{12} v^{2}\right)\varepsilon_{i, w_1},
        \end{align*}
        with $\varepsilon_{i, y} = Y_i - g_{+}(D_i - d^{*}) - W_i^{\top}\gamma_{+}$ and $\varepsilon_{i, w_1} = W_{i, 1} - \mu_{+, w_1}(D_i - d^{*})$. By the $c_{r}$-inequality, which states that $\vert{a + b + c + d}^{r} \leq 4^{r - 1}\left(\vert{a}^{r} + \vert{b}^{r} + \vert{c}^{r} + \vert{d}^{r}\right)$ for $r \geq 1$,
        \begin{align*}
            \vert{\mathbf{t}^{\top}X_{i,+}}^{2+\zeta} \leq 4^{1+\zeta}\left(\vert{\K{u}A_{i}^{(1)}}^{2+\zeta} + \vert{\K{v}A_{i}^{(2)}}^{2+\zeta} + \vert{\K{u}B_{i}^{(1)}}^{2+\zeta} + \vert{\K{v}B_{i}^{(2)}}^{2+\zeta}\right)\1{D_i\geq d^*}.
        \end{align*}
        We bound the $\frac{1}{h_n}\E{\cdot}$ of each term, starting with $A_{i}^{(1)} = (t_1 + t_2 u + t_3 Z_i)\varepsilon_{i, y}$. By the law of iterated expectations, conditioning on $D_i$, and the substitution $u = \tfrac{D_i - d^{*}}{h_n}$,
        \begin{align*}
            &\frac{1}{h_n}\E{\K{u}^{2+\zeta}\1{D_i \geq d^{*}}\vert{A_{i}^{(1)}}^{2+\zeta}} \\
            &\quad= \frac{1}{h_n}\E{\K{\tfrac{D_i - d^{*}}{h_n}}^{2+\zeta}\1{D_i \geq d^{*}}\,\E{\vert{(t_1 + t_2 u + t_3 Z_i)\varepsilon_{i, y}}^{2+\zeta} \mid D_i}}\\
            &\quad= \int_{0}^{\infty}\K{u}^{2+\zeta}\,\E{\vert{(t_1 + t_2 u + t_3 Z_i)\varepsilon_{i, y}}^{2+\zeta} \mid D_i = uh_n + d^{*}}\,f_{d}(uh_n + d^{*})\dv u.
        \end{align*}
        By the $c_{r}$-inequality, $\vert{(t_1 + t_2 u + t_3 Z_i) \varepsilon_{i, y}}^{2+\zeta} \leq 3^{1+\zeta}(\vert{t_1}^{2+\zeta}\vert{\varepsilon_{i, y}}^{2+\zeta} + \vert{t_2}^{2+\zeta}\vert{u}^{2+\zeta}\vert{\varepsilon_{i, y}}^{2+\zeta} + \vert{t_3}^{2+\zeta}\vert{Z_i \varepsilon_{i, y}}^{2+\zeta})$, so, using Assumption~\ref{assum:reg}d to bound the conditional $(2+\zeta)$-moments of $Z_i$ and $\varepsilon_{i, y}$ by $M$, the conditional expectation is bounded by $M'(1 + \vert{u}^{2+\zeta})$ for a constant $M'$ depending only on $\mathbf{t}$ and $M$. Since the kernel constants $\gamma_{k, 2+\zeta} = \int_{0}^{\infty}u^{k}\K{u}^{2+\zeta}\dv u$ are finite under Assumption~\ref{assum:kernel}, the integrand is dominated by the integrable function $M' f_{d}(uh_n + d^{*})\K{u}^{2+\zeta}(1 + \vert{u}^{2+\zeta})$. By continuity of the conditional moments and of $f_{d}$ (Assumption~\ref{assum:reg}), dominated convergence then gives $\frac{1}{h_n}\E{\K{u}^{2+\zeta}\1{D_i \geq d^{*}}\vert{A_{i}^{(1)}}^{2+\zeta}} \to q_{A}^{h}(\mathbf{t})$, where
        \begin{align*}
            q_{A}^{h}(\mathbf{t}) \coloneq f_{d, +}(d^{*})\int_{0}^{\infty}\K{u}^{2+\zeta}\,\E{\vert{(t_1 + t_2 u + t_3 Z_i)\varepsilon_{i, y}}^{2+\zeta} \mid D_i = d^{*}}\dv u < \infty.
        \end{align*}
        The term in $B_{i}^{(1)} = (t_8 + t_9 u)\varepsilon_{i, w_1}$ is handled identically, giving $\frac{1}{h_n}\E{\K{u}^{2+\zeta}\1{D_i \geq d^{*}}\vert{B_{i}^{(1)}}^{2+\zeta}} \to q_{B}^{h}(\mathbf{t}) < \infty$. Both limits are finite linear combinations of the kernel constants $\gamma_{k, 2+\zeta}$ and the conditional moments $\E{\vert{Z_i}^{2+\zeta}\mid D_i = d^{*}}$, $\E{\vert{\varepsilon_{i, y}}^{2+\zeta}\mid D_i = d^{*}}$, $\E{\vert{\varepsilon_{i, w_1}}^{2+\zeta}\mid D_i = d^{*}}$, weighted by $f_{d, +}(d^{*})$ (all bounded by $M$ under Assumption~\ref{assum:reg}d).
 
        For the two $b_{n}$-localized terms, the same steps apply, but the $u$-substitution introduces the bandwidth ratio. With $v = \xi u$,
        \begin{align*}
            \frac{1}{h_n}\E{\K{v}^{2+\zeta}\1{D_i \geq d^{*}}\vert{A_{i}^{(2)}}^{2+\zeta}} &= \int_{0}^{\infty} \K{\xi u}^{2+\zeta}\, \E{\vert{A^{(2)}}^{2+\zeta}\mid D_i = uh_n + d^{*}}\, f_{d}(uh_n + d^{*}) \dv u\\
            &= \frac{1}{\xi}\int_{0}^{\infty} \K{v}^{2+\zeta}\, \E{\vert{A^{(2)}}^{2+\zeta}\mid D_i = \tfrac{v}{\xi}h_n + d^{*}}\, f_{d}(\tfrac{v}{\xi}h_n + d^{*}) \dv v\\
            &\to q_{A}^{b}(\mathbf{t}) < \infty,
        \end{align*}
        and similarly $\frac{1}{h_n}\E{\K{v}^{2+\zeta}\1{D_i \geq d^{*}}\vert{B_{i}^{(2)}}^{2+\zeta}} \to q_{B}^{b}(\mathbf{t}) < \infty$, where $q_{A}^{b}, q_{B}^{b}$ are finite. The swap of limit and integral uses the boundedness from Assumptions~\ref{assum:reg} and~\ref{assum:kernel}, and existence of the limits uses continuity in Assumption~\ref{assum:reg}.
 
        Combining, $\frac{1}{h_n}\E{\vert{\mathbf{t}^{\top}X_{i, +}}^{2+\zeta}} \to 4^{1+\zeta}\left(q_{A}^{h} + q_{A}^{b} + q_{B}^{h} + q_{B}^{b}\right)(\mathbf{t}) < \infty$. The Lyapunov ratio is then
        \begin{align*}
            \sum_{i = 1}^{n}\E{\vert{\tfrac{1}{\sqrt{nh_n}}\mathbf{t}^{\top}X_{i, +}}^{2+\zeta}} = \left(\tfrac{1}{nh_n}\right)^{\zeta/2}\frac{1}{h_n}\E{\vert{\mathbf{t}^{\top}X_{i, +}}^{2+\zeta}} \to 0,
        \end{align*}
        since $\left(\tfrac{1}{nh_n}\right)^{\zeta/2}\to 0$ and $\frac{1}{h_n}\E{\vert{\mathbf{t}^{\top}X_{i, +}}^{2+\zeta}}$ converges to a finite limit. This verifies the Lyapunov condition. Noting $\E{X_{i, +}} = 0$ (which follows from Lemma~\ref{lemma:factuals} for the IV blocks and from the definition of $\mu_{+, w_1}(\cdot)$ for the $w$ blocks), we conclude $\mathbf{X}_{n, +} \indist \Norm{0, \Sigma_{x, +}}$, with $\Sigma_{x, +}$ computed in Lemma~\ref{lem:pdd_variance}.

        \item We show $\hat{\mathbf{s}}_{+} \inprob \mathbf{s}_{+}$ termwise.
        \begin{itemize}
            \item $\hat{\mathbf{e}}_{+, w}^{\top} \hat{\Psi}_{+, 1}^{-1} \inprob \mathbf{e}_{+, w}^{\top} \Psi_{+, 1}^{-1}$: standard RDD results \citep[Section 4.2]{hahn2001identification} give $\hat{\mathbf{e}}_{+, w} \inprob \mathbf{e}_{+, w}$; Lemma~\ref{lem:psi1_converge} gives $\hat{\Psi}_{+, 1} \inprob \Psi_{+, 1}$, and writing $\hat{\Psi}_{+, 1}^{-1} - \Psi_{+, 1}^{-1} = \hat{\Psi}_{+, 1}^{-1}(\Psi_{+, 1} - \hat{\Psi}_{+, 1})\Psi_{+, 1}^{-1}$ gives $\hat{\Psi}_{+, 1}^{-1} \inprob \Psi_{+, 1}^{-1}$. Slutsky concludes.
            \item $\xi^{3}\hat{\mathbf{e}}_{+, w}^{\top} \hat{\Psi}_{+, 1}^{-1} \hat{\Omega}_{+} \mathbf{e}_{2}^{\top} \hat{\Psi}_{+, 2}^{-1} \inprob \xi^{3}\mathbf{e}_{+, w}^{\top} \Psi_{+, 1}^{-1} \Omega_{+} \mathbf{e}_{2}^{\top} \Psi_{+, 2}^{-1}$: $\xi = h_n/b_n$ is constant, $\hat{\Psi}_{+, 2} \inprob \Psi_{+, 2}$ (Lemma~\ref{lem:psi2_conv}, hence $\hat{\Psi}_{+, 2}^{-1}\inprob\Psi_{+, 2}^{-1}$), and $\hat{\Omega}_{+} \inprob \Omega_{+}$ (Lemma~\ref{lem:omega_conv}). Slutsky concludes.
            \item $(\gamma_{+, 1} - \gamma_{-, 1}) \mathbf{e}_{0}^{\top} \hat{\Gamma}_{+, 1}^{-1} \inprob (\gamma_{+, 1} - \gamma_{-, 1}) \mathbf{e}_{0}^{\top}\Gamma_{+, 1}^{-1}$ by Lemma~\ref{lem:Gamma1_converge} and the inverse expansion above.
            \item $\xi^{3} (\hat{\gamma}_{+, 1} - \hat{\gamma}_{-, 1}) \mathbf{e}_{0}^{\top} \hat{\Gamma}_{+, 1}^{-1} \hat{\Lambda}_{+} \mathbf{e}_{2}^{\top} \hat{\Gamma}_{+, 2}^{-1} \inprob \xi^{3} (\gamma_{+, 1} - \gamma_{-, 1}) \mathbf{e}_{0}^{\top} \Gamma_{+, 1}^{-1} \Lambda_{+} \mathbf{e}_{2}^{\top} \Gamma_{+, 2}^{-1}$: $\hat{\gamma}_{+} \inprob \gamma_{+}$ (Proposition~\ref{prop:estimand}, so $\hat{\gamma}_{+, 1} - \hat{\gamma}_{-, 1} \inprob \gamma_{+, 1} - \gamma_{-, 1}$), $\hat{\Lambda}_{+} \inprob \Lambda_{+}$ (Lemma~\ref{lem:lambda_conv}), and $\hat{\Gamma}_{+, 2} \inprob \Gamma_{+, 2}$ (Lemma~\ref{lem:Gamma2_conv}). Slutsky concludes.
        \end{itemize}
        Hence $\hat{\mathbf{s}}_{+} \inprob \mathbf{s}_{+}$, where
        \begin{align*}
            \mathbf{s}_{+} &= \left(\mathbf{e}_{+, w}^{\top} \Psi_{+, 1}^{-1}, \;- \xi^{3} \mathbf{e}_{+, w}^{\top} \Psi_{+, 1}^{-1} \Omega_{+} \mathbf{e}_{2}^{\top} \Psi_{+, 2}^{-1}, \; (\gamma_{+, 1} - \gamma_{-, 1}) \mathbf{e}_{0}^{\top}\Gamma_{+, 1}^{-1}, \; - \xi^{3} (\gamma_{+, 1} - \gamma_{-, 1}) \mathbf{e}_{0}^{\top} \Gamma_{+, 1}^{-1} \Lambda_{+} \mathbf{e}_{2}^{\top} \Gamma_{+, 2}^{-1} \right)^{\top}.
        \end{align*}
        By Slutsky, $\hat{\mathbf{s}}_{+}^{\top}\mathbf{X}_{n, +} \indist \Norm{0, \mathbf{s}_{+}^{\top}\Sigma_{x, +}\mathbf{s}_{+}}$.

        \item Repeating steps (i)--(ii) below the cutoff gives $\hat{\mathbf{s}}_{-} \inprob \mathbf{s}_{-}$ and $\mathbf{X}_{n, -} \indist \Norm{0, \Sigma_{x, -}}$, with $\mathbf{s}_{-} = \big(\mathbf{e}_{+, w}^{\top} \Psi_{-, 1}^{-1}, \; \xi^{3}\mathbf{e}_{+, w}^{\top} \Psi_{-, 1}^{-1} \Omega_{-} \mathbf{e}_{2}^{\top} \Psi_{-, 2}^{-1}\big)^{\top}$, hence $\hat{\mathbf{s}}_{-}^{\top}\mathbf{X}_{n, -} \indist \Norm{0, \mathbf{s}_{-}^{\top}\Sigma_{x, -}\mathbf{s}_{-}}$. By applying Cramer-Wold to the concatenated plus/minus triangular array with block-diagonal covariance, since the +/- terms are uncorrelated\footnote{Recall that we always have either $\omega_{+, i} = 0$ or $\omega_{-, i} = 0$.} we can conclude that,
        \begin{align*}
            \sqrt{nh_n}\left(\hat{\tau}_{\mathrm{pdd}}^{\mathrm{rbc}} - \tau\right) = \hat{\mathbf{s}}_{+}^{\top} \mathbf{X}_{n, +} + \hat{\mathbf{s}}_{-}^{\top} \mathbf{X}_{n, -} \indist \Norm{0, \mathbf{s}_{+}^{\top} \Sigma_{x, +} \mathbf{s}_{+} + \mathbf{s}_{-}^{\top} \Sigma_{x, -} \mathbf{s}_{-}}.
        \end{align*}
        We set $V = \mathbf{s}_{+}^{\top} \Sigma_{x, +} \mathbf{s}_{+} + \mathbf{s}_{-}^{\top} \Sigma_{x, -} \mathbf{s}_{-}$. We estimate $V$ with, 
        \begin{align*}
            \hat{V} = \hat{\mathbf{s}}_{+}^{\top} \left(\frac{1}{n h_n}\sum_{i} \tilde{X}_{i, +} \tilde{X}_{i, +}^{\top}\right) \hat{\mathbf{s}}_{+} + \hat{\mathbf{s}}_{-}^{\top} \left(\frac{1}{n h_n}\sum_{i} \tilde{X}_{i, -}\tilde{X}_{i, -}^{\top}\right) \hat{\mathbf{s}}_{-},
        \end{align*} 
        where $\tilde{X}_{i}$ replaces $\varepsilon_{i, y} = Y_{i} - g_{+}(D_i - d^{*}) - W_i^{\top}\gamma_{+}$ with the estimate $\hat{\varepsilon}_{i, y} = Y_{i} - R_{i, 2}^{\top}\hat{\pi}_{+} - W_{i}^{\top} \hat{\gamma}_{+}$, and $\varepsilon_{i, w_1} = W_{i, 1} - \mu_{+, w_1}(D_i - d^{*})$ with the estimate $\hat{\varepsilon}_{i, w_1} = W_{i, 1} - R_{i, 2}^{\top} \hat{\kappa}_{+}$.
    \end{enumerate}
\end{proof}

\begin{lemma}[PDD variance]\label{lem:pdd_variance} Under the conditions of Theorem~\ref{thm:limit_dist}, $ \frac{1}{h_n} \E{ X_{i, +} X_{i, +}^{\top}} \to \Sigma_{x, +}$ and $\frac{1}{h_n} \E{X_{i, -} X_{i, -}^{\top}} \to \Sigma_{x, -}$, where
    \begin{equation*}\begin{aligned}
        \Sigma_{x, +} &= \pmat{ \Sigma_{\mathrm{IV}, 11}^{+} & \Sigma_{\mathrm{IV}, 12}^{+} & \mathbf{P}_{\mathrm{IV},w_1,11}^{+} & \mathbf{P}_{\mathrm{IV},w_1,12}^{+} \\
        (\Sigma_{\mathrm{IV}, 12}^{+})^{\top} & \Sigma_{\mathrm{IV}, 22}^{+} & \mathbf{P}_{\mathrm{IV}, w_1, 21}^{+} & \mathbf{P}_{\mathrm{IV}, w_1, 22}^{+} \\
        (\mathbf{P}_{\mathrm{IV}, w_1, 11}^{+})^{\top} & (\mathbf{P}_{\mathrm{IV}, w_1, 21}^{+})^{\top} & \Sigma^{+}_{w_1, 11} & \Sigma^{+}_{w_1, 12} \\
        (\mathbf{P}_{\mathrm{IV}, w_1, 12}^{+})^{\top} & (\mathbf{P}_{\mathrm{IV}, w_1, 22}^{+})^{\top} & (\Sigma^{+}_{w_1, 12})^{\top} & \Sigma^{+}_{w_1, 22} }
        \end{aligned}\hspace{1em}\begin{aligned}
            \Sigma_{x, -} &= \pmat{\Sigma_{\mathrm{IV}, 11}^{-} & \Sigma_{\mathrm{IV}, 12}^{-}\\
                                  (\Sigma_{\mathrm{IV}, 12}^{-})^{\top} & \Sigma_{\mathrm{IV}, 22}^{-}}.
    \end{aligned}\end{equation*}
    Writing $\bar{\gamma}_{k, 2} = \bar{\gamma}_{k, 2}(\xi)$, the diagonal blocks are
    \begin{align*}
        \Sigma_{\mathrm{IV}, 11}^{+} &= f_{d, +}(d^{*}) \pmat{ \bmat{\gamma_{0, 2} & \gamma_{1, 2}\\ \gamma_{1, 2} & \gamma_{2, 2}} \sigma^{2}_{+, y}(0) & \bmat{\gamma_{0, 2} \\ \gamma_{1, 2}}\rho_{+, y^{2}z}(0)^{\top} \\
            \rho_{+, y^{2}z}(0) \bmat{\gamma_{0, 2} & \gamma_{1, 2}} & \gamma_{0, 2}\,\rho_{+, y^{2}zz^{\top}}(0)}\\
        \Sigma_{\mathrm{IV}, 22}^{+} &= \frac{1}{\xi}\,f_{d, +}(d^{*}) \pmat{ \bmat{\gamma_{0, 2} & \gamma_{1, 2} & \gamma_{2, 2} \\ \gamma_{1, 2} & \gamma_{2, 2} & \gamma_{3, 2} \\ \gamma_{2, 2} & \gamma_{3, 2} & \gamma_{4, 2}} \sigma^{2}_{+, y}(0) & \bmat{\gamma_{0, 2} \\ \gamma_{1, 2} \\ \gamma_{2, 2} }\rho_{+, y^{2}z}(0)^{\top} \\
            \rho_{+, y^{2}z}(0) \bmat{\gamma_{0, 2} & \gamma_{1, 2} & \gamma_{2, 2}} & \gamma_{0, 2}\,\rho_{+, y^{2}zz^{\top}}(0)}\\
        \Sigma_{w_1, 11}^{+} &= f_{d, +}(d^{*}) \sigma^{2}_{+, w_1}(0) \bmat{\gamma_{0, 2} & \gamma_{1, 2} \\ \gamma_{1, 2} & \gamma_{2, 2}}, \qquad
        \Sigma_{w_1, 22}^{+} = \frac{1}{\xi}\,f_{d, +}(d^{*}) \sigma^{2}_{+, w_1}(0) \bmat{\gamma_{0, 2} & \gamma_{1, 2} & \gamma_{2, 2} \\ \gamma_{1, 2} & \gamma_{2, 2} & \gamma_{3, 2} \\ \gamma_{2, 2} & \gamma_{3, 2} & \gamma_{4, 2}},
    \end{align*}
    and the cross-bandwidth blocks are
    \begin{align*}
        \Sigma_{\mathrm{IV}, 12}^{+} &= f_{d, +}(d^{*}) \pmat{ \bmat{\bar{\gamma}_{0, 2} & \xi\bar{\gamma}_{1, 2} & \xi^{2}\bar{\gamma}_{2, 2}\\ \bar{\gamma}_{1, 2} & \xi\bar{\gamma}_{2, 2} & \xi^{2}\bar{\gamma}_{3, 2}} \sigma^{2}_{+, y}(0) & \bmat{\bar{\gamma}_{0, 2} \\ \bar{\gamma}_{1, 2}}\rho_{+, y^{2}z}(0)^{\top} \\
            \rho_{+, y^{2}z}(0) \bmat{\bar{\gamma}_{0, 2} & \xi\bar{\gamma}_{1, 2} & \xi^{2}\bar{\gamma}_{2, 2}} & \bar{\gamma}_{0, 2}\,\rho_{+, y^{2}zz^{\top}}(0)}\\
        \Sigma_{w_1, 12}^{+} &= f_{d, +}(d^{*}) \sigma^{2}_{+, w_1}(0) \bmat{\bar{\gamma}_{0, 2} & \xi\bar{\gamma}_{1, 2} & \xi^{2}\bar{\gamma}_{2, 2}\\ \bar{\gamma}_{1, 2} & \xi\bar{\gamma}_{2, 2} & \xi^{2}\bar{\gamma}_{3, 2}}\\
        \mathbf{P}_{\mathrm{IV}, w_1, 11}^{+} &= f_{d, +}(d^{*}) \pmat{\rho_{+, yw_1}(0) \bmat{\gamma_{0, 2} & \gamma_{1, 2} \\ \gamma_{1, 2} & \gamma_{2, 2}}\\ \rho_{+, yw_1z}(0) \bmat{\gamma_{0, 2} & \gamma_{1, 2}}}, \qquad
        \mathbf{P}_{\mathrm{IV}, w_1, 12}^{+} = f_{d, +}(d^{*}) \pmat{\rho_{+, yw_1}(0) \bmat{\bar{\gamma}_{0, 2} & \xi\bar{\gamma}_{1, 2} & \xi^{2}\bar{\gamma}_{2, 2}\\ \bar{\gamma}_{1, 2} & \xi\bar{\gamma}_{2, 2} & \xi^{2}\bar{\gamma}_{3, 2}}\\ \rho_{+, yw_1z}(0) \bmat{\bar{\gamma}_{0, 2} & \xi\bar{\gamma}_{1, 2} & \xi^{2}\bar{\gamma}_{2, 2}}}\\
        \mathbf{P}_{\mathrm{IV}, w_1, 21}^{+} &= f_{d, +}(d^{*}) \pmat{\rho_{+, yw_1}(0) \bmat{\bar{\gamma}_{0, 2} & \bar{\gamma}_{1, 2} \\ \xi\bar{\gamma}_{1, 2} & \xi\bar{\gamma}_{2, 2} \\ \xi^{2}\bar{\gamma}_{2, 2} & \xi^{2}\bar{\gamma}_{3, 2}}\\ \rho_{+, yw_1z}(0) \bmat{\bar{\gamma}_{0, 2} & \bar{\gamma}_{1, 2}}}, \qquad
        \mathbf{P}_{\mathrm{IV}, w_1, 22}^{+} = \frac{1}{\xi}\,f_{d, +}(d^{*}) \pmat{\rho_{+, yw_1}(0) \bmat{\gamma_{0, 2} & \gamma_{1, 2} & \gamma_{2, 2} \\ \gamma_{1, 2} & \gamma_{2, 2} & \gamma_{3, 2} \\ \gamma_{2, 2} & \gamma_{3, 2} & \gamma_{4, 2}}\\ \rho_{+, yw_1z}(0) \bmat{\gamma_{0, 2} & \gamma_{1, 2} & \gamma_{2, 2}}}.
    \end{align*}
    The blocks of $\Sigma_{x, -}$ are $\Sigma_{\mathrm{IV}, 11}^{-}$, $\Sigma_{\mathrm{IV}, 22}^{-}$, $\Sigma_{\mathrm{IV}, 12}^{-}$, defined as their $``+"$ analogues with all $``+"$ subscripts replaced by $``-"$. \hfill$\triangleleft$
\end{lemma}

\subsection{Proofs of technical lemmas}

\begin{proof}[Proof of Lemma~\ref{lem:psi1_converge}]
    Plugging in for $\omega_{+, i}$, $\hat{\Psi}_{+, 1} = \frac{1}{n h_{n}} \sum_{i} \1{D_{i} \geq d^{*}} \K{\tfrac{D_{i} - d^{*}}{h_{n}}} \left(\begin{smallmatrix}R_{i, 1}R_{i, 1}^{\top} & R_{i, 1} W_{i}^{\top} \\  Z_{i} R_{i, 1}^{\top} & Z_{i} W_{i}^{\top}\end{smallmatrix}\right)$. We treat a representative entry; the rest are analogous.
    \begin{enumerate}[label = \roman*]
        \item The $R_{i, 1}R_{i, 1}^{\top}$ block converges to $f_{d, +}(d^{*})\Gamma_{1}$ by \citet[Lemma 1]{hahn1999evaluating}.
        \item For the $R_{i, 1} W_{i, j}$ block, set $X_{i} = \frac{1}{h_{n}}\1{D_{i} \geq d^{*}} \K{\tfrac{D_{i} - d^{*}}{h_{n}}} R_{i, 1} W_{i, j} \in \R^{2}$. For each $\ell \in \{1, 2\}$, we show $\frac{1}{n}\sum_{i} X_{i, \ell} = \E{X_{i, \ell}} + \op{1}$ by bounding $\Var{\frac{1}{n}\sum_{i}X_{i, \ell}} = \frac{1}{n}\Var{X_{1, \ell}} \leq \frac{1}{n}\norm{X_{1, \ell}}_{L_2}^{2}$ and applying Chebyshev's inequality. Computing the $L_2$ bound,
        \begin{align*}
            h_{n}\E{X_{i}^{2}} &= \frac{1}{h_{n}}\E{\1{D_{i} \geq d^{*}}\K{\tfrac{D_{i} - d^{*}}{h_{n}}}^{2} R_{i, 1}^{2}\, W_{i, j}^{2}}\\
            &= \int_{0}^{\infty} \K{u}^{2} \bmat{1 \\ u^{2}} \E{W_{i, j}^{2} \mid D_{i} = uh_{n} + d^{*}} f_{d}(uh_{n} + d^{*}) \dv u\\
            &\to \left(\sigma^{2}_{+, w_j}(0) + \mu_{+, w_j}(0)^{2}\right) f_{d, +}(d^{*}) \bmat{\gamma_{0, 2}\\ \gamma_{2, 2}} < \infty,
        \end{align*}
        where the squaring is element-wise and the limit/integral swap uses Assumptions~\ref{assum:reg} and~\ref{assum:kernel}. This shows that $\norm{X_{1, \ell}}_{L_2}^{2} = O(1/h_{n})$ and thus $\frac{1}{n}\norm{X_{1, \ell}}_{L_2}^{2} = O\left(\tfrac{1}{nh_{n}}\right) \to 0$. Then from Chebyshev's inequality,
        \begin{align*}
            \pr{\vert{\frac{1}{n} \sum_{i = 1}^{n} (X_{i, \ell} - \E{X_{i, \ell}})} > \epsilon} \leq \frac{1}{\epsilon^{2}} \frac{1}{n} \norm{X_{1, \ell}}_{L_{2}}^{2} \to 0,
        \end{align*}
        and the conclusion follows.
        \item By Step (ii) it suffices to compute the mean. Using the law of iterated expectations and $u$-substitution,
        \begin{align*}
            \frac{1}{h_{n}}\E{\1{D_{i} \geq d^{*}} \K{\tfrac{D_{i} - d^{*}}{h_{n}}} R_{i, 1} W_{i, j}} &= \int_{0}^{\infty} \K{u} \bmat{1 \\ u} \E{W_{i, j} \mid D_{i} = uh_{n} + d^{*}} f_{d}(uh_{n} + d^{*}) \dv u\\
            &\to \mu_{+, w_{j}}(0) f_{d, +}(d^{*}) \bmat{\gamma_{0, 1} \\ \gamma_{1, 1}}.
        \end{align*}
        \item Repeating (ii)--(iii) on the remaining entries gives $\frac{1}{n h_{n}}\sum_{i}\1{D_i\geq d^*}\K{\tfrac{D_i-d^*}{h_n}} Z_{i, j} R_{i, 1}^{\top} \inprob \mu_{+, z_{j}}(0) f_{d, +}(d^{*}) [\gamma_{0, 1}, \gamma_{1, 1}]$ and $\frac{1}{n h_{n}}\sum_{i}\1{D_i\geq d^*}\K{\tfrac{D_i-d^*}{h_n}}Z_{i, j} W_{i, j} \inprob \mu_{+, z_{j} w_{j}}(0) f_{d, +}(d^{*}) \gamma_{0, 1}$. Combining yields $\hat{\Psi}_{+, 1} \inprob \Psi_{+, 1}$.
    \end{enumerate}
\end{proof}

\begin{proof}[Proof of Lemma~\ref{lem:Gamma1_converge}]
    Under our conditions this is identical to \citet[Lemma 1]{hahn1999evaluating}.
\end{proof}

\begin{proof}[Proof of Lemma~\ref{lem:IV_bias_converge}]
    From Equation~\eqref{eq:nu_decompose}, the first term matches the lemma statement. For the second, since $\hat{\Psi}_{+, 1}^{-1} \inprob \Psi_{+, 1}^{-1}$ (Lemma~\ref{lem:psi1_converge}), it suffices to show
    \begin{align*}
        \sqrt{\frac{h_n}{n}}\sum_{i = 1}^{n}\omega_{i, +}\bmat{R_{i, 1}\\Z_i}\left(g_{+}(D_i - d^{*}) - R_{i, 1}^{\top}\alpha_{+}\right) \inprob \eta_{+, \mathrm{IV}}.
    \end{align*}
    Since $\alpha_{+} = \mathbf{H}_{1}(g_{+}(0), g_{+}^{(1)}(0))^{\top}$, we have $R_{i, 1}^{\top}\alpha_{+} = g_{+}(0) + g_{+}^{(1)}(0)(D_i - d^{*})$, the first-order Taylor polynomial of $g_{+}$ at $0$. By the smoothness of $g_{+}$ in Assumption~\ref{assum:reg}, a second-order Taylor expansion with Lagrange remainder gives, for some $\tilde{D}_i - d^{*} \in [0, D_i - d^{*}]$,
    \begin{align*}
        g_{+}(D_i - d^{*}) = g_{+}(0) + g_{+}^{(1)}(0)(D_i - d^{*}) + \tfrac{1}{2}g_{+}^{(2)}(\tilde{D}_i - d^{*})(D_i - d^{*})^{2},
    \end{align*}
    so that $g_{+}(D_i - d^{*}) - R_{i, 1}^{\top}\alpha_{+} = \tfrac{1}{2}g_{+}^{(2)}(\tilde{D}_i - d^{*})(D_i - d^{*})^{2}$. Substituting $\omega_{i, +} = \tfrac{1}{h_n}\1{D_i \geq d^{*}}\K{(D_i - d^{*})/h_n}$ and writing $(D_i - d^{*})^{2} = h_n^{2}\left(\tfrac{D_i - d^{*}}{h_n}\right)^{2}$, the left-hand side becomes
    \begin{align*}
        \sqrt{\frac{h_n}{n}}\sum_{i = 1}^{n}\omega_{i, +}\bmat{R_{i, 1}\\Z_i}\tfrac{1}{2}g_{+}^{(2)}(\tilde{D}_i - d^{*})(D_i - d^{*})^{2} = \sqrt{nh_n}\,h_n^{2}\cdot\frac{1}{nh_n}\sum_{i = 1}^{n}\1{D_i \geq d^{*}}\K{\tfrac{D_i - d^{*}}{h_n}}\bmat{R_{i, 1}\\Z_i}\tfrac{1}{2}g_{+}^{(2)}(\tilde{D}_i - d^{*})\left(\tfrac{D_i - d^{*}}{h_n}\right)^{2}.
    \end{align*}
    By a Chebyshev argument identical to Lemma~\ref{lem:psi1_converge}(ii), the average equals its expectation up to $\op{1}$. By the law of iterated expectations, conditioning on $D_i$ (so that $\E{Z_i \mid D_i} = \mu_{+, z}(D_i - d^{*})$ for $D_i \geq d^{*}$), and the substitution $u = \tfrac{D_i - d^{*}}{h_n}$, with $\tilde{u} = \tfrac{\tilde{D}_i - d^{*}}{h_n} \in [0, u]$,
    \begin{align*}
        &\frac{1}{h_{n}}\E{\1{D_{i} \geq d^{*}}\K{\tfrac{D_{i} - d^{*}}{h_{n}}} \bmat{R_{i, 1} \\ Z_{i}} \tfrac{1}{2}g_{+}^{(2)}(\tilde{D}_{i} - d^{*}) \left(\tfrac{D_{i} - d^{*}}{h_{n}}\right)^{2}}\\
        &\qquad = \frac{1}{h_n}\int_{d^{*}}^{\infty}\K{\tfrac{d - d^{*}}{h_n}}\bmat{1 \\ \tfrac{d - d^{*}}{h_n} \\ \mu_{+, z}(d - d^{*})}\tfrac{1}{2}g_{+}^{(2)}(\tilde{d} - d^{*})\left(\tfrac{d - d^{*}}{h_n}\right)^{2}f_{d}(d)\dv d\\
        &\qquad = \int_{0}^{\infty}\K{u}\bmat{1 \\ u \\ \mu_{+, z}(uh_n)}\tfrac{1}{2}g_{+}^{(2)}(\tilde{u}h_n)\,u^{2}\,f_{d}(uh_n + d^{*})\dv u\\
        &\qquad \to \tfrac{1}{2}g_{+}^{(2)}(0)\,f_{d, +}(d^{*})\int_{0}^{\infty}\K{u}\bmat{1 \\ u \\ \mu_{+, z}(0)}u^{2}\dv u \\
        &= \tfrac{1}{2}g_{+}^{(2)}(0)\,f_{d, +}(d^{*})\bmat{\gamma_{2, 1} \\ \gamma_{3, 1} \\ \gamma_{2, 1}\mu_{+, z}(0)} = \tfrac{1}{2}g_{+}^{(2)}(0)\Omega_{+},
    \end{align*}
    where the limit uses $\tilde{u}h_n \to 0$, so that $g_{+}^{(2)}(\tilde{u}h_n) \to g_{+}^{(2)}(0)$, together with $\mu_{+, z}(uh_n) \to \mu_{+, z}(0)$ and $f_{d}(uh_n + d^{*}) \to f_{d, +}(d^{*})$, and the limit/integral swap is justified by the boundedness in Assumptions~\ref{assum:reg} and~\ref{assum:kernel}. Hence the average converges in probability to $\tfrac{1}{2}g_{+}^{(2)}(0)\Omega_{+}$, and multiplying by $\sqrt{nh_{n}}h_{n}^{2} \to C = c_{h}^{5/2}$ and applying Slutsky's theorem gives the second term $\inprob \frac{C}{2}g_{+}^{(2)}(0)\Omega_{+} = \eta_{+, \mathrm{IV}}$.
\end{proof}

\begin{proof}[Proof of Lemma~\ref{lem:rdd_bias_converge}]
    The argument follows Lemma~\ref{lem:IV_bias_converge}, replacing the IV score by the local-linear $w$ score. A second-order Taylor expansion of $\mu_{+, w_{j}}$ gives the remainder $\frac{1}{2}\mu_{+, w_{j}}^{(2)}(\tilde{D}_i - d^{*})(D_i - d^{*})^{2}$, and the same Chebyshev and $u$-substitution steps give
    \begin{align*}
        \frac{1}{h_{n}}\E{\1{D_{i} \geq d^{*}}\K{\tfrac{D_{i} - d^{*}}{h_{n}}} R_{i, 1}\, \tfrac{1}{2}\mu_{+, w_{j}}^{(2)}(\tilde{D}_i - d^{*}) \left(\tfrac{D_{i} - d^{*}}{h_{n}}\right)^{2}} \to \tfrac{1}{2}\mu_{+, w_{j}}^{(2)}(0)\Lambda_{+}.
    \end{align*}
    Multiplying by $\sqrt{nh_n}h_n^{2}\to C$ and applying Slutsky's theorem gives that  $\left(\sqrt{\frac{h_n}{n}} \sum_{i = 1}^{n} \omega_{+, i} R_{i, 1} \left\{\mu_{+, w}(D_{i} - d^{*})^{\top} - R_{i, 1}^{\top} \beta_{+}^{w}\right\}\right) \inprob \eta_{+, w_j}$.
\end{proof}

\begin{proof}[Proof of Lemma~\ref{lem:lambda_conv}]
    We proceed in steps. Some steps of this proof are identical to Lemma~\ref{lem:Gamma1_converge} and~\ref{lem:IV_bias_converge}, so for brevity we only summarize those steps.
    \begin{enumerate}[label = \roman*]
        \item Expanding the sum, note that we can write,
        \begin{align*}
            \hat{\Lambda}_{+} &= \frac{1}{nh_{n}} \sum_{i = 1}^{n} \1{D_i \geq d^{*}} \K{\frac{D_i - d^{*}}{h_{n}}} R_{i, 1} \left( \frac{D_i - d^{*}}{h_{n}}\right)^{2}.
        \end{align*}
        \item Following the same argument as in Lemma~\ref{lem:Gamma1_converge}(ii) and applying Chebyshev's inequality, we can show that,
        \begin{align*}
            &\frac{1}{nh_{n}} \sum_{i = 1}^{n} \1{D_i \geq d^{*}} \K{\frac{D_i - d^{*}}{h_{n}}} R_{i, 1} \left( \frac{D_i - d^{*}}{h_{n}}\right)^2 \\
            &\quad = \frac{1}{h_n}\E{\1{D_i \geq d^{*}} \K{\frac{D_i - d^{*}}{h_{n}}} R_{i, 1}  \left( \frac{D_i - d^{*}}{h_{n}}\right)^2} + \op{1}.
        \end{align*}
        \item Evaluating the integral in the step above, we find that,
        \begin{align*}
            &\lim_{n \to \infty} \frac{1}{h_n} \E{\1{D_i \geq d^{*}} \K{\frac{D_i - d^{*}}{h_{n}}} R_{i, 1} \left( \frac{D_i - d^{*}}{h_{n}}\right)^2}\\
            &= \lim_{n \to \infty} \frac{1}{h_n} \int_{0}^{\infty} \K{\frac{d - d^*}{h_n}} \pmat{1 \\ \left(\frac{d - d^{*}}{h_n}\right)} \left(\frac{d - d^{*}}{h_{n}}\right)^2 f_{d}(d) \dv d\\
            &= \lim_{n \to \infty} \int_{0}^{\infty} \K{u} \pmat{1 \\ u} u^{2} f_{d}(uh_{n} + d^{*}) \dv u \qquad \text{(I)}\\
            &= f_{d, +}(d^{*}) \left(\gamma_{2, 1}, \gamma_{3, 1}\right)^{\top},
        \end{align*}
        where (I) uses $u$-substitution. The swapping of the limits uses the boundedness from Assumptions~\ref{assum:reg} and~\ref{assum:kernel}, and the existence of the limits uses the continuity conditions in Assumption~\ref{assum:reg}.
        \item Combining steps (ii) and (iii), it follows that $\hat{\Lambda}_{+} \inprob \Lambda_{+}$.
    \end{enumerate}
\end{proof}

\begin{proof}[Proof of Lemma~\ref{lem:omega_conv}]
    The argument follows Lemma~\ref{lem:lambda_conv}, with the extra coordinate $Z_i$. Expanding, $\hat{\Omega}_{+} = \hfill\break \frac{1}{nh_n}\sum_{i}\1{D_i \geq d^{*}}\K{\tfrac{D_i - d^{*}}{h_n}}\left(\begin{smallmatrix}R_{i, 1}\\Z_i\end{smallmatrix}\right)\left(\tfrac{D_i - d^{*}}{h_n}\right)^{2}$. By the Chebyshev and $u$-substitution argument,
    \begin{align*}
        \frac{1}{h_n}\E{\1{D_i \geq d^{*}}\K{\tfrac{D_i - d^{*}}{h_n}}\pmat{R_{i, 1}\\ Z_i}\left(\tfrac{D_i - d^{*}}{h_n}\right)^{2}} \to \int_{0}^{\infty}\K{u}\pmat{1\\u\\\mu_{+, z}(uh_n)}u^{2}f_{d}(uh_n + d^{*})\dv u = \Omega_{+},
    \end{align*}
    so $\hat{\Omega}_{+}\inprob\Omega_{+}$.
\end{proof}

\begin{proof}[Proof of Lemma~\ref{lem:Gamma2_conv}]
    Plugging in $\delta_{+, i}$, $\hat{\Gamma}_{+, 2} = \frac{1}{nb_n}\sum_{i}\1{D_i \geq d^{*}}\K{\tfrac{D_i - d^{*}}{b_n}}R_{i, 2}R_{i, 2}^{\top}$. By the same Chebyshev and $u$-substitution argument, now substituting $v = \tfrac{D_i - d^{*}}{b_n}$ at the bandwidth $b_n$,
    \begin{align*}
        \frac{1}{b_n}\E{\1{D_i \geq d^{*}}\K{\tfrac{D_i - d^{*}}{b_n}}R_{i, 2}R_{i, 2}^{\top}} \to f_{d, +}(d^{*})\int_{0}^{\infty}\K{v}\pmat{1 & v & v^2 \\ v & v^2 & v^3 \\ v^2 & v^3 & v^4}\dv v = \Gamma_{+, 2},
    \end{align*}
    so $\hat{\Gamma}_{+, 2}\inprob\Gamma_{+, 2}$.
\end{proof}

\begin{proof}[Proof of Lemma~\ref{lem:psi2_conv}]
    The argument follows Lemma~\ref{lem:psi1_converge}, now localized at $b_{n}$ via $\delta_{+, i}$ and with the additional quadratic coordinate of $R_{i, 2}$. Each entry converges by the Chebyshev and $u$-substitution argument with $v = \tfrac{D_i - d^{*}}{b_n}$, giving $\hat{\Psi}_{+, 2}\inprob\Psi_{+, 2}$.
\end{proof}

\begin{proof}[Proof of Lemma~\ref{lem:eta_IV_decomp}]
    We proceed in steps. Recall that we defined,
    \begin{align*}
        \sqrt{nh_{n}}\hat{\eta}_{+, \mathrm{IV}} &= \sqrt{nh_{n}}h_{n}^{2} \hat{\Omega}_{+} \frac{1}{b_{n}^{2}} \mathbf{e}_{2}^{\top}\hat{\pi}_{+}.
    \end{align*}
    \begin{enumerate}[label = \roman*]
        \item First, we rewrite the estimator $\frac{1}{b_{n}^{2}}\mathbf{e}_{2}^{\top}\hat{\pi}_{+}$. We have that,
        \begin{align*}
            &\frac{1}{b_{n}^{2}} \mathbf{e}_{2}^{\top}\hat{\pi}_{+} \\
            &\quad = \mathbf{e}_{2}^{\top} \hat{\Psi}_{+, 2}^{-1}\left(\frac{1}{nb_{n}^{2}} \sum_{i = 1}^{n} \delta_{+, i} \bmat{R_{i, 2} \\ Z_{i}} Y_{i} \right)\\
            &\quad= \mathbf{e}_{2}^{\top} \hat{\Psi}_{+, 2}^{-1}\left(\frac{1}{n b_{n}^{2}} \sum_{i = 1}^{n} \delta_{+, i} \bmat{R_{i, 2} \\ Z_{i}} \left(Y_{i} - g_{+}(D_i - d^{*}) - W_i^{\top} \gamma_{+}\right) \right) \\
            &\qquad + \mathbf{e}_{2}^{\top} \hat{\Psi}_{+, 2}^{-1}\left(\frac{1}{nb_{n}^{2}} \sum_{i = 1}^{n} \delta_{+, i} \bmat{R_{i, 2} \\ Z_{i}} \left(g_{+}(D_i - d^{*}) - g_{+}(0) - g_{+}^{(1)}(0)(D_i - d^{*}) - \frac{1}{2} g_{+}^{(2)}(0) (D_i - d^{*})^{2} \right) \right)\\
            &\qquad + \mathbf{e}_{2}^{\top} \hat{\Psi}_{+, 2}^{-1}\left(\frac{1}{nb_{n}^{2}} \sum_{i = 1}^{n} \delta_{+, i} \bmat{R_{i, 2} \\ Z_{i}} \left(g_{+}(0) + g_{+}^{(1)}(0)(D_i - d^{*}) + \frac{1}{2} g_{+}^{(2)}(0) (D_i - d^{*})^{2} + W_{i}^{\top} \gamma_{+} \right) \right).
        \end{align*}
        The first term matches the term in the lemma statement. We treat the remaining two terms individually.

        \item We show that the second term converges to zero in probability. Using the continuous differentiability of $g_{+}(\cdot)$, for some $\tilde{D}_i - d^* \in [0, D_{i} - d^{*}]$, a Taylor expansion gives,
        \begin{align*}
            g_{+}(D_i - d^{*}) &= g_{+}(0) + g_{+}^{(1)}(0) (D_{i} - d^{*}) + \frac{1}{2} g_{+}^{(2)}(0) (D_{i} - d^{*})^{2} + \frac{1}{6} g_{+}^{(3)}(\tilde{D}_i - d^{*}) (D_i - d^{*})^{3}.
        \end{align*}
        Substituting into the second term,
        \begin{align*}
            &\mathbf{e}_{2}^{\top} \hat{\Psi}_{+, 2}^{-1}\left(\frac{1}{nb_{n}^{2}} \sum_{i = 1}^{n} \delta_{+, i} \bmat{R_{i, 2} \\ Z_{i}} \left(g_{+}(D_i - d^{*}) - g_{+}(0) - g_{+}^{(1)}(0)(D_i - d^{*}) - \frac{1}{2} g_{+}^{(2)}(0) (D_i - d^{*})^{2} \right) \right)\\
            &\qquad = \mathbf{e}_{2}^{\top} \hat{\Psi}_{+, 2}^{-1}\left(\frac{1}{n} \sum_{i = 1}^{n} \1{D_i \geq d^{*}}\K{\frac{D_i - d^{*}}{b_n}} \bmat{R_{i, 2} \\ Z_{i}} \frac{1}{6} g_{+}^{(3)}(\tilde{D}_i - d^{*}) \left( \frac{D_i - d^{*}}{b_n} \right)^{3} \right).
        \end{align*}
        By Lemma~\ref{lem:psi2_conv}, $\hat{\Psi}_{+, 2} \inprob \Psi_{+, 2}$, and writing $\hat{\Psi}_{+, 2}^{-1} - \Psi_{+, 2}^{-1} = \hat{\Psi}_{+, 2}^{-1}(\Psi_{+, 2} - \hat{\Psi}_{+, 2})\Psi_{+, 2}^{-1}$ gives $\hat{\Psi}_{+, 2}^{-1} \inprob \Psi_{+, 2}^{-1}$. Hence it suffices to show that
        \begin{align*}
            \frac{1}{n} \sum_{i = 1}^{n} \1{D_i \geq d^{*}}\K{\frac{D_i - d^{*}}{b_n}} \bmat{R_{i, 2} \\ Z_{i}} \frac{1}{6} g_{+}^{(3)}(\tilde{D}_i - d^{*}) \left( \frac{D_i - d^{*}}{b_n} \right)^{3} \inprob 0.
        \end{align*}
        Using a Chebyshev argument as in Lemma~\ref{lem:IV_bias_converge}(ii), the average over the bandwidth $b_n$ equals its expectation up to $\op{1}$,
        \begin{align*}
            &\frac{1}{nb_n} \sum_{i = 1}^{n} \1{D_i \geq d^{*}}\K{\frac{D_i - d^{*}}{b_n}} \bmat{R_{i, 2} \\ Z_{i}} \frac{1}{6} g_{+}^{(3)}(\tilde{D}_i - d^{*}) \left( \frac{D_i - d^{*}}{b_n} \right)^{3}\\
            &\qquad = \frac{1}{b_{n}} \E{\1{D_i \geq d^{*}}\K{\frac{D_i - d^{*}}{b_n}} \bmat{R_{i, 2} \\ Z_{i}} \frac{1}{6} g_{+}^{(3)}(\tilde{D}_i - d^{*}) \left( \frac{D_i - d^{*}}{b_n} \right)^{3}} + \op{1}.
        \end{align*}
        Computing the expectation, for some $\tilde{u} \in [0, u]$ and substituting $u = \tfrac{d - d^{*}}{b_n}$,
        \begin{align*}
            &\lim_{n \to \infty} \frac{1}{b_{n}} \E{\1{D_i \geq d^{*}}\K{\frac{D_i - d^{*}}{b_n}} \bmat{R_{i, 2} \\ Z_{i}} \frac{1}{6} g_{+}^{(3)}(\tilde{D}_i - d^{*}) \left( \frac{D_i - d^{*}}{b_n} \right)^{3}}\\
            &\qquad = \lim_{n \to \infty} \frac{1}{6} \int_{0}^{\infty} \K{u} \bmat{1 \\ u \\ u^{2} \\ \mu_{+, z}(ub_n)} g_{+}^{(3)}(\tilde{u}b_n) u^{3} f_{d}(ub_n + d^{*}) \dv u\\
            &\qquad = f_{d, +}(d^{*}) \frac{g_{+}^{(3)}(0)}{6} \left(\gamma_{3, 1}, \gamma_{4, 1}, \gamma_{5, 1}, \mu_{+, z}(0)^{\top}\gamma_{3, 1} \right)^{\top} < \infty,
        \end{align*}
        where the limit/integral swap uses Assumptions~\ref{assum:reg} and~\ref{assum:kernel}. Hence the average converges in probability to a finite constant $\tilde{C}$, and since $b_{n} \to 0$, Slutsky's theorem gives
        \begin{align*}
            \frac{1}{n} \sum_{i = 1}^{n} \1{D_i \geq d^{*}}\K{\frac{D_i - d^{*}}{b_n}} \bmat{R_{i, 2} \\ Z_{i}} \frac{1}{6} g_{+}^{(3)}(\tilde{D}_i - d^{*}) \left( \frac{D_i - d^{*}}{b_n} \right)^{3} = b_n \cdot \tilde{C} + \op{1} \inprob 0.
        \end{align*}

        \item We show the third term equals $\frac{1}{2}g_{+}^{(2)}(0)$. Collecting the polynomial into $R_{i, 2}$ and $W_{i}$ and using $\mathbf{B}_2 = \diag(1, b_n, b_n^{2})$,
        \begin{align*}
            &\frac{1}{b_{n}^{2}} \mathbf{e}_{2}^{\top} \hat{\Psi}_{+, 2}^{-1}\left(\frac{1}{n} \sum_{i = 1}^{n} \delta_{+, i} \bmat{R_{i, 2} \\ Z_{i}} \left(g_{+}(0) + g_{+}^{(1)}(0)(D_i - d^{*}) + \frac{1}{2} g_{+}^{(2)}(0) (D_i - d^{*})^{2} + W_{i}^{\top} \gamma_{+} \right) \right)\\
            &\qquad = \frac{1}{b_{n}^{2}} \mathbf{e}_{2}^{\top} \hat{\Psi}_{+, 2}^{-1}\left(\frac{1}{n} \sum_{i = 1}^{n} \delta_{+, i} \bmat{R_{i, 2} \\ Z_{i}} \bmat{R_{i, 2}^{\top} & W_{i}^{\top}} \right) \pmat{g_{+}(0) \\ b_{n} g_{+}^{(1)}(0) \\ b_{n}^{2} \frac{1}{2}g_{+}^{(2)}(0) \\ \gamma_{+} }\\
            &\qquad = \frac{1}{b_{n}^{2}} \mathbf{e}_{2}^{\top} \hat{\Psi}_{+, 2}^{-1} \hat{\Psi}_{+, 2} \pmat{g_{+}(0) \\ b_{n} g_{+}^{(1)}(0) \\ b_{n}^{2} \frac{1}{2}g_{+}^{(2)}(0) \\ \gamma_{+} }
            = \frac{1}{b_{n}^{2}} \mathbf{e}_{2}^{\top} \pmat{g_{+}(0) \\ b_{n} g_{+}^{(1)}(0) \\ b_{n}^{2} \frac{1}{2}g_{+}^{(2)}(0) \\ \gamma_{+} } = \frac{1}{2}g_{+}^{(2)}(0).
        \end{align*}

        \item Combining steps (i)--(iii),
        \begin{align*}
            \sqrt{nh_{n}}\hat{\eta}_{+, \mathrm{IV}} &= \sqrt{nh_{n}}h_{n}^{2}\hat{\Omega}_{+}\mathbf{e}_{2}^{\top} \hat{\Psi}_{+, 2}^{-1}\left(\frac{1}{n b_{n}^{2}} \sum_{i = 1}^{n} \delta_{+, i} \bmat{R_{i, 2} \\ Z_{i}} \left(Y_{i} - g_{+}(D_i - d^{*}) - W_i^{\top} \gamma_{+}\right) \right) \\
            &\quad + \sqrt{nh_{n}}h_{n}^{2} \hat{\Omega}_{+} \cdot O\left(b_{n}\right) + \sqrt{nh_{n}}h_{n}^{2} \hat{\Omega}_{+} \frac{1}{2} g_{+}^{(2)}(0).
        \end{align*}
        By Lemma~\ref{lem:omega_conv}, $\hat{\Omega}_{+} \inprob \Omega_{+}$, and since $h_{n} \asymp n^{-1/5}$, $\sqrt{nh_{n}} h_{n}^{2} \to C = c_{h}^{5/2}$. Hence by Slutsky's theorem the middle term is $\op{1}$ and the last term converges to $\frac{C}{2} \Omega_{+} g_{+}^{(2)}(0) = \eta_{+, \mathrm{IV}}$. Pre-multiplying by $\hat{\Psi}_{+, 1}^{-1} \inprob \Psi_{+, 1}^{-1}$ gives the statement.
    \end{enumerate}
\end{proof}

\begin{proof}[Proof of Lemma~\ref{lem:eta_w_decomp}]
    We proceed in steps, following the structure of Lemma~\ref{lem:eta_IV_decomp}. Recall that we defined,
    \begin{align*}
        \sqrt{nh_{n}}\hat{\eta}_{+, w_{j}} &= \sqrt{nh_{n}}h_{n}^{2} \hat{\Lambda}_{+} \frac{1}{b_{n}^{2}} \mathbf{e}_{2}^{\top}\hat{\kappa}_{+}^{w_{j}}.
    \end{align*}
    \begin{enumerate}[label = \roman*]
        \item Rewriting the estimator $\frac{1}{b_{n}^{2}}\mathbf{e}_{2}^{\top}\hat{\kappa}_{+}^{w_{j}}$,
        \small
        \begin{align*}
            &\frac{1}{b_{n}^{2}} \mathbf{e}_{2}^{\top}\hat{\kappa}_{+}^{w_{j}} \\
            &\quad= \mathbf{e}_{2}^{\top} \hat{\Gamma}_{+, 2}^{-1}\left(\frac{1}{nb_{n}^{2}} \sum_{i = 1}^{n} \delta_{+, i} R_{i, 2} W_{i, j} \right)\\
            &\quad= \mathbf{e}_{2}^{\top} \hat{\Gamma}_{+, 2}^{-1}\left(\frac{1}{n b_{n}^{2}} \sum_{i = 1}^{n} \delta_{+, i} R_{i, 2} \left(W_{i, j} - \mu_{+, w_{j}}(D_i - d^{*})\right) \right) \\
            &\qquad + \mathbf{e}_{2}^{\top} \hat{\Gamma}_{+, 2}^{-1}\left(\frac{1}{nb_{n}^{2}} \sum_{i = 1}^{n} \delta_{+, i} R_{i, 2} \left(\mu_{+, w_{j}}(D_i - d^{*}) - \mu_{+, w_{j}}(0) - \mu_{+, w_{j}}^{(1)}(0)(D_i - d^{*}) - \frac{1}{2} \mu_{+, w_{j}}^{(2)}(0) (D_i - d^{*})^{2} \right) \right)\\
            &\qquad + \mathbf{e}_{2}^{\top} \hat{\Gamma}_{+, 2}^{-1}\left(\frac{1}{nb_{n}^{2}} \sum_{i = 1}^{n} \delta_{+, i} R_{i, 2} \left(\mu_{+, w_{j}}(0) + \mu_{+, w_{j}}^{(1)}(0)(D_i - d^{*}) + \frac{1}{2} \mu_{+, w_{j}}^{(2)}(0) (D_i - d^{*})^{2} \right) \right).
        \end{align*}\normalsize
        The first term matches the term in the lemma statement. We treat the remaining two terms individually.

        \item We show the second term converges to zero in probability. Using the continuous differentiability of $\mu_{+, w_{j}}(\cdot)$, for some $\tilde{D}_i - d^{*} \in [0, D_i - d^{*}]$,
        \begin{align*}
            \mu_{+, w_{j}}(D_i - d^{*}) &= \mu_{+, w_{j}}(0) + \mu_{+, w_{j}}^{(1)}(0)(D_i - d^{*}) + \frac{1}{2}\mu_{+, w_{j}}^{(2)}(0)(D_i - d^{*})^{2} + \frac{1}{6}\mu_{+, w_{j}}^{(3)}(\tilde{D}_i - d^{*})(D_i - d^{*})^{3}.
        \end{align*}
        Substituting, the second term equals
        \begin{align*}
            \mathbf{e}_{2}^{\top} \hat{\Gamma}_{+, 2}^{-1}\left(\frac{1}{n} \sum_{i = 1}^{n} \1{D_i \geq d^{*}}\K{\frac{D_i - d^{*}}{b_n}} R_{i, 2} \frac{1}{6} \mu_{+, w_{j}}^{(3)}(\tilde{D}_i - d^{*}) \left( \frac{D_i - d^{*}}{b_n} \right)^{3} \right).
        \end{align*}
        By Lemma~\ref{lem:Gamma2_conv}, $\hat{\Gamma}_{+, 2}^{-1} \inprob \Gamma_{+, 2}^{-1}$, so it suffices to show the average converges to zero. By a Chebyshev argument as in Lemma~\ref{lem:IV_bias_converge}(ii) and $u$-substitution at the bandwidth $b_n$, for some $\tilde{u} \in [0, u]$,
        \begin{align*}
            &\frac{1}{b_n}\E{\1{D_i \geq d^{*}}\K{\frac{D_i - d^{*}}{b_n}} R_{i, 2} \frac{1}{6} \mu_{+, w_{j}}^{(3)}(\tilde{D}_i - d^{*}) \left( \frac{D_i - d^{*}}{b_n} \right)^{3}} \\
            &\qquad \to f_{d, +}(d^{*})\frac{\mu_{+, w_{j}}^{(3)}(0)}{6}\left(\gamma_{3, 1}, \gamma_{4, 1}, \gamma_{5, 1}\right)^{\top} < \infty,
        \end{align*}
        so the average converges to a finite constant $\tilde{C}$, and since $b_n \to 0$, Slutsky's theorem gives
        \begin{align*}
            \frac{1}{n} \sum_{i = 1}^{n} \1{D_i \geq d^{*}}\K{\frac{D_i - d^{*}}{b_n}} R_{i, 2} \frac{1}{6} \mu_{+, w_{j}}^{(3)}(\tilde{D}_i - d^{*}) \left( \frac{D_i - d^{*}}{b_n} \right)^{3} = b_n \cdot \tilde{C} + \op{1} \inprob 0.
        \end{align*}

        \item We show the third term equals $\frac{1}{2}\mu_{+, w_{j}}^{(2)}(0)$. Collecting the polynomial into $R_{i, 2}$,
        \begin{align*}
            &\frac{1}{b_{n}^{2}} \mathbf{e}_{2}^{\top} \hat{\Gamma}_{+, 2}^{-1}\left(\frac{1}{n} \sum_{i = 1}^{n} \delta_{+, i} R_{i, 2} \left(\mu_{+, w_{j}}(0) + \mu_{+, w_{j}}^{(1)}(0)(D_i - d^{*}) + \frac{1}{2} \mu_{+, w_{j}}^{(2)}(0) (D_i - d^{*})^{2} \right) \right)\\
            &\qquad = \frac{1}{b_{n}^{2}} \mathbf{e}_{2}^{\top} \hat{\Gamma}_{+, 2}^{-1}\left(\frac{1}{n} \sum_{i = 1}^{n} \delta_{+, i} R_{i, 2} R_{i, 2}^{\top} \right) \pmat{\mu_{+, w_{j}}(0) \\ b_{n} \mu_{+, w_{j}}^{(1)}(0) \\ b_{n}^{2} \frac{1}{2}\mu_{+, w_{j}}^{(2)}(0)}
            = \frac{1}{b_{n}^{2}} \mathbf{e}_{2}^{\top} \pmat{\mu_{+, w_{j}}(0) \\ b_{n} \mu_{+, w_{j}}^{(1)}(0) \\ b_{n}^{2} \frac{1}{2}\mu_{+, w_{j}}^{(2)}(0)} = \frac{1}{2}\mu_{+, w_{j}}^{(2)}(0).
        \end{align*}

        \item Combining steps (i)--(iii),
        \begin{align*}
            \sqrt{nh_{n}}\hat{\eta}_{+, w_{j}} &= \sqrt{nh_{n}}h_{n}^{2}\hat{\Lambda}_{+}\mathbf{e}_{2}^{\top} \hat{\Gamma}_{+, 2}^{-1}\left(\frac{1}{n b_{n}^{2}} \sum_{i = 1}^{n} \delta_{+, i} R_{i, 2} \left(W_{i, j} - \mu_{+, w_{j}}(D_i - d^{*})\right) \right)\\
            &\quad + \sqrt{nh_{n}}h_{n}^{2} \hat{\Lambda}_{+} \cdot O(b_n) + \sqrt{nh_{n}}h_{n}^{2} \hat{\Lambda}_{+} \frac{1}{2}\mu_{+, w_{j}}^{(2)}(0).
        \end{align*}
        By Lemma~\ref{lem:lambda_conv}, $\hat{\Lambda}_{+} \inprob \Lambda_{+}$, and $\sqrt{nh_{n}}h_{n}^{2} \to C$. Hence by Slutsky's theorem the middle term is $\op{1}$ and the last term converges to $\frac{C}{2}\Lambda_{+}\mu_{+, w_{j}}^{(2)}(0) = \eta_{+, w_{j}}$. Pre-multiplying by $\hat{\Gamma}_{+, 1}^{-1} \inprob \Gamma_{+, 1}^{-1}$ gives the statement.
    \end{enumerate}
\end{proof}

\begin{proof}[Proof of Corollary~\ref{cor:clt_decomp}]
    Starting from the definition of $\hat{\tau}^{\mathrm{rbc}}_{\mathrm{pdd}}$ in Equation~\eqref{eq:pdd_rbc_def_app} and the bias decomposition of $\hat{\tau}_{\mathrm{pdd}} - \tau_{\mathrm{pdd}}$,
    \begin{align*}
        \sqrt{nh_n}\left(\hat{\tau}_{\mathrm{pdd}}^{\mathrm{rbc}} - \tau \right) &= \hat{\mathbf{e}}_{+, w}^{\top} \sqrt{nh_{n}} \left\{\hat{\nu}_{+} - \nu_{+}\right\} - \hat{\mathbf{e}}_{+, w}^{\top} \sqrt{nh_{n}}\left\{\hat{\nu}_{-} - \nu_{-} \right\} + \sum_{j = 1}^{q}\left(\gamma_{+, j} - \gamma_{-, j}\right) \sqrt{nh_{n}}\left(\hat{\beta}_{+, 0}^{w_j} - \beta_{+, 0}^{w_j} \right) \\
        &\quad - \hat{\mathbf{e}}_{+, w}^{\top}\hat{\Psi}_{+, 1}^{-1}\sqrt{nh_{n}}\hat{\eta}_{+, \mathrm{IV}} +  \hat{\mathbf{e}}_{+, w}^{\top}\hat{\Psi}_{-, 1}^{-1}\sqrt{nh_{n}}\hat{\eta}_{-, \mathrm{IV}} -  \sum_{j = 1}^{q} \left(\hat{\gamma}_{+, j} - \hat{\gamma}_{-, j}\right)\mathbf{e}_{0}^{\top}\hat{\Gamma}_{+, 1}^{-1} \sqrt{nh_{n}}\hat{\eta}_{+, w_j}.
    \end{align*}
    We substitute the decompositions from Lemmas~\ref{lem:IV_bias_converge},~\ref{lem:rdd_bias_converge},~\ref{lem:eta_IV_decomp}, and~\ref{lem:eta_w_decomp}. For the $``+"$ IV terms, Lemmas~\ref{lem:IV_bias_converge} and~\ref{lem:eta_IV_decomp} give
    \begin{align*}
        &\hat{\mathbf{e}}_{+, w}^{\top}\sqrt{nh_{n}}\left\{\hat{\nu}_{+} - \nu_{+}\right\} - \hat{\mathbf{e}}_{+, w}^{\top}\hat{\Psi}_{+, 1}^{-1}\sqrt{nh_{n}}\hat{\eta}_{+, \mathrm{IV}}\\
        &\quad = \hat{\mathbf{e}}_{+, w}^{\top}\hat{\Psi}_{+, 1}^{-1}\left(\sqrt{\frac{h_{n}}{n}}\sum_{i = 1}^{n}\omega_{i, +}\bmat{R_{i, 1}\\Z_i}\varepsilon_{i, y}^{+}\right) + \hat{\mathbf{e}}_{+, w}^{\top}\Psi_{+, 1}^{-1}\eta_{+, \mathrm{IV}}\\
        &\qquad - \hat{\mathbf{e}}_{+, w}^{\top}\sqrt{nh_{n}}h_{n}^{2}\hat{\Psi}_{+, 1}^{-1}\hat{\Omega}_{+}\mathbf{e}_{2}^{\top}\hat{\Psi}_{+, 2}^{-1}\left(\frac{1}{nb_{n}^{2}}\sum_{i = 1}^{n}\delta_{+, i}\bmat{R_{i, 2}\\Z_i}\varepsilon_{i, y}^{+}\right) - \hat{\mathbf{e}}_{+, w}^{\top}\Psi_{+, 1}^{-1}\eta_{+, \mathrm{IV}} + \op{1},
    \end{align*}
    where $\varepsilon_{i, y}^{+} = Y_{i} - g_{+}(D_i - d^{*}) - W_{i}^{\top}\gamma_{+}$. The two $\hat{\mathbf{e}}_{+, w}^{\top}\Psi_{+, 1}^{-1}\eta_{+, \mathrm{IV}}$ terms cancel. The $``-"$ IV terms are handled identically, and the bias terms there cancel in the same way.

    For the $w$ terms, Lemmas~\ref{lem:rdd_bias_converge} and~\ref{lem:eta_w_decomp} give, for each $j$,
    \begin{align*}
        &\left(\gamma_{+, j} - \gamma_{-, j}\right)\sqrt{nh_{n}}\left(\hat{\beta}_{+, 0}^{w_j} - \beta_{+, 0}^{w_j}\right) - \left(\hat{\gamma}_{+, j} - \hat{\gamma}_{-, j}\right)\mathbf{e}_{0}^{\top}\hat{\Gamma}_{+, 1}^{-1}\sqrt{nh_{n}}\hat{\eta}_{+, w_j}\\
        &\quad = \left(\gamma_{+, j} - \gamma_{-, j}\right)\mathbf{e}_{0}^{\top}\hat{\Gamma}_{+, 1}^{-1}\left(\sqrt{\frac{h_n}{n}}\sum_{i = 1}^{n}\omega_{+, i}R_{i, 1}\varepsilon_{i, w_j}^{+}\right) + \left(\gamma_{+, j} - \gamma_{-, j}\right)\mathbf{e}_{0}^{\top}\Gamma_{+, 1}^{-1}\eta_{+, w_j}\\
        &\qquad - \left(\hat{\gamma}_{+, j} - \hat{\gamma}_{-, j}\right)\mathbf{e}_{0}^{\top}\hat{\Gamma}_{+, 1}^{-1}\sqrt{nh_{n}}h_{n}^{2}\hat{\Lambda}_{+}\mathbf{e}_{2}^{\top}\hat{\Gamma}_{+, 2}^{-1}\left(\frac{1}{nb_{n}^{2}}\sum_{i = 1}^{n}\delta_{+, i}R_{i, 2}\varepsilon_{i, w_j}^{+}\right) - \left(\hat{\gamma}_{+, j} - \hat{\gamma}_{-, j}\right)\mathbf{e}_{0}^{\top}\Gamma_{+, 1}^{-1}\eta_{+, w_j} + \op{1},
    \end{align*}
    where $\varepsilon_{i, w_j}^{+} = W_{i, j} - \mu_{+, w_j}(D_i - d^{*})$. Since $\hat{\gamma}_{\pm} \inprob \gamma_{\pm}$ by Proposition~\ref{prop:estimand} and $\mathbf{e}_{0}^{\top}\Gamma_{+, 1}^{-1}\eta_{+, w_j}$ is a constant, the difference of the two bias terms is $\left[\left(\gamma_{+, j} - \gamma_{-, j}\right) - \left(\hat{\gamma}_{+, j} - \hat{\gamma}_{-, j}\right)\right]\mathbf{e}_{0}^{\top}\Gamma_{+, 1}^{-1}\eta_{+, w_j} = \op{1}$.

    What remains are the main scores at $h_{n}$ and the bias-correction scores at $b_{n}$. For each bias-correction score we apply the identity
    \begin{align*}
        &\sqrt{nh_{n}}\,h_{n}^{2}\cdot\frac{1}{nb_{n}^{2}}\sum_{i = 1}^{n}\delta_{+, i}[\,\cdots] \\
        &\quad = \frac{h_{n}^{3}}{b_{n}^{3}}\cdot\frac{1}{\sqrt{nh_{n}}}\sum_{i = 1}^{n}\1{D_i \geq d^{*}}\K{\frac{D_i - d^{*}}{b_n}}[\,\cdots] = \xi^{3}\cdot\frac{1}{\sqrt{nh_{n}}}\sum_{i = 1}^{n}\1{D_i \geq d^{*}}\K{\frac{D_i - d^{*}}{b_n}}[\,\cdots],
    \end{align*}
    which holds because $\delta_{+, i} = \frac{1}{b_n}\1{D_i \geq d^{*}}\K{(D_i - d^{*})/b_n}$ and $\sqrt{nh_n}h_n^{2}\cdot\frac{\sqrt{nh_n}}{nb_n^{3}} = h_n^{3}/b_n^{3} = \xi^{3}$. Collecting the surviving terms,
    \begin{align*}
        \sqrt{nh_n}\left(\hat{\tau}_{\mathrm{pdd}}^{\mathrm{rbc}} - \tau \right) &= \hat{\mathbf{e}}_{+, w}^{\top}\hat{\Psi}_{+, 1}^{-1}\cdot\frac{1}{\sqrt{nh_n}}\sum_{i = 1}^{n}\1{D_i \geq d^{*}}\K{\tfrac{D_i - d^{*}}{h_n}}\bmat{R_{i, 1}\\Z_i}\varepsilon_{i, y}^{+}\\
        &\quad - \xi^{3}\hat{\mathbf{e}}_{+, w}^{\top}\hat{\Psi}_{+, 1}^{-1}\hat{\Omega}_{+}\mathbf{e}_{2}^{\top}\hat{\Psi}_{+, 2}^{-1}\cdot\frac{1}{\sqrt{nh_n}}\sum_{i = 1}^{n}\1{D_i \geq d^{*}}\K{\tfrac{D_i - d^{*}}{b_n}}\bmat{R_{i, 2}\\Z_i}\varepsilon_{i, y}^{+}\\
        &\quad + \sum_{j = 1}^{q}\left(\gamma_{+, j} - \gamma_{-, j}\right)\mathbf{e}_{0}^{\top}\hat{\Gamma}_{+, 1}^{-1}\cdot\frac{1}{\sqrt{nh_n}}\sum_{i = 1}^{n}\1{D_i \geq d^{*}}\K{\tfrac{D_i - d^{*}}{h_n}}R_{i, 1}\varepsilon_{i, w_j}^{+}\\
        &\quad - \sum_{j = 1}^{q}\xi^{3}\left(\hat{\gamma}_{+, j} - \hat{\gamma}_{-, j}\right)\mathbf{e}_{0}^{\top}\hat{\Gamma}_{+, 1}^{-1}\hat{\Lambda}_{+}\mathbf{e}_{2}^{\top}\hat{\Gamma}_{+, 2}^{-1}\cdot\frac{1}{\sqrt{nh_n}}\sum_{i = 1}^{n}\1{D_i \geq d^{*}}\K{\tfrac{D_i - d^{*}}{b_n}}R_{i, 2}\varepsilon_{i, w_j}^{+}\\
        &\quad - \hat{\mathbf{e}}_{+, w}^{\top}\hat{\Psi}_{-, 1}^{-1}\cdot\frac{1}{\sqrt{nh_n}}\sum_{i = 1}^{n}\1{D_i < d^{*}}\K{\tfrac{D_i - d^{*}}{h_n}}\bmat{R_{i, 1}\\Z_i}\varepsilon_{i, y}^{-}\\
        &\quad + \xi^{3}\hat{\mathbf{e}}_{+, w}^{\top}\hat{\Psi}_{-, 1}^{-1}\hat{\Omega}_{-}\mathbf{e}_{2}^{\top}\hat{\Psi}_{-, 2}^{-1}\cdot\frac{1}{\sqrt{nh_n}}\sum_{i = 1}^{n}\1{D_i < d^{*}}\K{\tfrac{D_i - d^{*}}{b_n}}\bmat{R_{i, 2}\\Z_i}\varepsilon_{i, y}^{-} + \op{1}.
    \end{align*}
    Each coefficient is the corresponding block of $\hat{\mathbf{s}}$ and each average the corresponding block of $\mathbf{X}_{n}$, so $\sqrt{nh_n}(\hat{\tau}_{\mathrm{pdd}}^{\mathrm{rbc}} - \tau) = \hat{\mathbf{s}}^{\top}\mathbf{X}_{n} + \op{1}$.
\end{proof}

\begin{proof}[Proof of proposition~\ref{prop:bias}]
    Using the decomposition in Equation~\eqref{eq:nu_decompose}, we can express,
    \begin{align*}
        \sqrt{nh_{n}}\left(\hat{\tau}_{\text{pdd}} - \tau_{\text{pdd}}\right) &=  \hat{\mathbf{e}}_{+, w}^{\top} \sqrt{nh_{n}} \left\{\hat{\nu}_{+} - \nu_{+}\right\} - \hat{\mathbf{e}}_{+, w}^{\top} \sqrt{nh_{n}} \left\{\hat{\nu}_{-} - \nu_{-} \right\} + (\gamma_{+} - \gamma_{-})^{\top} \sqrt{nh_{n}} \left(\hat{\beta}_{+, 0}^{w} - \beta_{+, 0}^{w} \right),
    \end{align*}
    where $\mathbf{e}_{+, w} = (1, 0, \lim_{\epsilon \to 0^{+}} \E{W_{i}^{\top} \mid D = d^{*} + \epsilon})^{\top} \in \mathbb{R}^{2 + q}$ and $\hat{\mathbf{e}}_{+, w} = \left(1, 0, \left(\hat{\beta}_{+, 0}^{w}\right)^{\top}\right)^{\top} \in \R^{2 + q}$.
    
    We focus on each of these three terms individually.
    \begin{enumerate}[label = \roman*]
        \item First, we show that,
        \begin{align*}
            \E{\hat{\Psi}_{+, 1}^{-1} \left( \frac{1}{n} \sum_{i = 1}^{n} \omega_{i, +} \bmat{R_{i, 1} \\ Z_{i}} (Y_{i} - g_{+}(D_{i} - d^{*}) - W_{i}^{\top} \gamma_{+})\right)} &\to 0.
        \end{align*}
        Using Lemma~\ref{lemma:factuals} (factuals), we have by the law of iterated expectations,
        \begin{align*}
            &\E{\left( \sqrt{\frac{h_{n}}{n}} \sum_{i = 1}^{n} \omega_{i, +} \bmat{R_{i, 1} \\ Z_{i}} (Y_{i} - g_{+}(D_{i} - d^{*}) - W_{i}^{\top} \gamma_{+})\right)}\\
            &\quad = \E{\left( \sqrt{\frac{h_{n}}{n}} \sum_{i = 1}^{n} \omega_{i, +} \bmat{R_{i, 1} \\ Z_{i}} \E{Y_{i} - g_{+}(D_{i} - d^{*}) - W_{i}^{\top} \gamma_{+} \mid D, Z}\right)}\\
            &\quad = 0.
        \end{align*}
        Therefore, we can conclude that,
        \begin{align*}
            &\E{ \hat{\Psi}_{+, 1}^{-1} \left( \sqrt{\frac{h_{n}}{n}} \sum_{i = 1}^{n} \omega_{i, +} \bmat{R_{i, 1} \\ Z_{i}} (Y_{i} - g_{+}(D_{i} - d^{*}) - W_{i}^{\top} \gamma_{+})\right)} \\
            &\quad = \E{ \Psi_{+, 1}^{-1} \left( \sqrt{\frac{h_{n}}{n}} \sum_{i = 1}^{n} \omega_{i, +} \bmat{R_{i, 1} \\ Z_{i}} (Y_{i} - g_{+}(D_{i} - d^{*}) - W_{i}^{\top} \gamma_{+})\right)} + \op{1}\\
            &\quad = \op{1},
        \end{align*}
        where formally, the second line uses that $\left( \sqrt{\frac{h_{n}}{n}} \sum_{i = 1}^{n} \omega_{i, +} \bmat{R_{i, 1} \\ Z_{i}} (Y_{i} - g_{+}(D_{i} - d^{*}) - W_{i}^{\top} \gamma_{+})\right) = O(1)$, which is shown in the proof of Theorem~\ref{thm:limit_dist}, and that $\hat{\Psi}_{+, 1}^{-1} \inprob \Psi_{+, 1}^{-1}$, as argued in the proof of Lemma~\ref{lem:IV_bias_converge}. Therefore, this term is asymptotically unbiased. 

        \item Next, we demonstrate that,  
        \begin{align*}
            \E{\mathbf{e}_{0}^{\top} \hat{\Gamma}_{+, 1}^{-1} \left(\sqrt{\frac{h_{n}}{n}} \sum_{i = 1}^{n} \omega_{+, i} R_{i, 1} \left\{W_{i, j} - \mu_{+, w_j}(D_{i} - d^{*})^{\top} \right\}\right)} \to 0.
        \end{align*}
    
        Using the law of iterated expectations, it follows that,
        \begin{align*}
            &\E{\left(\sqrt{\frac{h_{n}}{n}} \sum_{i = 1}^{n} \omega_{+, i} R_{i, 1} \left\{W_{i, j} - \mu_{+, w_j}(D_{i} - d^{*})^{\top} \right\}\right)}\\
            &\quad = \E{\left(\sqrt{\frac{h_{n}}{n}} \sum_{i = 1}^{n} \omega_{+, i} R_{i, 1} \E{W_{i, j} - \mu_{+, w_j}(D_{i} - d^{*})^{\top} \mid D_{i} = d}\right)}\\
            &\quad = 0,
        \end{align*}
        Therefore, we can conclude that,
        \begin{align*}
            &\E{ \mathbf{e}_{0}^{\top} \hat{\Gamma}_{+, 1}^{-1} \left(\sqrt{\frac{h_{n}}{n}} \sum_{i = 1}^{n} \omega_{+, i} R_{i, 1} \left\{W_{i}^{\top} - \mu_{+, w}(D_{i} - d^{*})^{\top} \right\}\right)} \\
            &\quad = \E{ \mathbf{e}_{0}^{\top} \Gamma_{+, 1}^{-1} \left(\sqrt{\frac{h_{n}}{n}} \sum_{i = 1}^{n} \omega_{+, i} R_{i, 1} \left\{W_{i}^{\top} - \mu_{+, w}(D_{i} - d^{*})^{\top} \right\}\right)} + \op{1}\\
            &\quad = \op{1},
        \end{align*}
         where formally, the second line uses that $\left(\sqrt{\frac{h_{n}}{n}} \sum_{i = 1}^{n} \omega_{+, i} R_{i, 1} \left\{W_{i}^{\top} - \mu_{+, w}(D_{i} - d^{*})^{\top} \right\}\right) = O(1)$, which is shown in the proof of Theorem~\ref{thm:limit_dist}, and that $\hat{\Psi}_{+, 1}^{-1} \inprob \Psi_{+, 1}^{-1}$, as argued in the proof of Lemma~\ref{lem:rdd_bias_converge}. Therefore, this term is asymptotically unbiased.  
        
        \item Here, we argue that $\E{\hat{\mathbf{e}}_{+, w}^{\top} \sqrt{nh_{n}} \left\{\hat{\nu}_{+} - \nu_{+}\right\}} \inprob \mathbf{e}_{+, w}^{\top} \Psi_{+, 1}^{-1} \eta_{+, \mathrm{IV}}$. Standard results from \citet[Section 4.1]{hahn2001identification} imply that $\hat{\mathbf{e}}_{+, w} \inprob \mathbf{e}_{+, w}$. Therefore, we can write that,
        \begin{align*}
            \E{\hat{\mathbf{e}}_{+, w}^{\top} \sqrt{nh_{n}} \left\{\hat{\nu}_{+} - \nu_{+}\right\}} &= \mathbf{e}_{+, w}^{\top} \E{\sqrt{nh_{n}} \left\{\hat{\nu}_{+} - \nu_{+}\right\}} + \op{1}\\
            &= \mathbf{e}_{+, w}^{\top} \Psi_{+, 1}^{-1}\eta_{+, \mathrm{IV}},
        \end{align*}
        where the last line uses Lemma~\ref{lem:IV_bias_converge} and our result from step (i).

        \item By symmetry, following the argument from the step above we can conclude that $\E{\hat{\mathbf{e}}_{+, w}^{\top} \sqrt{nh_{n}} \left\{\hat{\nu}_{-} - \nu_{-} \right\}} = \mathbf{e}_{w}^{\top} \Psi_{-, 1}^{-1}\eta_{-, \mathrm{IV}}$.

        \item Finally, we argue that $(\gamma_{+, j} - \gamma_{-, j}) \E{\sqrt{nh_{n}} \left(\hat{\beta}_{+, 0}^{w_j} - \beta_{+, 0}^{w_j} \right)} \inprob (\gamma_{+, j} - \gamma_{-, j}) \mathbf{e}_{0}^{\top} \Gamma_{+, 1}^{-1} \eta_{+, w_j}$. This follows directly from Lemma~\ref{lem:rdd_bias_converge} and our result from step (ii). 

        \item Writing,
        \begin{align*}
            \sqrt{nh_{n}}\left(\hat{\beta}_{+, 0}^{w} - \beta_{+, 0}^{w} \right) &= \pmat{\sqrt{nh_n}\left(\hat{\beta}_{+, 0}^{w_1} - \beta_{+, 0}^{w_1} \right) \\ \vdots \\ \sqrt{nh_n}\left(\hat{\beta}_{+, 0}^{w_q} - \beta_{+, 0}^{w_q} \right) },
        \end{align*}
        the result follows.
    \end{enumerate}
\end{proof}

\begin{proof}[Proof of Lemma~\ref{lem:pdd_variance}]
    We prove $\frac{1}{h_n}\E{X_{i, +}X_{i, +}^{\top}}\to\Sigma_{x, +}$; the result for $\Sigma_{x, -}$ follows by symmetry. Each entry is the limit of $\frac{1}{h_n}\E{\K{a}\K{b}\1{D_i\geq d^*}(\text{poly}_a)(\text{poly}_b)(\text{resid}_a)(\text{resid}_b)}$, where the kernels are localized at $h_n$ or $b_n$ according to the block. We compute three representative blocks; the remaining blocks follow by the same law-of-iterated-expectations and $u$-substitution argument, with the residual moments $\sigma^{2}_{+, y}(\cdot)$, $\rho_{+, y^{2}z}(\cdot)$, $\rho_{+, y^{2}zz^{\top}}(\cdot)$, $\sigma^{2}_{+, w_1}(\cdot)$, $\rho_{+, yw_1}(\cdot)$, $\rho_{+, yw_1z}(\cdot)$ in place of the generic residual term.

    For $\Sigma_{w_1, 11}^{+}$, both factors are localized at $h_n$. With $u = \tfrac{D_i - d^{*}}{h_n}$,
    \begin{align*}
        &\frac{1}{h_n}\E{\K{\tfrac{D_i - d^{*}}{h_n}}^{2}\1{D_i \geq d^{*}}R_{i, 1}R_{i, 1}^{\top}\varepsilon_{i, w_1}^{2}} \\
        &\quad = \int_{0}^{\infty}\K{u}^{2}\bmat{1 & u \\ u & u^{2}}\sigma^{2}_{+, w_1}(uh_n)f_{d}(uh_n + d^{*})\dv u \\
        &\quad \to f_{d, +}(d^{*})\sigma^{2}_{+, w_1}(0)\bmat{\gamma_{0, 2} & \gamma_{1, 2}\\ \gamma_{1, 2} & \gamma_{2, 2}}.
    \end{align*}

    For $\Sigma_{w_1, 22}^{+}$, both factors are localized at $b_n$. Substituting $u = \tfrac{D_i - d^{*}}{h_n}$ so that $\tfrac{D_i - d^{*}}{b_n} = \xi u$, and then $v = \xi u$,
    \begin{align*}
        \frac{1}{h_n}\E{\K{\tfrac{D_i - d^{*}}{b_n}}^{2}\1{D_i \geq d^{*}}R_{i, 2}R_{i, 2}^{\top}\varepsilon_{i, w_1}^{2}} &= \int_{0}^{\infty}\K{\xi u}^{2}(1, \xi u, (\xi u)^2)^{\otimes 2}\sigma^{2}_{+, w_1}(uh_n)f_{d}(uh_n + d^{*})\dv u\\
        &= \frac{1}{\xi}\int_{0}^{\infty}\K{v}^{2}(1, v, v^{2})^{\otimes 2}\sigma^{2}_{+, w_1}(\tfrac{v}{\xi}h_n)f_{d}(\tfrac{v}{\xi}h_n + d^{*})\dv v\\
        &\to \frac{1}{\xi}f_{d, +}(d^{*})\sigma^{2}_{+, w_1}(0)\bmat{\gamma_{0, 2} & \gamma_{1, 2} & \gamma_{2, 2} \\ \gamma_{1, 2} & \gamma_{2, 2} & \gamma_{3, 2} \\ \gamma_{2, 2} & \gamma_{3, 2} & \gamma_{4, 2}}.
    \end{align*}

    For $\Sigma_{w_1, 12}^{+}$, one factor is localized at $h_n$ and the other at $b_n$. With $u = \tfrac{D_i - d^{*}}{h_n}$ and $\tfrac{D_i - d^{*}}{b_n} = \xi u$, the $(a, b)$ entry, with $a \in \{0, 1\}$ from $R_{i, 1}$ and $b \in \{0, 1, 2\}$ from $R_{i, 2}$, is
    \begin{align*}
        \frac{1}{h_n}\E{\K{\tfrac{D_i - d^{*}}{h_n}}\K{\tfrac{D_i - d^{*}}{b_n}}\1{D_i \geq d^{*}}\left(\tfrac{D_i - d^{*}}{h_n}\right)^{a}\left(\tfrac{D_i - d^{*}}{b_n}\right)^{b}\varepsilon_{i, w_1}^{2}} &= \int_{0}^{\infty}\K{u}\K{\xi u}u^{a}(\xi u)^{b}\sigma^{2}_{+, w_1}(uh_n)f_{d}(uh_n + d^{*})\dv u\\
        &\to \xi^{b}\,\bar{\gamma}_{a + b, 2}(\xi)\,\sigma^{2}_{+, w_1}(0)f_{d, +}(d^{*}),
    \end{align*}
    which assembles into $\Sigma_{w_1, 12}^{+} = f_{d, +}(d^{*})\sigma^{2}_{+, w_1}(0)\left(\begin{smallmatrix}\bar{\gamma}_{0, 2} & \xi\bar{\gamma}_{1, 2} & \xi^{2}\bar{\gamma}_{2, 2}\\ \bar{\gamma}_{1, 2} & \xi\bar{\gamma}_{2, 2} & \xi^{2}\bar{\gamma}_{3, 2}\end{smallmatrix}\right)$. In every case the limit/integral swap uses the boundedness in Assumptions~\ref{assum:reg} and~\ref{assum:kernel}, and the existence of the limits uses the continuity in Assumption~\ref{assum:reg}. The IV and IV-$w$ blocks follow identically with the corresponding residual moments, giving the matrix in the lemma statement.
\end{proof}
\section{Details on partial linearity assumption}\label{sec:sim_description}

\subsection{Partial linearity}

In order for $\hat{\tau}_{\text{pdd}}^{\text{rbc}}$ to be consistent for the RDD treatment effect $\tau_{0}$, the confounding bridges $h_{+}(\cdot, \cdot)$ and $h_{-}(\cdot, \cdot)$ must be partially linear functions of the placebo outcome $W$. In other words, for some continuous, nonlinear functions $g_{+}(\cdot)$ and $g_{-}(\cdot)$, the confounding bridges must satisfy
$$
    h_{+}(d - d^{*}, w) = g_{+}(d - d^{*}) + w^{\top}\gamma_{+}, \qquad h_{-}(d - d^{*}, w) = g_{-}(d - d^{*}) + w^{\top}\gamma_{-}.
$$
We give sufficient conditions under which this restriction is satisfied.

For simplicity, suppose $W$ is a scalar; the result easily extends for multidimensional $W$.

\begin{lemma}[Sufficient conditions]\label{lemma:pl_sufficiency}
    Suppose that for some bounded, nonlinear function $f(\cdot)$ that is continuous in some neighborhood around $d^{*}$ (excluding $d^{*})$, the outcome satisfies
    \begin{align*}
        Y &= f(D) + \beta_{u} U + \eta_{y}, \qquad \E{\eta_{y} \mid D, Z, U} = 0. 
    \end{align*}
    Furthermore, suppose that the placebo outcome satisfies
    \begin{align*}
        W &= \alpha_{0} + \alpha_{u} U + \eta_{w}, \qquad \E{\eta_{w} \mid D, Z, U} = 0, ,\qquad \alpha_{u} \neq 0.
    \end{align*}
    Then, the confounding bridges $h_{+}(\cdot, \cdot)$ and $h_{-}(\cdot, \cdot)$ satisfy equation~\eqref{eq:partial}.
\end{lemma}

\begin{proof}[Proof of Lemma~\ref{lemma:pl_sufficiency}]\hfill

    By taking a conditional expectation and rearranging the equation for $W$,
    \begin{align*}
        U &= \frac{1}{\alpha_{u}} \left( \E{W \mid D, U} - \alpha_{0}\right).
    \end{align*}
    Then, taking a conditional expectation in the equation for $Y$,
    \begin{align*}
        \E{Y \mid D, U} &= f(D) + \beta_{u} U\\
        &= f(D) + \frac{\beta_{u}}{\alpha_{u}} \left(\E{W \mid D, U} - \alpha_{0}\right)\\
        &=f(D)  - \frac{\beta_{u}}{\alpha_{u}}\alpha_{0}+ \frac{\beta_{u}}{\alpha_{u}}\E{W \mid D, U}.
    \end{align*}
Consider the confounding bridge 
$$h_{0}(D - d^{*}, W)=g_{0}(D - d^{*}) + W \gamma_{0},\qquad g_{0}(D - d^{*})=f(D)  - \frac{\beta_{u}}{\alpha_{u}}\alpha_{0},\qquad  \gamma_0=\frac{\beta_{u}}{\alpha_{u}}.$$
    It is immediate that $\E{Y \mid D, U} = \E{h_{0}(D - d^{*}, W) \mid D, U}$, satisfying Assumption~\ref{assum:bridge}. 
\end{proof}



\end{document}